\renewcommand{\email}[2][]{%
  \ifx\emails\@empty\relax\else{\g@addto@macro\emails{,\space}}\fi%
  \@ifnotempty{#1}{\g@addto@macro\emails{\textrm{(#1)}\space}}%
  \g@addto@macro\emails{#2}%
}
\newtheorem{theorem}{Theorem}
\theoremstyle{plain}
\newtheorem{corollary}{Corollary}
\newtheorem{definition}{Definition}
\newtheorem{lemma}{Lemma}
\newtheorem{remark}{Remark}
\numberwithin{equation}{section}
\newcommand{\naturals}{\mathbb{N}}
\newcommand{\reals}{\mathbb{R}}
\newcommand{\complexes}{\mathbb{C}}
\renewcommand{\Re}[1]{\operatorname{Re}{#1}}
\newcommand{\spec}[1]{\operatorname{spec}{#1}}
\newcommand{\proj}[1]{\mathsf{P}_{#1}}
\newcommand{\tr}[1]{\operatorname{tr}{#1}}
\newcommand{\id}[1]{\mathrm{id}_{#1}}
\newcommand{\comm}[2]{[#1,#2]}
\newcommand{\acomm}[2]{\{#1,#2\}}
\newcommand{\dual}{*}
\newcommand{\hadj}{*}
\newcommand{\hsiprod}[2]{\langle #1, #2 \rangle_{\mathrm{HS}}}
\renewcommand{\vec}[1]{\mathbf{#1}}
\newcommand{\supnorm}[1]{\left\| #1 \right\|_{\infty}}
\newcommand{\matr}[1]{\mathbb{M}_{#1}}
\newcommand{\matrd}{\matr{d}}
\newcommand{\choi}[1]{\mathcal{C}(#1)}
\newcommand{\cpe}[1]{\operatorname{\mathsf{CP}}(#1)}
\newcommand{\cocpe}[1]{\operatorname{\mathsf{coCP}}(#1)}
\newcommand{\ppte}[1]{\operatorname{\mathsf{PPT}}(#1)}
\newcommand{\ebe}[1]{\operatorname{\mathsf{EB}}(#1)}
\newcommand{\transpose}{\mathsf{T}}
\newcommand{\ptranspose}[1]{\transpose_{#1}}
\newcommand{\ptrans}{\ptranspose{2}}
\newcommand{\atime}[1]{\tau_{\mathrm{#1}}}
\begin{document}
\title[Eventually entanglement breaking...]{Eventually entanglement breaking divisible quantum dynamics}

\author{Krzysztof Szczygielski}
\address[K.~Szczygielski]
{Institute of Theoretical Physics and Astrophysics, Faculty of Mathematics, Physics and Informatics, Wita Stwosza 57, 80-308 Gda\'{n}sk, Poland and Institute of Physics, Faculty of Physics, Astronomy and Informatics, Nicolaus Copernicus University, Grudzi\c{a}dzka 5/7, 87–100 Toruń, Poland}
\email[K.~Szczygielski, corresponding author]{krzysztof.szczygielski@ug.edu.pl}

\author{Dariusz Chru\'sci\'nski}
\address[D.~Chru\'sci\'nski]
{Institute of Physics, Faculty of Physics, Astronomy and Informatics, Nicolaus Copernicus University, Grudzi\c{a}dzka 5/7, 87–100 Toruń, Poland}
\email[D.~Chru\'sci\'nski]{darch@fizyka.umk.pl}

\begin{abstract}

It is shown that a large class of quantum dynamical maps on complex matrix algebras governed by time-local Master Equations tend to become entanglement breaking in the course of time. Such situation seems to be generic for quantum evolution and in particular, completely positive dynamical semigroups with a unique faithful stationary state enjoy this property. Inspired by this observation, we propose a new concept of eventually entanglement breaking divisible (eEB-divisible) dynamics. A dynamical map is eEB-divisible if any propagator becomes entanglement breaking in finite time. It turns out that eEB-divisibility is quite general and holds for a large class of quantum evolutions.

\end{abstract}

\maketitle

\section{Introduction}

In this article we explore certain asymptotic aspects of quantum evolution families, or quantum dynamical maps, on algebra $\matrd$ of complex square matrices of size $d\geqslant 2$. We make a following observation: \emph{a large class of evolution families, governed by a time-local Master Equation, exhibits a tendency of becoming entanglement breaking maps, either in finite time or asymptotically.} Such phenomenon appears to emerge in quite a natural, generic manner, both in Markovian and (weakly) non-Markovian scenarios. We show it rigorously in few simplified cases, which include dynamics governed by commuting time-dependent generators, as well as quantum dynamical semigroups. In particular, a general result for the latter is that the semigroup becomes entanglement breaking in finite time if its generator admits a one-dimensional kernel spanned by a strictly \emph{positive definite} stationary state (i.e.~of full rank). Such sufficient condition may be then generalized beyond semigroup case. Furthermore, the tendency of becoming entanglement breaking is also observed for \emph{propagators} of evolution families in a number of cases. This in turn justifies a new notion of divisibility which we propose, the \emph{eEB-divisibility}, where a defining property is such that the propagator itself becomes entanglement breaking in finite time. The eEB-divisibility property constitutes for another sufficient, yet not necessary, condition for a family to become entanglement breaking.

The article is structured as follows. We start with a short introduction to theory of positive maps in Section \ref{sec:PositiveMaps}. In Section \ref{sec:Asymptotics} we define the notion of \emph{eventually entanglement breaking} families and give its simple characterization in terms of spectral properties of generators. The succeeding Section \ref{sec:EBdivisibility} is then devoted to eEB-divisibility and its connection to asymptotic behavior of evolution families. This is then further explored in CP-divisible case in Section \ref{sec:CPdivisible}. Next, in Section \ref{sec:PPT2} we make a note on a connection to the famous PPT\textsuperscript{2}-conjecture, here in asymptotic sense. Finally, in Sections \ref{sec:CasesQubit} and \ref{sec:CasesBeyondQubit} we focus on some most distinguished classes of quantum dynamical maps, including CP-divisible evolution and quantum dynamical semigroups.

\section{Positive maps on matrix algebras}
\label{sec:PositiveMaps}

We start with a brief recollection of basic facts about positive maps between matrix algebras. For sake of clarity and brevity we will highlight only necessary concepts and we refer the Reader to vast literature (see e.g.~\cite{Stoermer1963,Paulsen2003}) for comprehensive study of the subject. Throughout the article, $\matr{n}$ will be a C*-algebra of $n\times n$ matrices over $\complexes$ endowed with spectral matrix norm and Hermitian conjugation as involution; $\matr{n}^+$ will denote a closed convex cone of positive semi-definite matrices (we will sometimes write $a\geqslant 0$ or $a > 0$ for $a$ positive semi-definite or positive definite, resp.).

Recall that a linear map $\phi : \matr{n}\to\matr{m}$ is called \emph{positive} if $\phi(\matr{n}^+)\subset\matr{m}^+$. Further, $\phi$ is called \emph{completely positive} (CP) if $\id{}\otimes\phi$ is a positive map on algebra $\matr{n}(\matr{n})\simeq \matr{n}\otimes\matr{n}$. Likewise, $\phi$ is called \emph{completely copositive} (coCP) if $\theta \circ \phi$, with $\theta$ being a \emph{transposition}, is CP. If $\phi$ is both CP and coCP it is called a \emph{PPT map}. Sets of all CP, coCP and PPT maps are pointed convex cones in $B(\matr{n},\matr{m})$, closed with respect to supremum norm topology. When $n=m=d$, which is the case in present framework, we will denote them $\cpe{\matrd}$, $\cocpe{\matrd}$ and $\ppte{\matrd}$, respectively. By Choi's theorem \cite{Choi1975}, $\phi$ is CP if and only if a matrix
\begin{equation}
    \choi{\phi} = \sum_{i,j=1}^{n} E_{ij} \otimes \phi(E_{ij}) = [\phi(E_{ij})],
\end{equation}
where $E_{ij}$ are matrix units spanning $\matr{n}$, is positive semi-definite in $\matr{nm}$. $\choi{\phi}$ is called the \emph{Choi's matrix of $\phi$} and a mapping $\phi\mapsto\choi{\phi}$ is a bijection from $B(\matr{n},\matr{m})$ to $\matr{n}\otimes\matr{m}\simeq\matr{nm}$, called the \emph{Choi-Jamiołkowski isomorphism} \cite{Jamiolkowski1972,Choi1975}. Then, $\phi$ is coCP iff $\choi{\theta\circ\phi} = \choi{\phi}^{\ptrans} \in \matr{nm}^+$, with $\ptrans$ denoting the partial transposition with respect to the second factor, $(a\otimes b)^{\ptrans} = a \otimes b^\transpose$. In consequence, $\phi$ is PPT iff both $\choi{\phi},\choi{\phi}^{\ptrans}\in\matr{nm}^+$, i.e.~when $\choi{\phi}$ is a so-called \emph{PPT matrix}.

We will grant a primary attention to an important subclass of PPT maps, so-called \emph{entanglement breaking maps}. A map $\phi\in\ppte{\matrd}$ is called \emph{entanglement breaking} (EB) iff $\choi{\phi}$ is a \emph{separable} matrix \cite{Rahaman2018,Horodecki2003}, i.e.~when there exist sets $\{A_i\},\{B_i\}\subset\matrd^+$ such that
\begin{equation}
    \choi{\phi} = \sum_i A_i \otimes B_i .
\end{equation}
Equivalently, $\phi$ is entanglement breaking iff $(\id{}\otimes\phi)(X)$ is always separable, even for entangled $X \in \matrd\otimes\matrd$; we will denote a cone of all EB maps by $\ebe{\matrd}$. An important property, which will be employed by us in some proofs, is that both sets $\ebe{\matrd}$ and $\ppte{\matrd}$ are examples of \emph{mapping cones} \cite{Devendra2022,Girard2021}, i.e.~closed convex cones, invariant with respect to compositions with CP maps, from left and from right: for any map $\phi$ which is EB (PPT) and any two CP maps $\psi_1$, $\psi_2$ it holds that $\psi_1 \circ \phi \circ \psi_2$ is also EB (PPT).

\section{Asymptotic maps of evolution families}
\label{sec:Asymptotics}

The concept of positive map became a crucial ingredient in modeling evolution in mathematical theory of quantum mechanics. Recall that state of a system at any time $t\geqslant 0$ is expressed as a \emph{density operator} $\rho_t$, i.e.~a time-dependent, positive semi-definite trace class operator of trace (norm) $1$, acting on some (possibly infinite dimensional) Hilbert space. Given $\rho_0$, the state at later times is then $\rho_t = \Lambda_t (\rho_0)$, where we require $\Lambda_t$ to be a \emph{positive map} and also \emph{trace preserving} (TP), at least on Banach algebra of trace class operators, so that $\tr{\Lambda_t}(\rho) = \tr{\rho}$. These two properties (positivity and trace preservation) guarantee that $\rho_t$ will not lose its statistical interpretation as a mixed state, i.e.~it remains a density operator. Naturally, a family $(\Lambda_t)_{t\in\reals_+}$ of all such maps then encodes a quantum-mechanical evolution and as such is commonly called the \emph{quantum dynamical map} or \emph{quantum evolution family} \cite{RiHu2012,Chruscinski2022}.

Throughout this paper, we will be interested in global asymptotic behavior of such families acting on $\matrd$ and, in particular, whether they either become EB or remain arbitrarily close to set of EB maps. Recall that for a metric space $(X,d)$, the \emph{distance of element $x\in X$ to subset $A\subset X$} is defined as
\begin{equation}
    d(x,A) = \inf_{a\in A}{d(x,a)}.
\end{equation}
To mirror the aforementioned asymptotic behavior, we introduce the following:
\begin{definition}
    A dynamical map $(\phi_t)_{t \in\reals_+} $ on $\matrd$ will be called:
    \begin{enumerate}
        \item \emph{asymptotically entanglement breaking} (asymptotically EB) if $\phi_t$ approaches the cone $\ebe{\matrd}$ asymptotically, i.e. $\lim_{t\to\infty}{d(\phi_t, \ebe{\matrd})} = 0$ with metric given by supremum norm, $d(\phi,\psi) = \supnorm{\phi-\psi}$;
        \item \emph{eventually entanglement breaking} (eventually EB) if it actually reaches $\ebe{\matrd}$ in finite time, i.e.~when there exists $t_0 > 0$ s.t.~$\phi_t \in \ebe{\matrd}$ for all $t\geqslant t_0$;
        \item \emph{asymptotically PPT} if $\phi_t$ approaches the cone $\ppte{\matrd}$ asymptotically;
        \item \emph{eventually PPT} if it actually reaches $\ppte{\matrd}$ in finite time.
					
    \end{enumerate}    
\end{definition}

It may happen -- as is the case in our analysis -- that evolution families governed by time-dependent generators may tend not merely to a fixed positive map, but rather to some map-valued function. Let $f,g : \reals\to (X,d)$ for again $(X,d)$ a metric space. We will say that $f$ \emph{asymptotically tends to} $g$, $f\xrightarrow{a}g$, if $\lim_{x\to\infty}d(f(x),g(x)) = 0$.

In what follows we assume that maps we are investigating are diagonalizable (a set of diagonalizable maps is dense in a set of all linear maps). A linear map $\phi \in B(\matrd)$ is said to be \emph{diagonalizable} if
\begin{equation}\label{eq:DiagMap}
    \phi = \sum_{i=1}^{d^2} \lambda_i \proj{i} ,
\end{equation}
where $\{\lambda_i\}_{i=1}^{d^2}$ are (in general complex) eigenvalues and $\proj{i}$ are rank one projection operators onto eigenvectors of $\phi$, satisfying $\proj{i}\proj{j} = \delta_{ij} \proj{i}$ (we count multiplicities). If $\phi$ is Hermiticity preserving, then its spectrum is symmetric w.r.t. real line, that is, eigenvalues are either real or come in pairs $(\lambda,\,\overline{\lambda})$. Denote by $\phi^\dual$ the adjoint of $\phi$ with respect to standard Hilbert-Schmidt inner product $\hsiprod{a}{b} = \tr{a^\hadj b}$, so that it satisfies $\hsiprod{a}{\phi(b)} = \hsiprod{\phi^\dual(a)}{b}$. Diagonalizabity of $\phi$ infers existence of an biorthogonal system $\{(X_i)_{i=1}^{d^2},(Y_i)_{i=1}^{d^2}\}$ in $\matrd$ consisting of eigenvectors (unnormalized in general) of $\phi$ and $\phi^\dual$ such that $\hsiprod{X_i}{Y_j} = \delta_{ij}$ and
\begin{equation}
    \phi(X_i) = \lambda_i X_i , \quad \phi^\dual(Y_i) = \overline{\lambda_i} Y_i .
\end{equation}
Then, one has $\proj{i} = \hsiprod{Y_i}{\cdot}X_i$.

Recall that $\phi$ is trace preserving if and only if $\phi^\dual$ is unital, i.e.~$\phi^\dual (I) = I$. In this case one eigenvalue, traditionally $\lambda_1$, equals to 1 and $Y_1 = I$, $\tr{X_1} = 1$ in consequence. If moreover $\phi$ is positive, then due to the celebrated Perron-Frobenius theorem \cite{Stoermer1963,Paulsen2003} one has $|\lambda_i|\leqslant 1$ and $X_1 \geqslant 0$. Hence, for quantum evolution family $(\phi_t)_{t\in\reals_+}$, i.e.~a family of of trace preserving and positive maps, we have 
\begin{equation}
    \lambda_1 (t) = 1, \quad \proj{1}(t) (\rho) = (\tr{\rho})\,\Omega(t), \quad 
\end{equation}
for some time-dependent matrix $\Omega(t) \in\matrd^+$ s.t.~$\tr{\Omega(t)} = 1$. Let us therefore start with the following simple observation.

\begin{theorem}\label{thm:eEBSufficientCondition}
    Let $(\phi_t)_{t\in\reals_+}$ be a quantum evolution family on $\matrd$ which is continuous, and let $\spec{\phi_t} = \{\lambda_i (t)\}_{i=1}^{d^2}$ (counting multiplicities). Assume
    \begin{enumerate}
        \item $\lim_{t\to\infty}{\lambda_i (t)} = 0$ for $i \geqslant 2$,
        \item projection $\proj{1}(t) = (\tr{\cdot})\,\Omega(t)$
        asymptotically tends to a projection $\mathsf{Z}_t$ defined via
        \begin{equation}
            \mathsf{Z}_t (\rho) = (\tr{\rho})\,\omega(t)
        \end{equation}
        for some $\omega(t) \in\matrd^+$, $\tr{\omega}(t) = 1$ (in particular, $\omega(t)$ may be constant),
        \item spectrum of $\omega(t)$ is uniformly separated from $0$, i.e.~there exists $\epsilon > 0$ s.t.~$\min{\spec{\omega(t)}} \geqslant \epsilon$ for all $t\in\reals_+$.
    \end{enumerate}
    Then, $(\phi_t)_{t\in\reals_+}$ is eventually EB.
\end{theorem}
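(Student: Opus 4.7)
My plan is to combine two observations: that $\phi_t$ becomes asymptotically close, in supremum norm, to $\mathsf{Z}_t$, while $\mathsf{Z}_t$ lies \emph{uniformly} in the interior of the EB cone $\ebe{\matrd}$; together these force $\phi_t$ into $\ebe{\matrd}$ for $t$ large enough.

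For the first half, I use the spectral expansion of $\phi_t$ to write
\begin{equation*}
\phi_t - \mathsf{Z}_t = \bigl(\proj{1}(t) - \mathsf{Z}_t\bigr) + \sum_{i=2}^{d^2} \lambda_i(t)\,\proj{i}(t).
\end{equation*}
Hypothesis (2) forces the first bracket to vanish in sup norm, and hypothesis (1), together with uniform boundedness of the spectral projections in $t$, forces the sum to vanish. Hence $\phi_t \xrightarrow{a} \mathsf{Z}_t$ in $B(\matrd)$.

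For the second half, one has $\choi{\mathsf{Z}_t} = I_d \otimes \omega(t)$, a manifestly separable positive semi-definite matrix. Hypothesis (3) sharpens this into
\begin{equation*}
\choi{\mathsf{Z}_t} = \epsilon I_{d^2} + I_d \otimes (\omega(t)-\epsilon I_d),
\end{equation*}
with both summands positive and separable. I would invoke a Gurvits--Barnum type separable-ball theorem in $\matr{d^2}$, supplying $r_d > 0$ such that every Hermitian $H\in\matr{d^2}$ with $\hsnorm{H}\leqslant r_d$ makes $I_{d^2}+H$ separable; homogeneity then ensures $\epsilon I_{d^2}+H$ is separable whenever $\hsnorm{H}\leqslant \epsilon r_d$. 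Combining these ingredients I would assemble
\begin{equation*}
\choi{\phi_t} = \bigl(\epsilon I_{d^2} + \choi{\phi_t - \mathsf{Z}_t}\bigr) + I_d \otimes (\omega(t)-\epsilon I_d),
\end{equation*}
and note that $\supnorm{\phi_t - \mathsf{Z}_t}\to 0$ transfers, by finite-dimensional norm equivalence, to $\hsnorm{\choi{\phi_t - \mathsf{Z}_t}}\to 0$. Thus for all $t\geqslant t_0$ with some $t_0 > 0$, the first summand is separable, the second always is, and $\choi{\phi_t}$ is their sum, hence separable; equivalently $\phi_t\in\ebe{\matrd}$ for $t\geqslant t_0$.

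The delicate point I anticipate is the uniform bound on the projection norms $\supnorm{\proj{i}(t)}$ needed in the first step: pointwise vanishing of $\lambda_i(t)$ alone is not enough when the biorthogonal system degenerates at eigenvalue collisions, where spectral projections can blow up. One would need either a uniform non-degeneracy assumption on $\spec\phi_t$ or a density argument exploiting that diagonalizable maps are dense in $B(\matrd)$. A secondary, cleaner subtlety is selecting the separable-ball estimate so that its radius scales linearly in $\epsilon$ with a constant depending only on $d$; fortunately the tensor-plus-scalar decomposition of $\choi{\mathsf{Z}_t}$ isolates this scaling transparently.
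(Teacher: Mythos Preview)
Your overall strategy matches the paper's: first show $\phi_t \xrightarrow{a} \mathsf{Z}_t$ in sup norm via the spectral expansion, then argue that the family $\{\mathsf{Z}_t\}$ sits uniformly in the interior of $\ebe{\matrd}$, so $\phi_t$ is eventually trapped there. The difference is only in how the second half is executed. The paper invokes its own Lemma~\ref{lemma:PomegaProperties} (that $\proj{\omega}\in\operatorname{Int}\ebe{\matrd}$ iff $\omega>0$) together with the uniform lower bound on $\spec\omega(t)$ to extract a common radius $r_0$ of balls $\mathcal{B}(\mathsf{Z}_t,r_0)\subset\ebe{\matrd}$. You instead pass to Choi matrices, split $\choi{\mathsf{Z}_t}=\epsilon I_{d^2}+I_d\otimes(\omega(t)-\epsilon I_d)$, and use a Gurvits--Barnum separable-ball estimate around $\epsilon I_{d^2}$. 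Your route is more explicit and quantitative (it gives a concrete radius depending only on $d$ and $\epsilon$), while the paper's is shorter but relies on an appendix lemma proved by a similar perturbation argument; both are correct and essentially equivalent.

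On the ``delicate point'' you flag: the paper handles it by simply asserting $\supnorm{\proj{i}(t)}=1$ in the triangle-inequality step, without further comment. You are right to be cautious---for non-normal diagonalizable maps the oblique spectral projections $\proj{i}(t)=\hsiprod{Y_i(t)}{\cdot}X_i(t)$ need not have unit norm, and can in principle blow up near eigenvalue collisions. So your reservation is well placed; the paper's proof glosses over exactly the issue you identify, and neither argument supplies the uniform bound rigorously. In practice one either adds a standing uniform-diagonalizability hypothesis or restricts to the settings (commuting generators, semigroups) treated later in the paper where the projections are explicitly controlled.
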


\begin{proof}
    The proof is quite straightforward. We can represent each map $\phi_t$ in form of its spectral decomposition
    \begin{equation}
        \phi_t = \proj{1}(t) + \sum_{i\geqslant 2}\lambda_i (t) \proj{i} (t),
    \end{equation}
    where $\proj{1}(t) = (\tr{\cdot})\,\Omega(t)$ for some $\Omega(t)\geqslant 0$, $\tr{\Omega(t)} = 1$, $|\lambda_i (t)|\leqslant 1$. From the assumptions we clearly see the whole family $(\phi_t)_{t\in\reals_+}$ asymptotically tends to $\mathsf{Z}_t$ w.r.t.~supremum norm in Banach space $B(\matrd)$: indeed, notice
    \begin{equation}
        \supnorm{\phi_t - \mathsf{Z}_t} \leqslant \supnorm{\proj{1} (t) - \mathsf{Z}_t} + \sum_{i\geqslant 2}|\lambda_i (t)|
    \end{equation}
    after employing triangle inequality and $\supnorm{\proj{i}(t)} = 1$, and the RHS tends to $0$ as $t\to\infty$. Let there exist a family $\{\omega(t) : t\in\reals_+\}$ of positive definite matrices of trace $1$ in $\matrd$. By Lemma \ref{lemma:PomegaProperties} (in Appendix \ref{app:MathematicalSupplement}) each map $\mathsf{Z}_t = (\tr{\cdot})\,\omega(t)$ lays in strict interior of $\ebe{\matrd}$ and there exists a family $\{\mathcal{B}(\mathsf{Z}_t ,r(t)) : t\in\reals_+\}$ of open balls of radii $r(t)$, each one centered at $\mathsf{Z}_t$ and contained in $\operatorname{Int}{\ebe{\matrd}}$. Now, since spectra of $\omega(t)$ were uniformly separated from $0$, no sequence of matrices $(\omega(t_n))_{n\in\naturals}$ lays arbitrarily close to the boundary of $\matrd^+$ and family $\{\mathsf{Z}_t : t\in\reals_+\}$ is separated from a boundary of $\ebe{\matrd}$ in consequence. This means that there exists a \emph{minimal} radius $r_0 = \inf{\{r(t) : t\in\reals_+\}} > 0$ s.t.~a family of open balls $\{\mathcal{B}(\mathsf{Z}_t,r_0) : t\in\reals_+\}$ is fully contained in $\operatorname{Int}{\ebe{\matrd}}$; denote $\mathcal{U} = \bigcup_{t\geqslant 0}\mathcal{B}(\mathsf{Z}_t,r_0)$. If now $\supnorm{\phi_t - \mathsf{Z}_t} \to 0$ then there exists $t_0 \in \reals_+$ that for $t \geqslant t_0$ we have $\supnorm{\phi_t - \mathsf{Z}_t} <  r_0$, that is $\phi_t \in \mathcal{U}$ and the family $(\phi_t)_{t\in\reals_+}$ is eventually EB as claimed.
\end{proof}

Since we are interested in quantum dynamics, we grant a special attention to families $(\Lambda_t)_{t\in\reals_+}$ of maps subject to some form of time-local Master Equation
\begin{equation}
    \dot{\Lambda}_t = L_t \circ \Lambda_t
\end{equation}
for a time-dependent \emph{generator} $L_t$. It comes with no surprise that in certain simplified cases, when a closed form of $(\Lambda_t)_{t\in\reals_+}$ may be computed, asymptotic properties of $\Lambda_t$ may be entirely deduced from spectral properties of the generator:

\begin{theorem}\label{thm:TheTheorem}
Let $(L_t)_{t\in\reals_+}$ be a family of diagonalizable maps on $\matrd$, with $t \mapsto L_t$ at least piecewise continuous and commutative, $L_t \circ L_s = L_s \circ L_t$ for all $t,s \in\reals_+$. Assume that, for all $t\in\reals_+$,
\begin{enumerate}
    \item\label{TheTheoremAssum1} $0\in\spec{L_t}$ is of multiplicity $1$,
    \item\label{TheTheoremAssumUnbounded} for $\mu(t)\in\spec{L_t}\setminus \{0\}$ we have
    \begin{equation}
        \lim_{t\to\infty}{\int\limits_{0}^{t}\Re{\mu (s)} ds} = -\infty ,
    \end{equation}
    \item\label{TheTheoremAssum2} $\ker{L_t} = \complexes \omega$ for a constant, positive definite matrix $\omega$.
\end{enumerate}
Then, a family $(\Lambda_t)_{t\in\reals_+}$ generated by $L_t$ is eventually EB.
\end{theorem}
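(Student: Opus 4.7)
My plan is to exploit mutual commutativity of the $L_t$'s to express $\Lambda_t$ as an ordinary matrix exponential, and then translate the spectral hypotheses on $L_t$ directly into those on $\Lambda_t$ required by Theorem~\ref{thm:eEBSufficientCondition}. Setting $M_t = \int_0^t L_s \, ds$, commutativity gives $\comm{M_t}{L_t} = 0$, so the unique solution of $\dot{\Lambda}_t = L_t \circ \Lambda_t$ with $\Lambda_0 = \id{}$ is
\begin{equation*}
    \Lambda_t = \exp\Bigl(\int_0^t L_s \, ds\Bigr).
\end{equation*}

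The next step is to invoke that a commuting family of diagonalizable operators on the finite-dimensional space $B(\matrd)$ is simultaneously diagonalizable: the lattice of common invariant subspaces is finite, so refinement terminates after finitely many steps and produces a single biorthogonal system $\{X_i\},\{Y_i\}$, \emph{independent of $t$}, with $L_t(X_i) = \mu_i(t)\,X_i$ and $L_t^\dual(Y_i) = \overline{\mu_i(t)}\,Y_i$. The associated spectral projectors $\proj{i} = \hsiprod{Y_i}{\cdot}X_i$ are then also $t$-independent, and exponentiation yields
\begin{equation*}
    \Lambda_t = \sum_{i=1}^{d^2} \lambda_i (t)\, \proj{i}, \qquad \lambda_i (t) = \exp\Bigl(\int_0^t \mu_i (s) \, ds\Bigr).
\end{equation*}

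Finally I would verify the hypotheses of Theorem~\ref{thm:eEBSufficientCondition}. By (\ref{TheTheoremAssum1}) exactly one eigenvalue, say $\mu_1 \equiv 0$, vanishes identically; its right-eigenvector must be $\omega$ (normalized to $\tr \omega = 1$) by (\ref{TheTheoremAssum2}). Trace preservation of $(\Lambda_t)$ forces $I$ to lie in the kernel of every $L_t^\dual$, hence $Y_1 = I$ and $\proj{1} = (\tr{\cdot})\,\omega$, which is constant in $t$. For $i\geqslant 2$, $|\lambda_i(t)| = \exp\bigl(\int_0^t \Re{\mu_i(s)} \, ds\bigr) \to 0$ by (\ref{TheTheoremAssumUnbounded}), giving condition~(i) of Theorem~\ref{thm:eEBSufficientCondition}; condition~(ii) holds trivially with $\mathsf{Z}_t = \proj{1}$ constant; and condition~(iii) is satisfied because the fixed positive definite $\omega$ has a strictly positive minimum eigenvalue. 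The conclusion follows at once.

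The main technical delicacy is the bookkeeping for simultaneous diagonalization of a potentially uncountable commuting family, rather than just of pairs of operators; in finite dimension this is routine but should be written carefully. A secondary point is that the identification $\proj{1} = (\tr{\cdot})\,\omega$ implicitly uses the trace-preservation convention built into the notion of a quantum evolution family.
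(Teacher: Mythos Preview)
Your proposal is correct and follows essentially the same route as the paper: write $\Lambda_t = \exp\bigl(\int_0^t L_s\,ds\bigr)$ via commutativity, read off the spectral decomposition, and feed the result into Theorem~\ref{thm:eEBSufficientCondition}. The only notable difference is that you invoke simultaneous diagonalization of the commuting family to obtain \emph{time-independent} projectors $\proj{i}$, whereas the paper writes $\proj{i}(t)$ as nominally time-dependent and then asserts that $\Lambda_t$ ``shares the same eigenspaces''; your formulation is arguably cleaner on this point, and your flagging of the implicit trace-preservation assumption behind $\proj{1} = (\tr{\cdot})\,\omega$ is well taken.
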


\begin{proof}
Let $L_t$ be a map on $\matrd$ satisfying the assumptions and let a family $(\Lambda_t)_{t\in\reals_+}$ solve the Cauchy problem of a form
\begin{equation}
    \dot{\Lambda}_t = L_t \circ \Lambda_t , \quad \Lambda_0 = \id{}.
\end{equation}
The ray $\ker{L_t} = \complexes\omega$ is a time-invariant eigenspace of $\Lambda_t$ for eigenvalue 1. Indeed, note that by commutativity assumption we have
\begin{equation}
    \Lambda_t = \exp{\int\limits_{0}^{t}L_s ds} = \id{} + \sum_{n=1}^{\infty}\frac{1}{n!}\left(\int\limits_{0}^{t}L_s ds\right)^n ,
\end{equation}
converging uniformly; then, $L_t(\omega) = 0$ simply yields $\Lambda_t (\omega) = \omega$. Let $\spec{L_t} = \{\mu_i (t)\}_{i=1}^{d^2}$ (including multiplicities), such that $\mu_1 (t) = 0$. One can apply a spectral decomposition of $L_t$,
\begin{equation}
    L_t = \sum_{i=1}^{d^2} \mu_i (t) \proj{i} = 0 \cdot \proj{\omega} + \sum_{i > 1} \mu_i (t) \proj{i}(t),
\end{equation}
where $\proj{i}(t)$ denote rank-one projection operators onto distinct eigenspaces of $L_t$ (time-dependent in general), satisfying $\proj{i}(t)\proj{j}(t) = \delta_{ij} \proj{j}(t)$; here we set $\proj{1}(t) = \proj{\omega}$, the projection onto $\complexes\omega$ (we leave the $0$ eigenvalue present for clarity). Therefore, $\Lambda_t$ shares the same eigenspaces and its eigenvalues $\lambda_i (t)$ are
\begin{equation}\label{eq:SpectLambdat}
    \lambda_i(t) = \exp{\int\limits_{0}^{t}\mu_i (s) ds}
\end{equation}
with $\lambda_1 (t) = 1$. Now we can expand $\Lambda_t$ into its spectral decomposition
\begin{equation}
    \Lambda_t = \proj{\omega} + \sum_{i \geqslant 2} \lambda_i (t) \proj{i}(t),
\end{equation}
where by \eqref{eq:SpectLambdat} we have
\begin{equation}
    \lambda_j (t) = e^{R_j(t)}e^{i K_j (t)},
\end{equation}
with $R_j (t)$ and $K_j (t)$ respectively standing for real and imaginary parts of $\int_0^t \mu_j (s)\,ds$. By assumption, all functions $R_j (t)$ are unbounded from below and so $|\lambda_j (t)| \to 0$ as $t\to\infty$. By Lemma \ref{lemma:PomegaProperties}, $\proj{\omega}\subset \operatorname{Int}{\ebe{\matrd}}$ whenever $\omega$ is strictly positive definite. Hence, family $(\Lambda_t)_{t\in\reals_+}$ satisfies all assumptions of Theorem \ref{thm:eEBSufficientCondition} (with projection $\mathsf{Z}_t$ constant and equal to $\proj{\omega}$, i.e.~$\Omega(t) = \omega(t) = \omega$) and the claim follows.
\end{proof}

As a special case of the above, we formulate a following theorem applicable to semigroups of maps.

\begin{theorem}\label{thm:TheTheoremSemigroups}
    Let $(\phi_t)_{t\in\reals_+}$ be a semigroup, $\phi_t = e^{tL}$. Assume that
    \begin{enumerate}
        \item $0\in \ker{L}$ is of multiplicity $1$,
        \item for $\mu \in \spec{L}\setminus\{0\}$ we have $\Re{\mu} < 0$, i.e.~a non-zero part of spectrum of the generator lays on left complex half-plane,
        \item $\ker{L} = \complexes \omega$ for positive definite matrix $\omega$.
    \end{enumerate}
    Then, family $(\phi_t)_{t\in\reals_+}$ is eventually EB.
\end{theorem}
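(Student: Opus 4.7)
The plan is to recognize this statement as a direct specialization of Theorem \ref{thm:TheTheorem} to the time-independent generator $L_t \equiv L$, and to verify that each of the three hypotheses of the earlier theorem translates verbatim into the semigroup setting.

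First I would observe that a constant family $L_t = L$ trivially satisfies the commutativity requirement, $L_t \circ L_s = L^2 = L_s \circ L_t$, and that diagonalizability is a standing assumption throughout this section. Hypothesis (1) on the simplicity of $0 \in \spec{L}$ and hypothesis (3) on $\ker{L} = \complexes\omega$ with $\omega$ positive definite are literally identical in both statements, so no additional argument is required for them.

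Second, I would verify hypothesis (2) of Theorem \ref{thm:TheTheorem}. For any $\mu \in \spec{L}\setminus\{0\}$, the corresponding eigenvalue function is constant in time, $\mu(s) = \mu$, and therefore
\begin{equation}
    \int_0^t \Re{\mu(s)}\, ds = t\,\Re{\mu}.
\end{equation}
The strict inequality $\Re{\mu} < 0$ then gives $\lim_{t\to\infty} t\,\Re{\mu} = -\infty$, so the integral divergence condition is met for every non-zero eigenvalue of $L$.

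With all three hypotheses of Theorem \ref{thm:TheTheorem} fulfilled, the conclusion that the semigroup $(\phi_t)_{t\in\reals_+} = (e^{tL})_{t\in\reals_+}$ is eventually EB follows at once. I anticipate no genuine obstacle; the content is essentially corollary-level. The only point worth flagging is that in the semigroup case the eigenvalues $\mu_i$ of $L$ are genuinely constant and only finitely many, so the negative real parts $\Re{\mu_i}$ are automatically bounded away from $0$. This uniform spectral gap makes each $|\lambda_i(t)| = e^{t\,\Re{\mu_i}}$ decay exponentially and uniformly to zero, which is precisely the behavior that the more general time-dependent version of Theorem \ref{thm:TheTheorem} had to secure by hand through its integral hypothesis.
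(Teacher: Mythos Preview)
Your proposal is correct and matches the paper's own proof almost exactly: the paper's argument is the single line ``By condition $\Re{\mu} < 0$ we see that $\int_0^t \Re{\mu}\,ds = t\Re{\mu} \to -\infty$, so all assumptions of Theorem \ref{thm:TheTheorem} are met.'' Your write-up simply makes the verification of the remaining hypotheses (commutativity of a constant family, and the identical conditions on $\ker L$) explicit, which is fine.
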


\begin{proof}
    By condition $\Re{\mu} < 0$ wee see that $\int_0^t \Re{\mu}\,ds = t\Re{\mu} \to -\infty$, so all assumptions of Theorem \ref{thm:TheTheorem} are met.
\end{proof}

\section{Eventual EB-divisibility}
\label{sec:EBdivisibility}

In this section we define a notion of eventual EB-divisibility of families of maps and provide some insight into intrinsic relations between their asymptotic behavior and eEB-divisibility.

Let us recall that a dynamical map $(\phi_t)_{t\in\reals_+}$ is \emph{divisible} if for any $t \geqslant  s \geqslant 0$ one has $\phi_t = V_{t,s}\circ\phi_s$ for $V_{t,s}$ called a \emph{propagator}. It is clear that any invertible dynamical map is divisible and the corresponding propagator reads $ V_{t,s} = \phi_t\circ\phi_{s}^{-1}$. A map $(\phi_t)_{t\in\reals_+}$ is
\begin{itemize}
    \item P-divisible if $V_{t,s}$ is positive and trace preserving,
     \item CP-divisible if $V_{t,s}$ is completely positive and trace preserving.
\end{itemize}
P- and CP-divisibility are of special significance for quantum theory and are studied extensively up to this day in various contexts in physics literature. We draw the reader's attention also to recently introduced notion of \emph{D-divisibility} \cite{Szczygielski_2023}, where the propagators $V_{t,s}$ are not completely positive, but rather \emph{decomposable} maps and thus provide a new class of weakly non-Markovian, P-divisible dynamics. 

In general, properties of propagators of the dynamical maps are mirrored by those of the dynamics itself in a sense that, say, CP-divisibility (P-div., D-div., resp.) is a sufficient condition for complete positivity (positivity, decomposability, resp.) of the dynamics, but not a necessary one in general. Therefore it seems reasonable to expect a roughly the same asymptotic behavior both from the dynamical map and from its propagator at larger times. This infers that it is justified to \emph{define} a new type of divisibility, the \emph{eventual EB-divisibility}, by requiring the propagators to become entanglement breaking at large times, i.e.~to be eventually EB as introduced earlier.

The notion of eventual divisibility is by no means limited to entanglement breaking case, however:
 let $\mathscr{X}\subset B(\matrd)$ denote a subset of trace preserving linear maps on $\matrd$; we define the \emph{eventual $\mathscr{X}$-divisibility} of a family of maps in the following way:
 \begin{definition}\label{def:Xdivisibility}
    A family of trace preserving linear maps $(\phi_t)_{t\in\reals_+}$ on $\matrd$ will be called \emph{eventually $\mathscr{X}$-divisible} (e$\mathscr{X}$-divisible) iff it is divisible and every family of propagators $(V_{t,s})_{t\geqslant s}$, $s\in\reals_+$, eventually lays in $\mathscr{X}$, i.e.
    \begin{equation}
    \forall \, s \in\reals_+ \, \exists \, \Delta(s) > s \, \forall \, t\geqslant \Delta(s) : V_{t,s}\in \mathscr{X}.
    \end{equation}
\end{definition}

Such definition can provide a significant generalizations of known types of divisibility, since we no longer demand from propagators to be always ``of some type'', but only after a nonzero time. In particular, we can define a whole hierarchy of variations of eventual divisibility:

\begin{definition}\label{def:PPTdivisibility}

A divisible dynamical map $(\phi_t)_{t\in\reals_+}$  will be called 

\begin{itemize}
    \item \emph{eventually P-divisible} (eP-divisible) if $(V_{t,s})_{t\geqslant s}$ is eventually positive for every $s\geqslant 0$,

 \item \emph{eventually CP-divisible} (eCP-divisible) if $(V_{t,s})_{t\geqslant s}$ is eventually completely positive for every $s\geqslant 0$,

  \item \emph{eventually PPT-divisible} (ePPT-divisible) if $(V_{t,s})_{t\geqslant s}$ is eventually PPT for every $s\geqslant 0$,

  \item and finally \emph{eventually EB-divisible} (eEB-divisible) if $(V_{t,s})_{t\geqslant s}$ is eventually EB for every $s\geqslant 0$.
\end{itemize}
\end{definition}

As an immediate consequence, we see that all e$\mathscr{X}$-divisible families tend to belong to $\mathscr{X}$ itself -- clearly, e$\mathscr{X}$-divisibility is \emph{sufficient} for a family to be eventually in $\mathscr{X}$. It turns out that it is not \emph{necessary} though, as we show in one of the examples further below.

\begin{theorem}\label{thm:eX}
 If a family $(\phi_t)_{t\in\reals_+}$ is e$\mathscr{X}$-divisible then it is eventually in $\mathscr{X}$.
\end{theorem}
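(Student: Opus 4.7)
The plan is to exploit the initial condition $\phi_0 = \id{}$, which is built in to the notion of a dynamical map used throughout the paper, combined with the divisibility relation $\phi_t = V_{t,s}\circ\phi_s$ applied at the specific time $s=0$. This reduces the statement to a one-line observation.

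First I would specialize the divisibility identity to $s=0$: for every $t \geqslant 0$, one has $\phi_t = V_{t,0}\circ\phi_0 = V_{t,0}$, since $\phi_0 = \id{}$. Thus the family $(\phi_t)_{t\in\reals_+}$ coincides with the single-parameter family of propagators $(V_{t,0})_{t\geqslant 0}$ issuing from time $s=0$.

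Next I would invoke the hypothesis of e$\mathscr{X}$-divisibility. By Definition \ref{def:Xdivisibility} applied to $s=0$, there exists $\Delta(0) > 0$ such that $V_{t,0}\in\mathscr{X}$ for every $t \geqslant \Delta(0)$. Combining this with the identification $\phi_t = V_{t,0}$ obtained in the previous step, we conclude that $\phi_t \in \mathscr{X}$ for all $t \geqslant \Delta(0)$, which is exactly the statement that $(\phi_t)_{t\in\reals_+}$ is eventually in $\mathscr{X}$.

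There is no real obstacle here; the statement is essentially a tautology once one notices that the family itself is a particular family of propagators (namely those starting at $s=0$). If I wanted to be slightly more careful, I would just remark explicitly that the definition of dynamical map in the paper implicitly includes $\phi_0 = \id{}$, so no additional assumption is needed for the identification $\phi_t = V_{t,0}$ to hold.
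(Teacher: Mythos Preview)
Your proof is correct and follows essentially the same approach as the paper: specialize the divisibility to $s=0$, use $\phi_t = V_{t,0}$, and take $t_0 = \Delta(0)$. The only difference is that you spell out the identification $\phi_t = V_{t,0}$ via $\phi_0 = \id{}$ a bit more explicitly than the paper does.
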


\begin{proof}
    Eventual $\mathscr{X}$-divisibility states that $V_{t,0}\in \mathscr{X}$ whenever $t \geqslant \Delta(0)$. Since $\phi_t = V_{t,0}$, one simply takes $t_0 = \Delta(0) > 0$ and the claim follows.
\end{proof}

It is clear that in order for a propagator to become EB it must become PPT first, so for aforementioned types of divisibility the following chain of implications holds:
$$ \text{eEB-div.} \, \Rightarrow \, \text{ePPT-div.} \, \Rightarrow \, \text{eCP-div.}  \, \Rightarrow \, \text{eP-div.} $$

Naturally, in current framework we restrict our attention solely to eEB-divisible evolution families, leaving the remaining types of divisibility as an interesting direction of further research. Theorem \ref{thm:eEB} yields the following

\begin{corollary}\label{thm:eEB}
 If a family $(\phi_t)_{t\in\reals_+}$ is eEB-divisible then it is eventually EB.
\end{corollary}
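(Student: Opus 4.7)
The plan is to obtain this as an immediate specialization of Theorem \ref{thm:eX} to the choice $\mathscr{X} = \ebe{\matrd}$. Comparing Definitions \ref{def:Xdivisibility} and \ref{def:PPTdivisibility}, eEB-divisibility is literally the same as e$\mathscr{X}$-divisibility for $\mathscr{X} = \ebe{\matrd}$, since the EB cone is a subset of the trace preserving maps upon restriction to trace preserving inputs. Theorem \ref{thm:eX} then hands us directly the conclusion that $(\phi_t)_{t\in\reals_+}$ is eventually in $\ebe{\matrd}$, which is the definition of eventually EB.

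If one prefers a self-contained argument, I would simply reproduce the one-line reasoning of Theorem \ref{thm:eX} in this setting. Since $\phi_0 = \id{}$ and the family is divisible, for every $t\geqslant 0$ we have $\phi_t = V_{t,0}\circ \phi_0 = V_{t,0}$. Applying the defining condition of eEB-divisibility at the particular choice $s = 0$, there exists $\Delta(0) > 0$ such that $V_{t,0}\in \ebe{\matrd}$ for all $t\geqslant \Delta(0)$. Setting $t_0 := \Delta(0)$, this reads $\phi_t \in \ebe{\matrd}$ for all $t\geqslant t_0$, which is the definition of eventually EB.

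There is essentially no obstacle here: the corollary is a bookkeeping consequence of the fact that the trivial initial propagator $V_{t,0}$ coincides with $\phi_t$, and so any eventual property imposed uniformly on all propagators -- in particular the one at $s=0$ -- is inherited by the dynamical map itself. The only thing worth pointing out explicitly is that eEB-divisibility indeed requires the eventual property to hold for \emph{every} $s$, and we only need the weaker instance $s = 0$; this is why the statement is a genuine corollary rather than an equivalence.
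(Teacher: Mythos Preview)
Your proposal is correct and follows exactly the paper's approach: the corollary is obtained as the specialization of Theorem \ref{thm:eX} to $\mathscr{X} = \ebe{\matrd}$, and your self-contained argument reproduces verbatim the paper's one-line proof of that theorem (use $\phi_t = V_{t,0}$ and apply the defining condition at $s=0$).
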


\begin{remark}
    The eEB-divisibility cannot be a straightforward restatement of, say, P-divisibility, i.e.~one may not simply demand from $V_{t,s}$ to \emph{be} EB. Clearly, from continuity of function $t\mapsto V_{t,s}$, for small differences $t-s$ maps $V_{t,s}$ lay, informally speaking, in a small neighborhood of $V_{s,s} = \id{}$ which is trivially not PPT and therefore not entanglement breaking. This means there will always exist some nonempty interval $[s,t_0]$ such that $V_{t,s}$ would not be EB for any $t\in[s,t_0]$ and the EB condition may be only satisfied for $t-s$ large enough, hence the condition $\Delta(s) > 0$ as we stated.
\end{remark} 

A simple following result applies:

\begin{theorem}\label{thm:TheTheoremGeneral}
    Let $(\phi_t)_{t\in\reals_+}$ be a family of divisible, invertible and diagonalizable linear maps on $\matrd$. Let $\spec{\phi_t} = \{\lambda_i (t)\}_{i=1}^{d^2}$ (counting multiplicities). Assume that
    \begin{enumerate}
        \item $\lambda_1 (t) = 1$,
        \item $\lim_{t\to\infty}{\lambda_i (t)} = 0$ for $i \geqslant 2$,
        \item projection $\proj{1}$ is independent of $t$.
    \end{enumerate}
    Then, the following statements hold:
    \begin{enumerate}
        \item\label{thm:TheTheoremGeneral:1} If $\proj{1}\in\operatorname{Int}{\ebe{\matrd}}$ then $(\phi_t)_{t\in\reals_+}$ is eventually EB and eEB-divisible.
        \item\label{thm:TheTheoremGeneral:2} If $\proj{1}\notin\ebe{\matrd}$ then $(\phi_t)_{t\in\reals_+}$ is neither eventually EB nor eEB-divisible.
    \end{enumerate}
     
\end{theorem}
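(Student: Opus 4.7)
The plan is to reduce both parts of the theorem to a single convergence statement: under the stated hypotheses, both $\phi_t$ and every propagator $V_{t,s}$ (with $s$ fixed) converge to the time-independent map $\proj{1}$ in supremum norm as $t\to\infty$. Once this is established, part \eqref{thm:TheTheoremGeneral:1} follows because an open ball around $\proj{1}$ sits inside $\ebe{\matrd}$, so every $V_{t,s}$ eventually enters it; and part \eqref{thm:TheTheoremGeneral:2} follows because closedness of $\ebe{\matrd}$ places an open ball around $\proj{1}$ outside the cone, so $\phi_t$ eventually leaves $\ebe{\matrd}$ and thereby fails to be eventually EB.

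To prove $\phi_t\to\proj{1}$, I would use its spectral decomposition \eqref{eq:DiagMap} together with the same estimate as in the proof of Theorem \ref{thm:eEBSufficientCondition}: writing $\phi_t = \proj{1} + \sum_{i\geqslant 2}\lambda_i(t)\proj{i}(t)$ and invoking $\supnorm{\proj{i}(t)} = 1$, the triangle inequality with hypothesis (2) yields $\supnorm{\phi_t - \proj{1}}\to 0$. For the propagator, the crucial observation is that the density matrix $\omega$ defined through $\proj{1}(X) = (\tr{X})\omega$ is a common fixed point of the whole family: since $\proj{1}$ is the spectral projector of $\phi_t$ for eigenvalue $1$ and is independent of $t$, its range $\complexes\omega$ is the common $1$-eigenspace, whence $\phi_t(\omega) = \omega$ for every $t$. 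Invertibility and trace preservation of $\phi_s$ imply the same for $\phi_s^{-1}$, so $\phi_s^{-1}(\omega) = \omega$ and therefore
\[
\proj{1}\circ\phi_s^{-1} = (\tr{\phi_s^{-1}(\cdot)})\,\omega = (\tr{\cdot})\,\omega = \proj{1}.
\]
Consequently
\[
\supnorm{V_{t,s}-\proj{1}} = \supnorm{(\phi_t - \proj{1})\circ\phi_s^{-1}} \leqslant \supnorm{\phi_s^{-1}}\,\supnorm{\phi_t - \proj{1}} \xrightarrow{t\to\infty} 0,
\]
with $\supnorm{\phi_s^{-1}}$ finite since everything is finite dimensional.

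With convergence established, part \eqref{thm:TheTheoremGeneral:1} is immediate: $\proj{1}\in\operatorname{Int}{\ebe{\matrd}}$ provides a radius $r > 0$ with $\mathcal{B}(\proj{1},r)\subset\ebe{\matrd}$, and for each $s$ one picks $\Delta(s) > s$ so that $V_{t,s}\in\mathcal{B}(\proj{1},r)$ for all $t\geqslant\Delta(s)$; specialising to $s = 0$ also shows $(\phi_t)$ is eventually EB (equivalently, via Corollary \ref{thm:eEB}). For part \eqref{thm:TheTheoremGeneral:2}, closedness of $\ebe{\matrd}$ together with $\proj{1}\notin\ebe{\matrd}$ gives a radius $r > 0$ with $\mathcal{B}(\proj{1},r)\cap\ebe{\matrd} = \emptyset$; convergence then forces $\phi_t\notin\ebe{\matrd}$ for all sufficiently large $t$, which is the exact negation of eventual EB, and the identical argument applied to each $V_{t,s}$ rules out eEB-divisibility.

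The one delicate point is the norm estimate $\supnorm{\proj{i}(t)} = 1$: spectral projectors of a generic diagonalizable map need not be contractive, so this tacit normalization is inherited from the authors' earlier proofs and rests on a suitable choice of eigenbasis (or, more conservatively, on uniform boundedness $\sup_{t,i}\supnorm{\proj{i}(t)} < \infty$, which is enough for the convergence step). Beyond this bookkeeping, the argument is a clean package of the spectral decomposition, the invariance of the fixed state $\omega$, and the topology of the cone $\ebe{\matrd}$.
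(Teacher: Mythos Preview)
Your argument is correct and lands on the same backbone as the paper: show $V_{t,s}\to\proj{1}$ in norm and then use that $\operatorname{Int}\ebe{\matrd}$ (resp.\ the complement of the closed cone $\ebe{\matrd}$) is open. The difference lies in how the propagator convergence is obtained. The paper expands $V_{t,s}$ spectrally on \emph{both} sides, writing
\[
V_{t,s}=\proj{1}+\sum_{i,j\geqslant 2}\frac{\lambda_i(t)}{\lambda_j(s)}\,\proj{i}(t)\proj{j}(s),
\]
using only that $\proj{1}$ is a common spectral projector (hypothesis~(3)), so $\proj{1}\proj{j}(s)=\delta_{1j}\proj{1}$ and $\proj{i}(t)\proj{1}=\delta_{i1}\proj{1}$; no appeal to trace preservation is needed. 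You instead factor $V_{t,s}-\proj{1}=(\phi_t-\proj{1})\circ\phi_s^{-1}$ by arguing $\proj{1}\circ\phi_s^{-1}=\proj{1}$ via trace preservation of $\phi_s$. That identity is correct, but note that trace preservation is not among the explicit hypotheses of the theorem; it enters only implicitly through the definition of eEB-divisibility. A cleaner justification, closer to the paper's, is to observe that $\proj{1}$ is the eigenvalue-$1$ spectral projector of $\phi_s$ itself, whence $\proj{1}\phi_s=\phi_s\proj{1}=\proj{1}$ and thus $\proj{1}\phi_s^{-1}=\proj{1}$ directly. Your factorized bound $\supnorm{V_{t,s}-\proj{1}}\leqslant\supnorm{\phi_s^{-1}}\supnorm{\phi_t-\proj{1}}$ is tidier than the paper's double sum, since it isolates the $s$-dependence in a single finite constant. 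Your caveat about $\supnorm{\proj{i}(t)}=1$ is well taken and applies equally to the paper's proof; both arguments tacitly need $\sup_{t,i}\supnorm{\proj{i}(t)}<\infty$ for the convergence $\phi_t\to\proj{1}$ to go through.
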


\begin{proof}
    Family $(\phi_t)_{t\in\reals_+}$ is eventually EB directly from Theorem \ref{thm:eEBSufficientCondition}. By invertibility assumption, $V_{t,s} = \phi_t \circ \phi_{s}^{-1}$. Map $\phi_{s}^{-1}$, being a holomorphic function of $\phi_s$, has the same eigenspaces and therefore
    \begin{equation}
        V_{t,s} = \proj{1} + \sum_{i,j \geqslant 2} \zeta_{ij}(t,s) \proj{i}(t)\proj{j}(s)
    \end{equation}
    after simple check, where we defined $\zeta_{ij}(t,s) = \lambda_i (t) / \lambda_j (s)$ for brevity. Then $\lambda_i (t) \to 0$ implies $\zeta_{ij}(t,s) \to 0$ and thus $V_{\cdot,s}\xrightarrow{a}\proj{1}$ for all $s$, namely $\supnorm{V_{t,s}-\proj{1}}\to 0$ as $t\to\infty$. Then, if $\proj{1}\in\operatorname{Int}{\ebe{\matrd}}$ there exists an open neighborhood $\mathcal{U}_s$ of $\proj{1}$ contained inside $\ebe{\matrd}$ and some $t_0 > s$ large enough such that $V_{t,s} \in \mathcal{U}_s$ for all $t\geqslant t_0$, i.e.~$V_{t,s}$ becomes EB. Similarly, when $\proj{1}$ lays in the complement of $\ebe{\matrd}$, so does $\mathcal{U}$ (since the complement is open) and neither $\phi_t$ nor $V_{t,s}$ become EB.
\end{proof}

\begin{theorem}\label{thm:LtCommutativeEBdiv}
    Family $(\Lambda_t)_{t\in\reals_+}$ characterized in Theorem \ref{thm:TheTheorem} is eEB-divisible.
\end{theorem}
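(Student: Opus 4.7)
The plan is to reduce the claim to a direct application of Theorem \ref{thm:TheTheoremGeneral}(\ref{thm:TheTheoremGeneral:1}), by checking that the family $(\Lambda_t)_{t\in\reals_+}$ generated by $(L_t)$ satisfies all of its hypotheses.

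First I would invoke the standard fact that a commuting family of diagonalizable linear maps is \emph{simultaneously} diagonalizable, which produces a single common system of rank-one projections $\{\proj{i}\}_{i=1}^{d^2}$, independent of $t$, such that $L_t = \sum_i \mu_i(t)\proj{i}$. By assumption (\ref{TheTheoremAssum1}) and (\ref{TheTheoremAssum2}) of Theorem \ref{thm:TheTheorem}, one of these projections -- call it $\proj{1}$ -- is the projection onto $\complexes\omega$ and corresponds to $\mu_1(t)\equiv 0$. Since the $L_t$ commute, the solution of $\dot\Lambda_t = L_t\circ\Lambda_t$ is simply $\Lambda_t = \exp\bigl(\int_0^t L_s\,ds\bigr) = \sum_i \lambda_i(t)\proj{i}$ with $\lambda_i(t) = \exp\bigl(\int_0^t \mu_i(s)\,ds\bigr)$ and $\lambda_1(t)\equiv 1$. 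This already shows $\proj{1}$ is independent of $t$ (it is the same $P_\omega$ as in the proof of Theorem \ref{thm:TheTheorem}), and by Lemma \ref{lemma:PomegaProperties} it lies in $\operatorname{Int}\ebe{\matrd}$ because $\omega$ is positive definite.

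Next I would verify divisibility and invertibility: since each $\lambda_i(t)$ is the exponential of a (complex) integral, it is nowhere zero, so every $\Lambda_t$ is invertible and hence the family is automatically divisible with propagators
\begin{equation}
    V_{t,s} = \Lambda_t \circ \Lambda_s^{-1} = \proj{1} + \sum_{i\geqslant 2}\frac{\lambda_i(t)}{\lambda_i(s)}\proj{i}.
\end{equation}
Trace preservation of $V_{t,s}$ follows at once from trace preservation of $\Lambda_t$ and $\Lambda_s$ together with invertibility of $\Lambda_s$. Finally, I would note that for any fixed $s$ and $i\geqslant 2$,
\begin{equation}
    \left|\frac{\lambda_i(t)}{\lambda_i(s)}\right| = \exp\!\left(\int_s^t \Re{\mu_i(u)}\,du\right) = \exp\!\left(\int_0^t\Re{\mu_i}\,du - \int_0^s\Re{\mu_i}\,du\right)\xrightarrow[t\to\infty]{} 0
\end{equation}
by assumption (\ref{TheTheoremAssumUnbounded}) of Theorem \ref{thm:TheTheorem}, because the second integral is a finite constant.

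At this point every hypothesis of Theorem \ref{thm:TheTheoremGeneral}(\ref{thm:TheTheoremGeneral:1}) is met for $(\Lambda_t)_{t\in\reals_+}$ -- it is divisible, invertible, diagonalizable with eigenvalues $\lambda_1(t)\equiv 1$ and $\lambda_i(t)\to 0$ for $i\geqslant 2$, and with constant leading projection $\proj{1}\in\operatorname{Int}\ebe{\matrd}$ -- so eEB-divisibility follows immediately, which finishes the proof. I expect no serious obstacle here; the only point that deserves a careful sentence is the simultaneous diagonalizability of the commuting family $(L_t)$, since it is what guarantees a \emph{single} $t$-independent spectral frame and hence $t$-independence of $\proj{1}$, which is crucial for applying Theorem \ref{thm:TheTheoremGeneral}.
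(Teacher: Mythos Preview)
Your proposal is correct and follows essentially the same route as the paper: both arguments reduce the claim to Theorem~\ref{thm:TheTheoremGeneral}(\ref{thm:TheTheoremGeneral:1}) after noting, via Lemma~\ref{lemma:PomegaProperties}, that $\proj{1}=\proj{\omega}\in\operatorname{Int}\ebe{\matrd}$. The paper's version is terser because it tacitly relies on the spectral decomposition and the convergence $\lambda_i(t)\to 0$ already established in the proof of Theorem~\ref{thm:TheTheorem}, whereas you spell out invertibility, divisibility, and the decay of $\lambda_i(t)/\lambda_i(s)$ explicitly; this extra detail is harmless and, if anything, makes the verification of the hypotheses of Theorem~\ref{thm:TheTheoremGeneral} more transparent.
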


\begin{proof}
    As $\proj{1} = \proj{\omega}$ lays in the interior of $\ebe{\matrd}$ by Lemma \ref{lemma:PomegaProperties}, eEB-divisibility of this family is a consequence of Theorem \ref{thm:TheTheoremGeneral}.
\end{proof}

The following theorem shows that in a simple case of semigroups, notions of being eventually EB and eEB-divisible are totally equivalent. This situation is analogous to other forms of X-divisibility, where X denotes P, D, or CP.

\begin{theorem}\label{thm:SemigroupEB}
    A semigroup $(\phi_t)_{t\in\reals_+}$ of maps on $\matrd$ is eEB-divisible if and only if it is eventually EB.
\end{theorem}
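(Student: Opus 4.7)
The plan is to handle the two directions separately, with the forward direction being essentially a corollary of what has already been proven, and the backward direction exploiting the time-homogeneity of semigroups.

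For the forward direction, assume $(\phi_t)_{t\in\reals_+}$ is eEB-divisible. Then Corollary \ref{thm:eEB} (itself an application of Theorem \ref{thm:eX}) immediately gives that it is eventually EB. Nothing special about semigroups is required here.

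For the backward direction, the key observation is that for a semigroup $\phi_t = e^{tL}$, each element is invertible with $\phi_t^{-1} = e^{-tL}$, so the family is divisible with propagator
\begin{equation}
    V_{t,s} = \phi_t \circ \phi_s^{-1} = e^{(t-s)L} = \phi_{t-s}.
\end{equation}
Thus propagators are themselves semigroup elements depending only on the time difference $t-s$. Now suppose $(\phi_t)_{t\in\reals_+}$ is eventually EB, so there exists $t_0 > 0$ such that $\phi_t\in\ebe{\matrd}$ for all $t\geqslant t_0$. For any fixed $s\in\reals_+$, define $\Delta(s) = s + t_0$, which is strictly greater than $s$ since $t_0 > 0$. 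For every $t\geqslant \Delta(s)$ one has $t-s\geqslant t_0$, and therefore
\begin{equation}
    V_{t,s} = \phi_{t-s} \in \ebe{\matrd},
\end{equation}
which is precisely the condition of Definition \ref{def:PPTdivisibility} for eEB-divisibility.

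Since both implications hold, the equivalence follows. There is no real obstacle in this proof: the entire content is the observation that semigroup propagators coincide with semigroup elements, so the asymptotic EB-property of the semigroup directly transfers to its propagators. The remark preceding the theorem (that $\Delta(s) > s$ is needed because $V_{s,s} = \id{}$ is never EB) is automatically respected here through the positive offset $t_0$.
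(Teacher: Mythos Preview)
Your proof is correct and follows essentially the same approach as the paper: the forward direction via Corollary~\ref{thm:eEB}, and the backward direction via the observation that $V_{t,s} = \phi_{t-s}$ together with the choice $\Delta(s) = s + t_0$. The only difference is that you spell out the invertibility of semigroup elements explicitly, whereas the paper simply writes $V_{t,s} = \phi_{t-s}$ without comment.
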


\begin{proof}
    Direction ``$\Rightarrow$'' follows immediately from Theorem \ref{thm:eEB}. For the opposite note that if $\phi_t$ is EB for $t\geqslant t_0$, then a propagator $V_{t,s}=\phi_{t-s}$ is EB for any $t\geqslant \Delta(s) = s+t_0 > s$, i.e.~$(\phi_t)_{t\in\reals_+}$ is eEB-divisible.
\end{proof}

\section{CP-divisible dynamics}
\label{sec:CPdivisible}

In this section we elaborate on probably the most distinguished and well-studied case of quantum evolution families, namely the CP-divisible dynamical maps governed by infinitesimal generators in the celebrated Gorini-Kossakowski-Lindblad-Sudarshan (GKLS) form (see e.g.~references \cite{Chruscinski2017,Alicki2006a,Breuer2002} for an excellent overview of the subject). The main result is that, roughly speaking. when the dynamics possesses a unique stationary state which is strictly positive definite then it becomes entanglement breaking. We show it rigorously in two related cases, namely for commuting GKLS generators and for quantum dynamical semigroups. We conjecture, however, that this observation applies to a much broader class of more general, non-commuting generators.

\begin{theorem}\label{thm:LGKS}
    Let $(L_t)_{t\in\reals_+}$ be a commutative family of time-dependent GKLS generators and assume $\spec{L_t} = \{\mu_i (t)\}_{i=1}^{d^2}$ (counting multiplicities), $\mu_1 (t) = 0$, $\ker{L_t} = \complexes \omega$. If
    \begin{equation}\label{eq:LGKSassump}
        \lim_{t\to\infty}\int\limits_{0}^{t}\Re{\mu_i (s)}ds = -\infty
    \end{equation}
    for $i\geqslant 2$ and $\omega > 0$ then family $(\Lambda_t)_{t\in\reals_+}$ generated by $L_t$ is eEB-divisible and eventually EB.
\end{theorem}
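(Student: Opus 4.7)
The plan is to recognize that Theorem~\ref{thm:LGKS} is a specialization of Theorem~\ref{thm:TheTheorem} combined with Theorem~\ref{thm:LtCommutativeEBdiv} to the particular case when the commuting generators have GKLS form. So the proof should consist of a short verification that the hypotheses of those two theorems are satisfied, followed by invoking them in turn.

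First I would check the three numbered assumptions of Theorem~\ref{thm:TheTheorem} one by one. Assumption~(\ref{TheTheoremAssum1}), that $0$ is a simple eigenvalue of $L_t$, is immediate from the assumption $\ker L_t = \complexes\omega$ being one-dimensional for each $t$. Assumption~(\ref{TheTheoremAssumUnbounded}) is literally the hypothesis~(\ref{eq:LGKSassump}) of the present theorem, applied to each $\mu_i(t)$ with $i \geqslant 2$. Assumption~(\ref{TheTheoremAssum2}) requires that the kernel be a constant ray spanned by a positive definite matrix, and this is stated outright as $\ker L_t = \complexes\omega$ with $\omega > 0$. Commutativity $L_t \circ L_s = L_s \circ L_t$ is assumed, and piecewise continuity of $t \mapsto L_t$ is a mild regularity requirement implicit in the GKLS setup. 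Theorem~\ref{thm:TheTheorem} then directly yields that $(\Lambda_t)_{t\in\reals_+}$ is eventually EB.

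For the second conclusion, eEB-divisibility follows immediately from Theorem~\ref{thm:LtCommutativeEBdiv}, which asserts precisely this property for any family $(\Lambda_t)$ satisfying the hypotheses of Theorem~\ref{thm:TheTheorem}. Since those hypotheses have already been established in the previous step, nothing further is needed to close the argument.

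The only real subtle point, and essentially the sole potential obstacle, is the matter of diagonalizability of each $L_t$: Theorem~\ref{thm:TheTheorem} explicitly assumes it, whereas Theorem~\ref{thm:LGKS} only uses the spectral notation $\{\mu_i(t)\}_{i=1}^{d^2}$ with multiplicities, which suggests but does not rigorously guarantee a complete set of eigenprojections. In practice this is harmless because diagonalizable maps are dense in $B(\matrd)$; should one wish to treat the genuinely non-diagonalizable case, one could approximate by diagonalizable GKLS generators and pass to the limit using continuity of the spectral projections together with Lemma~\ref{lemma:PomegaProperties}, which guarantees that $\proj{\omega}$ lies in the interior of $\ebe{\matrd}$ and hence provides a robust open neighborhood inside $\ebe{\matrd}$ into which both $\Lambda_t$ and the propagators $V_{t,s}$ must eventually fall.
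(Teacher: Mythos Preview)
Your proposal is correct and follows essentially the same route as the paper: verify that the hypotheses of Theorem~\ref{thm:TheTheorem} hold for a commutative family of GKLS generators and then invoke Theorem~\ref{thm:LtCommutativeEBdiv}. The paper's own proof is even more terse (it simply notes that GKLS form forces $0\in\spec L_t$ with the rest of the spectrum in the closed left half-plane, then appeals directly to Theorem~\ref{thm:LtCommutativeEBdiv}), and it likewise leaves the diagonalizability point implicit rather than addressing it.
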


\begin{proof}
    $L_t$, being in GKLS form, nullifies the trace, $\tr{L_t (\rho) = 0}$, and so $0 \in \spec{L_t}$. The remaining part of spectrum lays on the complex left half-plane (and is symmetric w.r.t.~real axis) and satisfies condition \eqref{eq:LGKSassump} by assumption, so Theorem \ref{thm:LtCommutativeEBdiv} applies and the proof is finished.
\end{proof}

In particular, assumptions of Theorem \ref{thm:LGKS} may be satisfied in the prominent case of quantum dynamical semigroups:

\begin{theorem}\label{thm:LGKSsemigroup}
    Let $L$ be a GKLS generator and let $\omega$ be its unique stationary state. If $\omega > 0$ then a semigroup $(e^{tL})_{t\in\reals_+}$ is eEB-divisible and eventually EB.
\end{theorem}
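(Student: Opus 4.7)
The plan is to reduce the claim to Theorem \ref{thm:LGKS} by viewing $L$ as the constant, time-independent generator $L_t \equiv L$ for all $t\in\reals_+$. Such a constant family is trivially commutative and continuous, and the evolution it generates is precisely $(e^{tL})_{t\in\reals_+}$. An equivalent route is to verify the hypotheses of Theorem \ref{thm:TheTheoremSemigroups} and then promote the eventual EB conclusion to eEB-divisibility via Theorem \ref{thm:SemigroupEB}; both routes require the same work.

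Verification of the hypotheses is mostly immediate. Uniqueness of the stationary state gives $\ker L = \complexes\omega$, so that $0$ is a simple eigenvalue of $L$, and positive definiteness of $\omega$ is the standing assumption. For the integral condition \eqref{eq:LGKSassump}, constancy of $L$ reduces it to
\begin{equation*}
    \int_0^t \Re{\mu_i(s)}\,ds = t\,\Re{\mu_i},
\end{equation*}
which tends to $-\infty$ precisely when $\Re{\mu_i} < 0$. The inclusion $\spec{L}\subset\{z\in\complexes : \Re{z} \leqslant 0\}$ is standard, since $(e^{tL})_{t\in\reals_+}$ is a contraction semigroup of CPTP maps on $\matrd$.

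The genuine content of the proof is therefore to exclude non-zero eigenvalues of $L$ lying on the imaginary axis. Here I would invoke the classical structural theorem for GKLS semigroups with a faithful invariant state (in the spirit of Frigerio--Verri, Spohn, Evans--Hoegh-Krohn): the peripheral spectrum $\spec{L}\cap i\reals$ is supported on a ``rotating'' or decoherence-free subalgebra of $\matrd$, and uniqueness of the faithful stationary state $\omega$ forces this subalgebra to collapse to $\complexes\cdot I$. Consequently $\spec{L}\cap i\reals = \{0\}$ and every non-zero $\mu\in\spec{L}$ satisfies $\Re{\mu}<0$ strictly.

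With all three hypotheses of Theorem \ref{thm:LGKS} verified, the conclusion that $(e^{tL})_{t\in\reals_+}$ is both eEB-divisible and eventually EB follows at once. The only non-routine step is the appeal to the peripheral-spectrum result; the rest is a direct specialization of the commuting-generator theorem to the constant case.
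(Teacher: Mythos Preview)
Your approach is essentially the paper's own: specialize the constant generator to Theorem~\ref{thm:LGKS} (equivalently Theorem~\ref{thm:TheTheoremSemigroups} plus Theorem~\ref{thm:SemigroupEB}) by checking that every non-zero eigenvalue of $L$ has strictly negative real part. The paper simply asserts this spectral fact and moves on; you are more careful in isolating it as the only nontrivial step and in citing the Frigerio--Verri/Spohn peripheral-spectrum argument to exclude purely imaginary eigenvalues, which is indeed the correct way to close that gap.
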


\begin{proof}
    We see that since all non-zero eigenvalues $\mu$ of $L$ lay in the left complex half-plane, all integrals $\int_0^t \Re{\mu} \, ds = t \Re{\mu}$ are unbounded from below and Theorems \ref{thm:SemigroupEB} and \ref{thm:LGKS} apply.
\end{proof}

Finally, the following result shows that in case of CP-divisible dynamics it is enough for the propagator to be entanglement breaking at \emph{one} instant, in order to be eventually EB:

\begin{theorem}\label{thm:CPdivPPTdiv}
    Let a family $(\Lambda_t)_{t\in\reals_+}$ be CP-divisible. If it happens that for every $s\geqslant 0$ a propagator $V_{t_0 (s),s}\in\ebe{\matrd}$ for \emph{some} $t_0 (s) > s$, then also $V_{t,s}\in\ebe{\matrd}$ for \emph{all} $t \geqslant t_0 (s)$, i.e.~family $(\Lambda_t)_{t\in\reals_+}$ is also eEB-divisible.
\end{theorem}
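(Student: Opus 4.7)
The key tool is already built into the framework: the cone $\ebe{\matrd}$ is a mapping cone, so it absorbs compositions with completely positive maps on either side. The plan is therefore to exhibit $V_{t,s}$ as a composition of an EB map with a CP map, and invoke this absorption property.

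First I would use the cocycle identity for propagators. Fix $s\geqslant 0$ and let $t_0(s) > s$ be such that $V_{t_0(s),s}\in\ebe{\matrd}$, which exists by hypothesis. For any $t\geqslant t_0(s)$, divisibility gives the three-point decomposition
\begin{equation}
    V_{t,s} = V_{t,t_0(s)}\circ V_{t_0(s),s},
\end{equation}
which follows by applying the identity $\Lambda_t = V_{t,r}\circ \Lambda_r$ at $r = s$ and at $r = t_0(s)$, and then using invertibility or uniqueness of the propagator on the range of $\Lambda_s$ (in the invertible case simply $V_{t,s} = \Lambda_t \circ \Lambda_s^{-1} = (\Lambda_t\circ\Lambda_{t_0(s)}^{-1})\circ(\Lambda_{t_0(s)}\circ\Lambda_s^{-1})$).

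Second, by the CP-divisibility assumption, the left factor $V_{t,t_0(s)}$ is completely positive and trace preserving for every $t\geqslant t_0(s)$. The right factor $V_{t_0(s),s}$ is entanglement breaking by hypothesis. Since $\ebe{\matrd}$ is a mapping cone, as recalled in Section \ref{sec:PositiveMaps}, it is invariant under composition with CP maps from the left (and from the right); hence
\begin{equation}
    V_{t,s} = V_{t,t_0(s)}\circ V_{t_0(s),s} \in \ebe{\matrd}
\end{equation}
for every $t\geqslant t_0(s)$. This is exactly the eEB-divisibility of $(\Lambda_t)_{t\in\reals_+}$, with $\Delta(s) = t_0(s)$.

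There is really no main obstacle here, since the statement is essentially a direct consequence of (i) the chain rule for propagators and (ii) the mapping-cone property of $\ebe{\matrd}$. The only subtlety worth a line of comment is that one should justify the three-point decomposition without tacitly assuming invertibility; but this decomposition is just the standard definition of divisibility applied twice, and holds whenever a consistent family of propagators $(V_{t,s})_{t\geqslant s\geqslant 0}$ exists.
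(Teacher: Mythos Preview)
Your proof is correct and follows essentially the same route as the paper: decompose $V_{t,s} = V_{t,t_0(s)}\circ V_{t_0(s),s}$ via CP-divisibility, then use that $\ebe{\matrd}$ is a mapping cone to conclude. The paper's version is just a terser rendition of your argument, without the extra commentary on the cocycle identity.
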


\begin{proof}
    Let $s\geqslant 0$ and let $t_0 (s) > s$ be such that $V_{t_0 (s),s}\in\ebe{\matrd}$. Notice, that for any $t > t_0 (s)$ the CP-divisibility guarantees that we have
    \begin{equation}
        V_{t,s} = V_{t,t_0(s)} V_{t_0(s),s}
    \end{equation}
    where $V_{t,t_0(s)}\in\cpe{\matrd}$ and $V_{t_0(s),s}\in\ebe{\matrd}$; then $V_{t,s}$ is also EB as a composition from mapping cone property of $\ebe{\matrd}$.
\end{proof}

\section{Semigroups and \texorpdfstring{PPT\textsuperscript{2}-conjecture}{PPT-squared-conjecture}}
\label{sec:PPT2}

Here we make few remarks on some correlations between eventually EB semigroups and the famous PPT\textsuperscript{2}-conjecture. Let us recall that it was conjectured by Christandl that a composition of any two PPT maps is always entanglement breaking \cite{Banff}. Up to now, the PPT\textsuperscript{2}-conjecture was rigorously proved in trivial case of algebra $\matr{2}$ (where it results basically from Peres-Horodecki criterion of separability) and also for some specific classes of maps beyond dimension $2$ (see e.g.~\cite{Singh2022}). An interesting result appeared in \cite[Thm. 3.5]{Kennedy2018}, where it was shown that the conjecture holds in \emph{asymptotic} sense: for every PPT map $\phi$ which is trace preserving or unital, the sequence $(\phi^n)_{n\in\naturals}$ of iterative compositions of $\phi$ tends to be arbitrarily close to the set of all entanglement breaking maps,
\begin{equation}\label{eq:AsymptEBcondition}
    \lim_{n\to\infty}d(\phi^n , \ebe{\matrd}) = 0.
\end{equation}
Moreover, this result was in a sense refined in \cite[Thm. 4.4]{Rahaman2018} where it was shown that for every unital and trace preserving (bistochastic) PPT map $\phi$ there exists a finite $k \in \naturals$ s.t.~$\phi^k$ is actually entanglement breaking. The asymptotic result \eqref{eq:AsymptEBcondition} of \cite{Kennedy2018} allows to formulate an interesting sufficient condition for semigroups to be asymptotically EB, even with no \emph{a priori} knowledge of spectral properties of its stationary state. We formulate it in form of Theorems \ref{thm:eEBsemigroupCP} and \ref{thm:AsympEBsemigroup} below, where in the former we restrict attention to the \emph{completely positive} case, while the latter is a generalization concerning any semigroup which is unital or trace preserving.

\begin{theorem}\label{thm:eEBsemigroupCP}
    Let $(\phi_t)_{t\in\reals_+}$ be a semigroup of completely positive, unital and trace preserving linear maps on $\matrd$. If there exists $s > 0$ such that $\phi_{s}$ is PPT then the semigroup is eventually EB.
\end{theorem}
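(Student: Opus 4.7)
The plan is to combine the semigroup property with the Rahaman--Jivulescu--Nechita--Pellegrini result cited just before the theorem (\cite[Thm.~4.4]{Rahaman2018}), which states that any bistochastic (unital and trace preserving) PPT map on $\matrd$ has an iterate $\phi^k$ which is entanglement breaking. Since by hypothesis the whole semigroup consists of CP, unital, TP maps, the given map $\phi_s$ is in particular bistochastic, and being PPT by assumption, it satisfies the hypotheses of that theorem. Hence there exists $k\in\naturals$ such that $\phi_s^k\in\ebe{\matrd}$.

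By the semigroup property $\phi_s^k=\phi_{ks}$, so the map $\phi_{ks}$ is entanglement breaking. Setting $t_0 = ks$, I would then show that $\phi_t\in\ebe{\matrd}$ for every $t\geqslant t_0$. For such $t$ write $t = t_0 + r$ with $r\geqslant 0$; by the semigroup law
\begin{equation}
\phi_t \;=\; \phi_{t_0+r} \;=\; \phi_r \circ \phi_{t_0}.
\end{equation}
Here $\phi_{t_0}\in\ebe{\matrd}$ while $\phi_r\in\cpe{\matrd}$, so the mapping cone property of $\ebe{\matrd}$ recalled in Section \ref{sec:PositiveMaps} (invariance under left and right composition with CP maps) yields $\phi_t\in\ebe{\matrd}$. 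This establishes that the semigroup is eventually EB with constant $t_0 = ks$.

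No real obstacle arises: the only nontrivial input is the existence of the finite iterate $k$, which is precisely the content of the cited theorem, and the passage from the discrete statement ``$\phi_{ks}$ is EB'' to the continuous statement ``$\phi_t$ is EB for all $t\geqslant ks$'' is a direct application of the mapping cone structure. It is worth noting that this argument is really a continuous-time reformulation of the same bistochastic PPT\textsuperscript{2}-type dynamics, and it does not rely on any spectral information about the stationary state; this makes it complementary to Theorem \ref{thm:LGKSsemigroup}, where strict positivity of the stationary state was the essential hypothesis.
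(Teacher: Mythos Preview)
Your proof is correct and follows essentially the same route as the paper's own proof: apply \cite[Thm.~4.4]{Rahaman2018} to the bistochastic PPT map $\phi_s$ to obtain an EB iterate $\phi_{ks}$, then use the semigroup law $\phi_t=\phi_{t-ks}\circ\phi_{ks}$ together with the mapping cone property of $\ebe{\matrd}$ to conclude that $\phi_t$ is EB for all $t\geqslant ks$.
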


\begin{proof}
    Let $\phi_s \in \ppte{\matrd}$. The result of \cite[Thm. 4.4]{Rahaman2018} yields existence of some $k\in\naturals$ s.t.~$\phi_{s}^{k} = \phi_{ks}$ is EB. Then, for any $t \geqslant ks$ we have $\phi_{t} = \phi_{t-ks}\circ\phi_{ks}$ where $\phi_{t-ks}$ is CP. Hence, $\phi_t$ is EB by mapping cone property of $\ebe{\matrd}$.
\end{proof}

We can actually relax the complete positivity requirement in order to treat more general class of semigroups. By contrast to Theorem \ref{thm:eEBsemigroupCP} where we had CP-divisibility in our disposal to conclude on asymptotical behavior, in the result below we make a little stronger assumption about semigroup being PPT not at one point, but over some interval:
\begin{theorem}\label{thm:AsympEBsemigroup}
    Let $(\phi_t)_{t\in\reals_+}$ be a semigroup of unital or trace preserving linear maps on $\matrd$. If there exists $s > 0$ such that $\phi_{s}$ lays inside the cone $\ppte{\matrd}$, then the semigroup is asymptotically EB.
\end{theorem}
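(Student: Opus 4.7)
The plan is to combine the asymptotic PPT$^2$ result of \cite{Kennedy2018} applied at the single time $s$ with the semigroup structure, interpolating from the arithmetic subsequence $(ns)_{n\in\naturals}$ to all intermediate times. Since $\phi_s \in \ppte{\matrd}$ and $\phi_s$ inherits unitality or trace preservation from the whole semigroup, Theorem 3.5 of \cite{Kennedy2018} applies and yields $d(\phi_s^n, \ebe{\matrd}) \to 0$, which already establishes the asymptotic EB property along the discrete sequence $t_n = ns$.

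To upgrade this to arbitrary $t$, I would decompose $t = ns + r$ with $n = \lfloor t/s \rfloor$ and $r \in [0, s)$, so that $\phi_t = \phi_r \circ \phi_s^n$ by the semigroup law. Fix $\epsilon > 0$ and pick $N$ large enough that for every $n \geqslant N$ there exists $\xi_n \in \ebe{\matrd}$ with $\supnorm{\phi_s^n - \xi_n} < \epsilon$. The central structural step is to verify that the approximant survives pre-composition with $\phi_r$, i.e.\ $\phi_r \circ \xi_n \in \ebe{\matrd}$. Writing the separable Choi decomposition $\choi{\xi_n} = \sum_k A_k \otimes B_k$ with $A_k, B_k \in \matrd^+$, a direct computation gives $\choi{\phi_r \circ \xi_n} = \sum_k A_k \otimes \phi_r(B_k)$; positivity of $\phi_r$ (inherent in the dynamical-map framework) ensures $\phi_r(B_k) \in \matrd^+$, so this Choi matrix is again a sum of product positives, hence separable, placing $\phi_r \circ \xi_n$ in $\ebe{\matrd}$.

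Combining these ingredients,
\begin{equation*}
d(\phi_t, \ebe{\matrd}) \leqslant \supnorm{\phi_t - \phi_r \circ \xi_n} = \supnorm{\phi_r \circ (\phi_s^n - \xi_n)} \leqslant \supnorm{\phi_r} \cdot \supnorm{\phi_s^n - \xi_n} \leqslant M\epsilon,
\end{equation*}
where $M = \sup_{r \in [0, s]} \supnorm{\phi_r} < \infty$ by continuity of the (norm-continuous, since finite-dimensional) semigroup on the compact interval $[0, s]$. Since $\epsilon$ was arbitrary and $n \to \infty$ as $t \to \infty$, we obtain $d(\phi_t, \ebe{\matrd}) \to 0$ as desired.

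The main obstacle I anticipate is the EB-preservation step $\phi_r \circ \xi_n \in \ebe{\matrd}$: in contrast to the CP-divisible setting of Theorem~\ref{thm:eEBsemigroupCP}, here the ``residual'' map $\phi_r$ for $r \in (0, s)$ is not guaranteed to be completely positive, so the mapping cone property of $\ebe{\matrd}$ cannot be invoked directly. The separable-Choi computation above sidesteps this obstacle by exploiting only positivity of $\phi_r$.
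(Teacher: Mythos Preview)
Your argument has a genuine gap at precisely the step you flag as ``the main obstacle'': you claim that positivity of $\phi_r$ is ``inherent in the dynamical-map framework'', but Theorem~\ref{thm:AsympEBsemigroup} is \emph{not} stated in that framework. The hypothesis is that $(\phi_t)$ is a semigroup of \emph{unital or trace preserving} linear maps---no positivity is assumed. This is the whole point of the theorem, as the paper makes explicit in contrasting it with Theorem~\ref{thm:eEBsemigroupCP}: here complete positivity (and indeed positivity) has been relaxed. So for $r\in(0,s)$ the residual map $\phi_r$ may fail to be positive, and then $\phi_r(B_k)$ need not lie in $\matrd^+$, so your separable-Choi computation does not place $\phi_r\circ\xi_n$ in $\ebe{\matrd}$. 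Trading complete positivity for mere positivity in the mapping-cone step is a nice observation, but it does not buy you anything under these hypotheses.

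A related symptom: your proof never uses that $\phi_s$ lies in the \emph{interior} of $\ppte{\matrd}$, only that it is PPT. If your argument were valid it would prove a strictly stronger statement than the theorem claims. The paper's proof uses the interior hypothesis essentially: it yields, by continuity, a whole interval $[t_1,t_2]$ on which $\phi_t$ is PPT; the mapping-cone property of $\ppte{\matrd}$ then makes $\phi_t$ PPT on every dilated interval $[nt_1,nt_2]$, and a covering lemma shows these eventually fill a half-line $[t_*,\infty)$. Only then does the paper invoke \cite[Thm.~3.5]{Kennedy2018}, for \emph{every} $t\geqslant t_*$, and pass from discrete sequences to the continuum via Croft's lemma. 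This circumvents any need to control $\phi_r$ for small $r$. Your interpolation strategy would work (and would be simpler than the paper's route) under an added positivity assumption on the whole semigroup, but that is a different theorem.
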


\begin{proof}
    If $\phi_s$ is an interior point in $\ppte{\matrd}$, it is separated from the cone's boundary. Therefore, by continuity of $t\mapsto\phi_t$, there must exist a non-empty interval $\mathcal{I}_1=[t_1, t_2] \subset (0,\infty)$ s.t.~all maps $\phi_t$ are PPT for $t\in\mathcal{I}_1$. From mapping cone property of PPT maps we see that also $\phi_{t}^{n} = \phi_{nt}$ are PPT for any $t\in\mathcal{I}_1$, $n\in\naturals$. Let us define a family $\{\mathcal{I}_n : n\in\naturals\}$ of shifted and scaled intervals, $\mathcal{I}_n = [nt_1, nt_2]$. Then, for $t\in\mathcal{I}_1$ we have $nt\in\mathcal{I}_n$ and so a family $\{\phi_{t} : t\in\mathcal{I}_n\}$ is PPT, for all $n$. In consequence, $\{\phi_t : t\in\bigcup_{n\in\naturals}\mathcal{I}_n\}$ is also PPT. We will make use of a following simple lemma (for proof, see Lemma \ref{lemma:Halfline} in Appendix \ref{app:MathematicalSupplement}):
    
    \begin{lemma}
        There exists a half-line $[t_*,\infty)$, $t_*>0$, s.t.~a family $\{\mathcal{I}_n : n>n_0\}$ is its covering for $n_0 \in\naturals$ large enough.
    \end{lemma}
    
    What this lemma gives is that there exists some $t_* > 0$ s.t.~$\phi_t$ is PPT for all $t\geqslant t_*$ and there are no ``holes'' where a PPT property may be suddenly lost. Let us then define a new family of maps $(\psi_t)_{t\in\reals_+}$ by shifting the origin to point $t_*$, i.e.~$\psi_t = \phi_{t+t_*}$. For brevity, denote $f(t) = d(\phi_t, \ebe{\matrd})$ and $f_* = f(\cdot + t_*)$. Naturally, $f_*$ is then a distance between $\psi_t$ and $\ebe{\matrd}$ and both $f$, $f_*$ are continuous on their respective domains. Now, take any $t \geqslant 0$ and consider a sequence $(\psi_{nt})=(\psi_{t}^{n})$. Map $\psi_{nt} = \phi_{n(t+t_*)}$ is clearly PPT for all $n\in\naturals$ so condition \eqref{eq:AsymptEBcondition} of \cite[Thm. 3.5]{Kennedy2018} yields $(\psi_{nt})$ is asymptotically EB in the sense that $f_* (nt)\to 0$ as $n\to\infty$. Since this is true for all $t > 0$, application of \emph{Croft's lemma} \cite{Kingman1963} indicates also $\lim_{t\to\infty}f_* (t) = 0$, i.e.~a family $(\phi_t)_{t\in\reals_+}$ is asymptotically EB.
    \end{proof}
    
\begin{remark}
    We note that checking if the semigroup actually enters the interior of $\ppte{\matrd}$ reduces to finding at least one instant $t>0$, for which spectra of both Choi's matrices $\choi{\phi_t}$ and $\choi{\phi_t}^{\ptrans}$ are strictly positive, and as such is potentially easily achievable, at least by numerical means. 
\end{remark}

We conclude this section with a simple observation about semigroups, implied directly by PPT\textsuperscript{2}-conjecture. It remains an open question if it holds or not, though. 

\begin{theorem}
    Let $(\phi_t)_{t\in\reals_+}$ be a semigroup of linear maps on $\matrd$ and assume the PPT\textsuperscript{2}-conjecture holds. If there exists $s > 0$ such that $\phi_s$ lays inside the cone $\ppte{\matrd}$, then the semigroup is eventually EB.
\end{theorem}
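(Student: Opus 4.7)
The plan is to recycle the half-line construction already developed in the proof of Theorem \ref{thm:AsympEBsemigroup} to produce an explicit threshold $t_*>0$ beyond which $\phi_t$ is PPT, and then invoke the assumed PPT\textsuperscript{2}-conjecture directly, pointwise, to upgrade ``PPT'' to ``EB'' on the entire tail. The key observation is that we no longer need Kennedy--Paulsen's asymptotic result or Croft's lemma: once a single PPT map is known to be entanglement breaking whenever squared, the semigroup property produces entanglement breaking maps in a single step.

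More concretely, first, since by hypothesis $\phi_s$ lies in $\operatorname{Int}{\ppte{\matrd}}$, continuity of $t\mapsto\phi_t$ yields a non-empty interval $\mathcal{I}_1 = [t_1, t_2]$, containing $s$, on which every $\phi_t$ is PPT. Second, by the mapping cone property of $\ppte{\matrd}$ together with the semigroup identity $\phi_{nt}=\phi_t^n$, each shifted-and-scaled interval $\mathcal{I}_n = [nt_1, nt_2]$ also consists of PPT maps; Lemma \ref{lemma:Halfline} from Appendix \ref{app:MathematicalSupplement} then produces a $t_* > 0$ such that $\phi_t \in \ppte{\matrd}$ for all $t \geqslant t_*$. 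Third, for any $t \geqslant 2t_*$, I write
\begin{equation}
    \phi_t = \phi_{t/2}\circ\phi_{t/2},
\end{equation}
and observe that $t/2 \geqslant t_*$, so that $\phi_{t/2}\in\ppte{\matrd}$. Invoking the PPT\textsuperscript{2}-conjecture applied to the pair $(\phi_{t/2},\phi_{t/2})$ gives $\phi_t \in \ebe{\matrd}$. Setting $t_0 = 2t_*$ shows that $(\phi_t)_{t\in\reals_+}$ is eventually EB.

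The only non-trivial piece is the second step, namely ensuring that the family $\{\mathcal{I}_n\}$ actually covers a half-line so that there are no PPT-gaps beyond $t_*$; but this is exactly what Lemma \ref{lemma:Halfline} is designed to deliver, so the obstacle dissolves by reference. The remainder of the argument is bookkeeping: the strict inequality $t_*>0$ and $t/2\geqslant t_*$ ensure both factors in the decomposition $\phi_t=\phi_{t/2}\circ\phi_{t/2}$ are genuinely in $\ppte{\matrd}$, so the PPT\textsuperscript{2} hypothesis applies without degenerate cases.
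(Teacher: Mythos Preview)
Your proof is correct and follows essentially the same route as the paper: continuity plus the interior assumption gives a PPT interval, the half-line covering (Lemma~\ref{lemma:Halfline}) upgrades this to PPT on $[t_*,\infty)$, and the decomposition $\phi_t = \phi_{t/2}\circ\phi_{t/2}$ for $t\geqslant 2t_*$ combined with the PPT\textsuperscript{2}-conjecture yields the claim. The only cosmetic difference is that the paper cites Theorem~\ref{thm:AsympEBsemigroup} for the existence of $t_*$ (implicitly reusing only the first half of that proof, not the part requiring unitality or trace preservation), whereas you invoke Lemma~\ref{lemma:Halfline} directly---arguably the cleaner choice, since the present theorem has no unital/TP hypothesis.
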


\begin{proof}
    Again, from continuity we know there exists a non-empty interval $\mathcal{I} = [t_1, t_2] \subset (0,\infty)$ s.t.~$\phi_t \in \ppte{\matrd}$ for all $t\in\mathcal{I}$. Theorem \ref{thm:AsympEBsemigroup} then yields existence of such $t_* > 0$ that $(\phi_t)_{t\geqslant t_*}$ is PPT everywhere. PPT\textsuperscript{2}-conjecture then yields $\phi_t \in \ebe{\matrd}$ for all $t \geqslant 2t_*$, i.e.~the semigroup is eventually EB.
\end{proof}

This completes the more abstract part of the article. In what follows, we conduct analysis of some distinctive, important classes of quantum dynamical maps. We present our results in form of sections \ref{sec:CasesQubit} and \ref{sec:CasesBeyondQubit}, where the former is restricted solely to qubit cases ($d=2$) while the latter treats more general systems.

It is natural to ask \emph{when} eventually EB family actually becomes entanglement breaking. Let $X\subset B(\matrd)$ to be some subset of linear maps acting on $\matrd$. We define the \emph{X arrival time} $\atime{X}$ of family $(\phi_t)_{t\in\reals_+}$ as a minimum time after which the family enters subset $X$ and remains therein:
\begin{equation}
    \atime{X} = \min{\{t \, | \, \forall \, s \geqslant t : \phi_s \in X\}}.
\end{equation}
It is then justified to define a whole hierarchy of arrival times $\atime{CP}$, $\atime{coCP}$, $\atime{PPT}$, $\atime{EB}$ and so on. Naturally, by embeddings between cones of maps we have
\begin{equation}
    \atime{CP} \leqslant \atime{PPT} \leqslant \atime{EB} \quad \text{and} \quad \atime{coCP} \leqslant \atime{PPT} \leqslant \atime{EB}.
\end{equation}
In simple case of algebra $\matr{2}$ it suffices for a map to be PPT in order to be also EB: indeed, when Choi's matrix is PPT it follows from Peres-Horodecki criterion \cite{Peres1996,Horodecki1996} that it is also separable, i.e.~map is entanglement breaking, and $\atime{EB} = \atime{PPT}$ in this case. When $d > 2$ however, making exact calculation of $\atime{EB}$ is generally unmanageable and one must resort to finding the PPT arrival time $\atime{PPT}$ as its lower bound. This however can be achieved quite easily by analyzing definiteness of time-dependent partially transposed Choi's matrix of $\phi_t$, at least numerically.

\section{Examples: qubit case}
\label{sec:CasesQubit}

We first illustrate our analysis with two well known examples of qubit evolution: the \emph{Pauli channels} and \emph{phase covariant dynamics}. The latter one includes a basic semigroup case, a note on time-dependent generator and an interesting case of so-called \emph{eternally non-Markovian evolution}. For this section, we will use standard Pauli matrices $\sigma_i$ for orthogonal basis in $\matr{2}$:
\begin{equation}
    \sigma_1 = \left(\begin{array}{cc}0 & 1 \\ 1 & 0\end{array}\right), \quad \sigma_2 = \left(\begin{array}{cc}0 & -i \\ i & 0\end{array}\right), \quad \sigma_3 = \left(\begin{array}{cc}1 & 0 \\ 0 & -1\end{array}\right), \quad \sigma_4 = I.
\end{equation}

\subsection{Pauli channels}

The prominent Pauli channel is characterized in terms of its infinitesimal generator \cite{Chruscinski2013}
\begin{equation}\label{eq:PauliChannelGen}
    L_t (\rho) = \sum_{k=1}^{3} \gamma_k (t) (\sigma_k \rho \sigma_k - \rho),
\end{equation}
where coefficients $\gamma_k (t)$ are real. One easily checks that $L_t$ is diagonal in basis of Pauli matrices and so $\Lambda_t$ is
\begin{equation}\label{eq:PauliChannel}
    \Lambda_t = \sum_{k=1}^{4} \lambda_k (t) \proj{k}, \quad \proj{k}(\rho) = \frac{1}{2} (\tr{\sigma_k \rho}) \, \sigma_k ,
\end{equation}
for coefficients $\lambda_k (t)$ given as
\begin{align}\label{eq:PauliChannelEigenvalues}
    \lambda_1 (t) &= e^{-2 \left[\Gamma_{2}(t) + \Gamma_{3}(t)\right]}, \quad \lambda_2 (t) = e^{-2 \left[\Gamma_{1}(t) + \Gamma_{3}(t)\right]}, \\
    \lambda_3 (t) &= e^{-2 \left[\Gamma_{1}(t) + \Gamma_{2}(t)\right]}, \quad \lambda_4 (t) = 1 \nonumber
\end{align}
and
\begin{equation}\label{eq:PauliChannelGamma}
    \Gamma_k (t) = \int\limits_{0}^{t}\gamma_{k}(s)ds .
\end{equation}
It is shown in \cite{Chruscinski2013,Chruscinski2015} that a necessary and sufficient condition for a dynamical map \eqref{eq:PauliChannel} to be P-divisible is
\begin{equation}\label{eq:PauliChannelPdivCond}
    \gamma_1 (t) + \gamma_2(t) \geqslant 0, \quad \gamma_1 (t) + \gamma_3 (t) \geqslant 0, \quad \gamma_2 (t) + \gamma_3 (t) \geqslant 0
\end{equation}
for all $t\in\reals_+$, while $\gamma_i (t) \geqslant 0$ is necessary and sufficient for CP-divisibility.

\subsubsection{Semigroup}

In simplest case when all coefficients are constants, $\gamma_k (t) = \gamma_k$, we have $\Gamma_{k}(t) = \gamma_k t$, dynamical map \eqref{eq:PauliChannel} trivializes to a semigroup and \eqref{eq:PauliChannelPdivCond} is simply a condition for positivity. Asymptotic behavior of the channel is then easily seen to be determined by values of $\gamma_i + \gamma_j$:

\begin{theorem}\label{thm:PauliSemigroup}
    Denote $s_{ij} = \gamma_i + \gamma_j$. The following statements hold for positive Pauli semigroup $(e^{tL})_{t\in\reals_+}$:
    \begin{enumerate}
        \item \label{thm:PauliSemigroup1} If $s_{ij} > 0$ for all $i\neq j$ then the semigroup is eventually EB;
        \item \label{thm:PauliSemigroup2} If $s_{ij} = 0$ for just \emph{one} pair of indices $(i,j)$ then the semigroup is asymptotically EB;
        \item \label{thm:PauliSemigroup3} If $s_{ij} = s_{kl} = 0$ for \emph{two} different pairs of indices $(i,j)$ and $(k,l)$ then the semigroup is neither asymptotically EB, CP nor coCP.
    \end{enumerate}
\end{theorem}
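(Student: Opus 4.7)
All three parts rest on the observation that in the semigroup case $\Gamma_k(t)=\gamma_k t$, so the Pauli-channel eigenvalues in \eqref{eq:PauliChannelEigenvalues} reduce to $\lambda_k(t)=e^{-2 s_{ij} t}$ for $k\in\{1,2,3\}$ with $\{i,j\}=\{1,2,3\}\setminus\{k\}$, while $\lambda_4(t)\equiv 1$. Each nontrivial eigenvalue therefore decays exponentially when the corresponding $s_{ij}>0$ and is frozen at $1$ when $s_{ij}=0$, and in every case the semigroup converges in supremum norm to the idempotent $\phi_\infty=\sum_{k\in S}\proj{k}$, where $S\subset\{1,2,3,4\}$ collects the indices of the surviving eigenvalues. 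Since the cones $\cpe{\matr{2}}$, $\cocpe{\matr{2}}$, $\ebe{\matr{2}}$ are closed and the distance function $d(\cdot,X)$ is continuous, each claim reduces to locating $\phi_\infty$ relative to the appropriate cone.

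Part (\ref{thm:PauliSemigroup1}) is immediate: if all $s_{ij}>0$ then $\lambda_k(t)\to 0$ for $k\leqslant 3$, the unique stationary state is $\omega=I/2$, which is strictly positive definite, and Theorem \ref{thm:TheTheoremSemigroups} applies verbatim.

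For part (\ref{thm:PauliSemigroup2}), permutation symmetry among the Pauli labels lets us assume $s_{12}=0$ while $s_{13},s_{23}>0$. The limit $\phi_\infty=\proj{3}+\proj{4}$ sends Bloch vectors $(r_1,r_2,r_3)$ to $(0,0,r_3)$, i.e.~it is the complete dephasing channel $\rho\mapsto\tfrac{1}{2}(\rho+\sigma_3\rho\sigma_3)$, whose Choi matrix $|00\rangle\langle 00|+|11\rangle\langle 11|$ is manifestly separable; hence $\phi_\infty\in\ebe{\matr{2}}$ and $d(\phi_t,\ebe{\matr{2}})\leqslant\supnorm{\phi_t-\phi_\infty}\to 0$.

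For part (\ref{thm:PauliSemigroup3}), symmetry again lets us take $s_{12}=s_{13}=0$; this forces $\gamma_2=\gamma_3=-\gamma_1$, and since only two pairs are assumed to vanish, \eqref{eq:PauliChannelPdivCond} yields $s_{23}=-2\gamma_1>0$. Thus $\phi_\infty=\proj{2}+\proj{3}+\proj{4}$ has Pauli eigenvalues $(\lambda_1,\lambda_2,\lambda_3)=(0,1,1)$. Writing $\phi_\infty=\sum_{k=0}^{3}p_k\sigma_k(\cdot)\sigma_k$ with $\sigma_0=I$ and inverting the standard Pauli-channel linear system produces $p_1=\tfrac{1}{4}(1+\lambda_1-\lambda_2-\lambda_3)=-\tfrac{1}{4}$, so $\phi_\infty\notin\cpe{\matr{2}}$; the same inversion applied to $\theta\circ\phi_\infty$, whose Pauli eigenvalues are $(0,-1,1)$, delivers $p_2=-\tfrac{1}{4}$, so $\phi_\infty\notin\cocpe{\matr{2}}$ and a fortiori $\phi_\infty\notin\ebe{\matr{2}}$. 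Continuity of the distance then gives $d(\phi_t,X)\to d(\phi_\infty,X)>0$ for each $X\in\{\ebe{\matr{2}},\cpe{\matr{2}},\cocpe{\matr{2}}\}$, so none of the three asymptotic properties hold. The only computationally substantive step is this Kraus-weight inversion; the rest is spectral convergence, closedness of the mapping cones, and a single direct appeal to Theorem \ref{thm:TheTheoremSemigroups}.
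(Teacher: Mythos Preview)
Your argument is correct and follows the same overall strategy as the paper: pass to the limit map $\phi_\infty$ and locate it relative to the relevant cones. Part~(\ref{thm:PauliSemigroup1}) is handled identically. For part~(\ref{thm:PauliSemigroup2}) you prove precisely what is stated; the paper additionally computes the minimal eigenvalues of $\choi{\Lambda_t}$ and $\choi{\Lambda_t}^{\ptrans}$ at finite $t$ and shows they are strictly negative, so the semigroup is \emph{only} asymptotically EB and never eventually EB --- more than the theorem claims, but worth knowing. For part~(\ref{thm:PauliSemigroup3}) the paper computes the Choi spectra directly, obtaining $\spec{\choi{\Lambda_\infty}}=\spec{\choi{\Lambda_\infty}^{\ptrans}}=\{-\tfrac12,\tfrac12,\tfrac12,\tfrac32\}$, whereas you use the Kraus-weight inversion; since for a Pauli channel the Choi eigenvalues are exactly $2p_k$, the two computations are equivalent. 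Your explicit closing step --- invoking closedness of the cones and continuity of $d(\cdot,X)$ to pass from $\phi_\infty\notin X$ to $d(\phi_t,X)\not\to 0$ --- is left tacit in the paper's proof.
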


\begin{proof}
    It is enough to check for properties of a map $\Lambda_\infty = \lim_{t\to\infty}\Lambda_t$. For statement \ref{thm:PauliSemigroup1}, notice that expressions \eqref{eq:PauliChannelEigenvalues} yield $\lambda_k (t) \to 0$ for all $k < 4$, so we have
    \begin{equation}
        \Lambda_\infty = \proj{4} = \proj{\omega},
    \end{equation}
    a projection onto maximally mixed state $\omega = \frac{1}{2}I$, which is also a stationary state i.e.~spans kernel of $L$. Semigroup is then eventually EB by Theorem \ref{thm:TheTheoremSemigroups}. For statement \ref{thm:PauliSemigroup2} notice that when exactly one $s_{ij} = 0$ then the asymptotic map $\Lambda_\infty$ will be a projection of rank $2$: indeed, with no loss of generality, assume $s_{12} = 0$, i.e.~$\gamma_1 = -\gamma_2$. Then, positivity conditions \eqref{eq:PauliChannelPdivCond} combined with $s_{13} > 0$, $s_{23} > 0$ imply $\gamma_3 > |\gamma_1|$ which result in
    \begin{equation}
        \Lambda_\infty = \proj{\omega} + \proj{3}.
    \end{equation}
    The Choi's matrix $\choi{\Lambda_\infty} = \operatorname{diag}{\{1,\,0,\,0,\,1\}}$ is separable, $\Lambda_\infty = E_{11} \otimes E_{11} + E_{22}\otimes E_{22}$, but not strictly positive definite: in fact, it lays on a boundary of cone $\matr{4}^{\mathrm{sep.}}$. With some effort, one can compute Choi's matrices $\choi{\Lambda_t}$ and $\choi{\Lambda_t}^{\ptrans}$ (which we omit here) and check that their minimal eigenvalues,
    \begin{align}
        \min{\spec{\choi{\Lambda_t}}} &= -e^{-2\gamma_3 t}\sinh{2|\gamma_1|t}, \\
        \min{\spec{\choi{\Lambda_t}^{\ptrans}}} &= -e^{-2\gamma_3 t} \cosh{2\gamma_1 t},\nonumber
    \end{align}
    are both negative for $t > 0$ and tend to $0$ as $t\to\infty$. This shows that in this case the semigroup is asymptotically PPT and therefore asymptotically EB. Finally, for the remaining statement \ref{thm:PauliSemigroup3}, assume with no loss of generality $s_{12} = s_{23} = 0$ so that $\gamma_1 = -\gamma_2 = \gamma_3$. Then,
    \begin{equation}
        \Lambda_\infty = \proj{\omega} + \proj{1} + \proj{3},
    \end{equation}
    and one easily checks that $\spec{\choi{\Lambda_\infty}} = \spec{\choi{\Lambda_\infty}^{\ptrans}} = \{-\frac{1}{2}, \frac{1}{2},\frac{3}{2}\}$,
    with $\frac{1}{2}$ of multiplicity $2$, so $\Lambda_\infty$ is neither CP nor coCP.
\end{proof}

\subsubsection{Note on time-dependent generator}

Qualitatively, the above analysis may be into some extent translated to the case of time-dependent generators under certain circumstances. We note that asymptotic behavior of $\Lambda_t$ heavily depends upon behavior of coefficients $\Gamma_k (t)$ as defined in \eqref{eq:PauliChannelGamma}, in principle on convergence of  $\int_0^t\gamma_k (t)dt$, and as such is a nontrivial task to trace in general. However, if $\Gamma_i (t) + \Gamma_j (t)$ are assumed to mimic behavior of functions $(\gamma_i + \gamma_j)t$, then one simply reproduces results from the semigroup case as the following theorem shows (we present it without proof as it is virtually the same as in the semigroup case):

\begin{theorem}\label{thm:PauliChannel}
    Denote $S_{ij}(t) = \Gamma_i (t) + \Gamma_j (t)$. The following statements hold for P-divisible family of Pauli channels $(\Lambda_t)_{t\in\reals_+}$:
    \begin{enumerate}
        \item \label{thm:PauliChannel1} If $S_{ij}(t) \to \infty$ for all $i\neq j$ then the family is eventually EB;
        \item \label{thm:PauliChannel2} If $S_{ij} (t) \to 0$ for just \emph{one} pair of indices $(i,j)$ and $S_{kl}(t) \to \infty$ for every other pair $(k,l)$ then the family is asymptotically EB;
        \item \label{thm:PauliChannel3} If $S_{ij}(t),S_{kl} \to 0$ for two different pairs $(i,j)$, $(k,l)$ and $S_{mn}(t)\to\infty$ for remaining pair $(m,n)$ then the family is neither asymptotically EB, CP nor coCP.
    \end{enumerate}
\end{theorem}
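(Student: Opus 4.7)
The plan is to transcribe the argument of Theorem \ref{thm:PauliSemigroup} almost verbatim, exploiting the crucial feature of Pauli channels that the spectral projections $\proj{k}$ in \eqref{eq:PauliChannel} are built from the Pauli basis and are therefore \emph{independent of $t$}; only the scalar eigenvalues
\begin{equation}
    \lambda_1(t) = e^{-2 S_{23}(t)}, \quad \lambda_2(t) = e^{-2 S_{13}(t)}, \quad \lambda_3(t) = e^{-2 S_{12}(t)}, \quad \lambda_4(t) = 1
\end{equation}
vary with $t$. Whenever each $\lambda_k(t)$ admits a limit, $\Lambda_t$ converges in supremum norm to the single map $\Lambda_\infty = \sum_{k} \bigl(\lim_{t\to\infty}\lambda_k(t)\bigr)\proj{k}$, and the whole asymptotic analysis reduces to inspecting this limit.

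For statement \ref{thm:PauliChannel1}, the hypothesis $S_{ij}(t)\to\infty$ drives $\lambda_1,\lambda_2,\lambda_3$ to $0$, so $\Lambda_\infty = \proj{4} = \proj{\omega}$ with $\omega = \tfrac{1}{2} I$ strictly positive definite; Theorem \ref{thm:eEBSufficientCondition} with constant $\mathsf{Z}_t = \proj{\omega}$ (or, equivalently, Theorem \ref{thm:TheTheoremGeneral}) delivers eventual EB. For statement \ref{thm:PauliChannel2}, take without loss of generality $S_{12}(t)\to 0$ and $S_{13}(t),S_{23}(t)\to\infty$, producing $\Lambda_\infty = \proj{\omega}+\proj{3}$, whose Choi matrix is $E_{11}\otimes E_{11}+E_{22}\otimes E_{22}$ and is manifestly separable; hence $\Lambda_\infty\in\ebe{\matr{2}}$. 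Norm convergence $\Lambda_t\to\Lambda_\infty$ then forces $d(\Lambda_t,\ebe{\matr{2}})\leqslant\supnorm{\Lambda_t-\Lambda_\infty}\to 0$, which is asymptotic EB.

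For statement \ref{thm:PauliChannel3}, take without loss of generality $S_{12}(t),S_{23}(t)\to 0$ and $S_{13}(t)\to\infty$, yielding $\Lambda_\infty = \proj{\omega}+\proj{1}+\proj{3}$. The semigroup proof already computed $\spec{\choi{\Lambda_\infty}} = \spec{\choi{\Lambda_\infty}^{\ptrans}} = \{-\tfrac{1}{2},\tfrac{1}{2},\tfrac{1}{2},\tfrac{3}{2}\}$, so $\Lambda_\infty$ lies at strictly positive distance from each of $\cpe{\matr{2}}$, $\cocpe{\matr{2}}$ and $\ebe{\matr{2}}$ (by closedness of these cones together with bicontinuity of the Choi--Jamio{\l}kowski isomorphism in finite dimension). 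Combining this with $\Lambda_t\to\Lambda_\infty$ in norm, a reverse triangle inequality gives $\lim_{t\to\infty} d(\Lambda_t,\mathcal{C})=d(\Lambda_\infty,\mathcal{C})>0$ for each of the three cones $\mathcal{C}$, so the family is asymptotically neither EB, CP, nor coCP.

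No serious obstacle arises: because the spectral projections are time-independent, the explicit Choi-matrix spectra computed in Theorem \ref{thm:PauliSemigroup} transfer directly to the time-dependent setting. The only mildly subtle point is case \ref{thm:PauliChannel2}, where $\Lambda_\infty$ sits on the boundary of $\ebe{\matr{2}}$ so Theorem \ref{thm:TheTheoremGeneral} does not apply; but the elementary observation that norm convergence to an EB map implies asymptotic EB suffices to close that gap.
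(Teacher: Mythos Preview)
Your proposal is correct and matches the paper's own approach exactly: the paper presents this theorem ``without proof as it is virtually the same as in the semigroup case,'' and you have carried out precisely that transcription, correctly exploiting the time-independence of the Pauli spectral projections so that the explicit Choi-matrix computations from Theorem~\ref{thm:PauliSemigroup} transfer unchanged. Your treatment of case~\ref{thm:PauliChannel3} via the Lipschitz property of $d(\cdot,\mathcal{C})$ is in fact slightly more explicit than the semigroup proof, which stops at ``$\Lambda_\infty$ is neither CP nor coCP'' without spelling out the passage to the asymptotic statement.
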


\begin{remark}
    The general case of time-dependent generator is naturally far more involved then semigroup case in the sense that expressions $\Gamma_k (t)$ may exhibit nontrivial asymptotic behavior as $t\to\infty$, depending on properties of underlying functions $\gamma_k (t)$. In principle then coefficients $\lambda_k (t)$ may tend possibly to any real number (or diverge at all) and so the asymptotic map $\Lambda_\infty$, if exists, may have a range of properties. We mark this as an interesting topic for further study.
\end{remark}

\subsubsection{Eternally non-Markovian channel}

As a special example of the above time-dependent generator, consider
\begin{equation}\label{eq:eternallyNonMarkovian}
    L_t(\rho) = \frac{\alpha}{2} \sum_{i=1}^{2} (\sigma_i \rho \sigma_i - \rho) - \frac{\alpha}{2}\tanh{t} (\sigma_3 \rho \sigma_3 - \rho),
\end{equation}
where $\alpha > 0$. Resulting dynamical map, in case $\alpha = 1$, was explored in \cite{Hall2014} as an example of \emph{eternally non-Markovian evolution}, being always CP yet never CP-divisible (cf. also \cite{Nina,Fabio-Sergey} and \cite{Breuer2016,Li2018} for more insight into non-Markovianity). Indeed, dynamics governed by \eqref{eq:eternallyNonMarkovian} is not CP-divisible, regardless of $\alpha$, because of negativity of $-\tanh{t}$. Curiously, it also provides an interesting example of an evolution family which is eventually EB, but not eEB-divisible:

\begin{theorem}
    Let $\alpha > 1$. Then, an \emph{eternally non-Markovian} family $(\Lambda_t)_{t\in\reals_+}$ governed by generator \eqref{eq:eternallyNonMarkovian} is eventually EB, but not eEB-divisible.
\end{theorem}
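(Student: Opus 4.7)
My plan is to leverage the explicit Pauli-channel analysis of the preceding subsection. Substituting $\gamma_1(t) = \gamma_2(t) = \alpha/2$ and $\gamma_3(t) = -(\alpha/2)\tanh t$ into \eqref{eq:PauliChannelGamma} and \eqref{eq:PauliChannelEigenvalues}, I obtain $\Gamma_1(t) = \Gamma_2(t) = \alpha t/2$ and $\Gamma_3(t) = -(\alpha/2)\ln\cosh t$, so that
\begin{equation*}
    \lambda_1(t) = \lambda_2(t) = e^{-\alpha(t-\ln\cosh t)}, \qquad \lambda_3(t) = e^{-2\alpha t}.
\end{equation*}
Since $t - \ln\cosh t$ is strictly increasing (derivative $1-\tanh t > 0$) from $0$ at $t=0$ to $\ln 2$ at infinity, $\lambda_1$ and $\lambda_2$ decrease monotonically to $2^{-\alpha}$ while $\lambda_3 \to 0$. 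Hence $\Lambda_t$ converges uniformly to a Pauli channel $\Lambda_\infty$ with non-trivial Pauli eigenvalues $(2^{-\alpha}, 2^{-\alpha}, 0)$.

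For the eventual EB claim: in $d=2$ one has $\ebe{\matr{2}} = \ppte{\matr{2}}$ by Peres-Horodecki, and a direct Choi-matrix computation shows a Pauli channel is PPT iff $|\lambda_1|+|\lambda_2|+|\lambda_3|\leq 1$. Since $\alpha > 1$ yields $2 \cdot 2^{-\alpha} = 2^{1-\alpha} < 1$ strictly, $\Lambda_\infty$ sits in the interior of $\ebe{\matr{2}}$; combined with $\supnorm{\Lambda_t - \Lambda_\infty}\to 0$ this forces $\Lambda_t \in \ebe{\matr{2}}$ for all sufficiently large $t$.

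For failure of eEB-divisibility: all $\lambda_i(s)$ are strictly positive for finite $s$, so $\Lambda_s$ is invertible and the propagator $V_{t,s} = \Lambda_t \circ \Lambda_s^{-1}$ is diagonal in the same Pauli basis with eigenvalues $\zeta_i(t,s) = \lambda_i(t)/\lambda_i(s)$. In particular,
\begin{equation*}
    \zeta_1(t,s) = \zeta_2(t,s) = e^{-\alpha(t-s)}\left(\frac{\cosh t}{\cosh s}\right)^{\alpha}
\end{equation*}
is nonincreasing in $t$ on $[s,\infty)$ (log-derivative $\alpha(\tanh t - 1) \leq 0$), with $t\to\infty$ limit $2^{-\alpha}(e^s/\cosh s)^\alpha$. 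Because $e^s/\cosh s = 2/(1+e^{-2s})$ strictly grows from $1$ to $2$, and $2^{(\alpha-1)/\alpha} \in (1,2)$ for $\alpha > 1$, there is a unique threshold $s^* > 0$ with $e^{s^*}/\cosh s^* = 2^{(\alpha-1)/\alpha}$. For any $s > s^*$ and every $t \geq s$,
\begin{equation*}
    2\zeta_1(t,s) + \zeta_3(t,s) \geq 2 \cdot 2^{-\alpha}\left(\frac{e^s}{\cosh s}\right)^{\alpha} > 1,
\end{equation*}
violating the Pauli-channel EB criterion, so $V_{t,s}$ is never EB and the family cannot be eEB-divisible. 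The main subtlety is realizing that $\lim_{t\to\infty}\zeta_1(t,s)$ is not constant in $s$ but grows toward $1$ as $s\to\infty$; monotonicity in $t$ then upgrades a would-be ``eventually not EB'' statement to ``never EB'' for the relevant range of $s$.
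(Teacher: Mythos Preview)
Your proof is correct and follows the same overall skeleton as the paper's: compute the Pauli eigenvalues of $\Lambda_t$, identify the limit map, show it lies in the interior of $\ebe{\matr{2}}$, and then exhibit values of $s$ for which the propagator $V_{t,s}$ fails to be EB. The difference is in the tool used at the two decisive steps. The paper computes the Choi matrices $\choi{\Lambda_\infty}$ and $\choi{V_{t,s}}$ (and their partial transposes) explicitly and reads off their eigenvalues, showing that the minimal eigenvalue of $\choi{V_{t,s}}^{\ptrans}$ tends to $\tfrac{1}{2} - 2^{-\alpha}(e^{s}/\cosh s)^{\alpha}$, which is negative for large $s$. You instead invoke the octahedron criterion $|\lambda_1|+|\lambda_2|+|\lambda_3|\leqslant 1$ for PPT (hence EB) Pauli channels, which lets you bypass the Choi-matrix bookkeeping entirely. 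The two conditions are of course equivalent, and your threshold $s^*$ coincides with the paper's. Your added monotonicity observation (that $\zeta_1(\cdot,s)$ is nonincreasing) is a nice touch: it upgrades the paper's conclusion that $V_{t,s}$ is \emph{eventually} not EB to the statement that $V_{t,s}$ is \emph{never} EB once $s>s^*$, though either version suffices to negate eEB-divisibility.
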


\begin{proof}
    With vectorization techniques, one finds a spectral decomposition of the generator, $L_t = \sum_{i=1}^{4} \mu_i (t) \proj{i}$, for
    \begin{equation}
        \mu_1 (t) = 0, \quad \mu_2(t) = \mu_3 (t) = \alpha (\tanh{t}-1), \quad \mu_3 (t) = -2\alpha ,
    \end{equation}
    and projections
    \begin{align}
        &\proj{1}(\rho) = \proj{\omega}(\rho) = (\tr{\rho})\,\omega, \quad \proj{2}(\rho) = (\tr{\sigma_+ \rho})\,\sigma_- ,\\ &\proj{3}(\rho) = (\tr{\sigma_- \rho})\,\sigma_+ , \quad \proj{4} = \frac{1}{2}(\tr{\sigma_3 \rho})\,\sigma_3 , \nonumber
    \end{align}
    where a stationary state of $L_t$ is $\omega = \frac{1}{2}I$, a maximally mixed state and $\sigma_\pm = \frac{1}{2}(\sigma_1 \pm i \sigma_2)$ as earlier. Integrating directly, one obtains a spectral decomposition $\Lambda_t = \sum_{i=1}^{4} \lambda_i (t) \proj{i}$ with spectrum
    \begin{equation}
        \lambda_1 (t) = 1, \quad \lambda_2 (t) = \lambda_3 (t) = e^{-\alpha t}\cosh^\alpha{t}, \quad \lambda_4 (t) = e^{-2\alpha t},
    \end{equation}
    yielding
    \begin{equation}
        \Lambda_\infty (\rho) = \lim_{t\to\infty}\Lambda_t (\rho) =\proj{\omega} + \frac{1}{2^\alpha}(\proj{2}+\proj{3}).
    \end{equation}
    It is not hard to compute
    \begin{equation}
        \choi{\Lambda_\infty} = \left(\begin{array}{cccc}2^{-1} & 0 & 0 & 2^{-\alpha} \\ 0 & 2^{-1} & 0 & 0 \\ 0 & 0 & 2^{-1} & 0 \\ 2^{-\alpha} & 0 & 0 & 2^{-1}\end{array}\right),
    \end{equation}
    as well as
    \begin{equation}
        \spec{\choi{\Lambda_\infty}} = \spec{\choi{\Lambda_\infty}^{\ptrans}} = \{2^{-1},\,2^{-1}\pm 2^{-\alpha}\},
    \end{equation}
    where $2^{-1}$ is of multiplicity $2$, which yields $\choi{\Lambda_\infty},\choi{\Lambda_\infty}^{\ptrans} > 0$ and therefore $\Lambda_\infty$ is an interior point of $\ebe{\matr{2}}$ by Peres-Horodecki criterion and by Lemma \ref{lemma:IntEB} (in Appendix \ref{app:MathematicalSupplement}), i.e.~dynamics is eventually EB. Again, by vectorization one can obtain the propagator $V_{t,s} = \Lambda_t \circ \Lambda_{s}^{-1}$ (which we omit here) and its Choi matrix
    \begin{equation}
        \choi{V_{t,s}} = \left(\begin{array}{cccc}\frac{1}{2}(1+e^{2\alpha (s-t)}) & 0 & 0 & e^{\alpha (s-t)}\frac{\cosh^\alpha{t}}{\cosh^\alpha{s}} \\ 0 & \frac{1}{2}(1-e^{2\alpha(s-t)}) & 0 & 0 \\ 0 & 0 & \frac{1}{2}(1-e^{2\alpha(s-t)}) & 0 \\ e^{\alpha (s-t)}\frac{\cosh^\alpha{t}}{\cosh^\alpha{s}} & 0 & 0 & \frac{1}{2}(1+e^{2\alpha (s-t)})\end{array}\right).
    \end{equation}
    Upon closer examination in turns out that when $t\to\infty$, the minimal eigenvalues of both $\choi{V_{t,s}}$ and $\choi{V_{t,s}}^{\ptrans}$ tend to the same expression $2^{-1} - 2^{-\alpha}e^{\alpha s} \cosh^{-\alpha}{s}$, which eventually becomes negative for all $\alpha \geqslant 1$. Therefore $V_{t,s}$ is not EB for large $t$ and dynamics fails to be eventually EB-divisible.
\end{proof}

Contrary to other examples, the asymptotic map $\Lambda_\infty$ is not a projection but rather a linear combination of projections because of specific time dependence of generator's spectrum.

\begin{remark}
    We note that in the original case $\alpha = 1$ we have $0\in\spec{\choi{\Lambda_\infty}}$ and map $\Lambda_\infty$ lays on the boundary of $\ebe{\matr{2}}$; this results in the evolution being only asymptotically EB (and asymptotically PPT as well), yet not eventually EB-divisible (nor eventually PPT-divisible).
\end{remark}

\subsection{Phase covariant dynamics}

Here we consider the phase covariant evolution in $\matr{2}$ which is one of most important and well-studied cases. Consider a following generator
\begin{equation}\label{eq:PhaseCovQubitL}
    L = -\frac{i\Omega}{2}\comm{\sigma_z}{\,\cdot\,} + \gamma_+ L_+ + \gamma_- L_- + \gamma_z L_z ,
\end{equation}
where $\Omega, \gamma_{\pm},\gamma_z \in \reals$ and
\begin{equation}
    L_{\pm}(\rho) = \sigma_\pm \rho \sigma_\mp - \frac{1}{2}\acomm{\sigma_\mp \sigma_\pm}{\rho}, \quad L_z (\rho) = \sigma_3 \rho\sigma_3 - \rho,
\end{equation}
with raising and lowering operators $\sigma_\pm$ defined via $\sigma_\pm = \frac{1}{2}(\sigma_1 \pm i \sigma_2)$. Now, $L$ generates CP semigroup if $\gamma_\pm,\gamma_z \geqslant 0$.  To generate a semigroup of positive maps \cite{Filippov2020} one requires $\gamma_\pm\geqslant 0$ together with

\begin{equation}\label{eq:PhCovQubitRates}
    \gamma_z + \frac{1}{2}\sqrt{\gamma_+ \gamma_-} \geqslant 0 .
\end{equation}
The semigroup $(\Lambda_t)_{t\in\reals_+}$ generated by such $L$ can be shown to read
\begin{equation}\label{eq:PhaseCovLambda}
    \Lambda_t (\rho) = \left(\begin{array}{cc} T_{11}(t) \rho_{11} + T_{12}(t)\rho_{22} & e^{-(\Gamma_\mathrm{T} + i \Omega)t}\rho_{12} \\ e^{-(\Gamma_\mathrm{T} - i\Omega)t}\rho_{21} & T_{21}(t)\rho_{11} + T_{22}(t)\rho_{22} \end{array}\right),
\end{equation}
where the time-dependent stochastic matrix $T_{ij}(t)$ is defined by

\begin{eqnarray}
    T(t) = \left( \begin{array}{cc} p_+ + p_- e^{-\Gamma_\mathrm{L} t} & p_+ (1- e^{-\Gamma_\mathrm{L} t}) \\ p_- (1 - e^{-\Gamma_\mathrm{L} t}) & p_- + p_+ e^{-\Gamma_\mathrm{L} t}
    \end{array} \right) ,
\end{eqnarray}
with
\begin{equation}\label{eq:PhConQubit_abc}
    p_+ = \frac{\gamma_+}{\gamma_+ + \gamma_-}, \quad p_- = \frac{\gamma_-}{\gamma_+ + \gamma_-},
\end{equation}
and longitudinal $\Gamma_\mathrm{L}$ and transversal $\Gamma_\mathrm{T}$ relaxation rates read
\begin{equation}\label{eq:PhConQubit_GammaLT}
    \Gamma_\mathrm{L} = \gamma_+ + \gamma_-, \quad \Gamma_\mathrm{T} = \frac{1}{2}(\gamma_+ + \gamma_-) + 2\gamma_z .
\end{equation}
It is evident that for any initial $\rho_0 \in \matr{2}$, $\tr{\rho_0}=1$, the matrix $\Lambda_t (\rho_0)$ asymptotically tends to stationary state $\omega$,
\begin{equation}\label{eq:PhCovQubitOmega}
    \omega = \lim_{t\to\infty}\Lambda_t (\rho_0) = \operatorname{diag}{\{p_+,\,p_-\}},
\end{equation}
i.e.~$\Lambda_t \to \proj{\omega}$.  The asymptotic properties of \eqref{eq:PhaseCovLambda} can be summarized in terms of a following
\begin{theorem}\label{thm:PhaseCovQubit}
    The following statements hold for semigroup $(\Lambda_t)_{t\in\reals_+}$ governed by generator \eqref{eq:PhaseCovQubitL}:
    \begin{enumerate}
        \item\label{thm:PhaseCovQubit:1} It is CP and eventually EB if $\gamma_\pm > 0$, $\gamma_z \geqslant 0$.
        \item\label{thm:PhaseCovQubit:2} It is CP and asymptotically EB if one of the rates $\gamma_+$, $\gamma_-$ is $0$ and $\gamma_z = 0$.
        \item\label{thm:PhaseCovQubit:3} It is positive, then CP and eventually EB if $\gamma_+,\gamma_- > 0$ and $-\frac{1}{2}\sqrt{\gamma_+\gamma_-} < \gamma_z < 0$.
        \item\label{thm:PhaseCovQubit:4} It is positive yet never CP if $\gamma_+ = \gamma_- > 0$ and $\gamma_z = -\frac{1}{2}\gamma_+$.
    \end{enumerate}
\end{theorem}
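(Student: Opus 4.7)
The plan is to read off the spectrum and stationary state of $L$ from the explicit form \eqref{eq:PhaseCovLambda} and then to reduce each part either to Theorem \ref{thm:TheTheoremSemigroups} or to a direct inspection of $\choi{\Lambda_t}$. A short computation gives $\spec{L}=\{0,\,-\Gamma_{\mathrm L},\,-(\Gamma_{\mathrm T}\pm i\Omega)\}$ with the kernel spanned by $\omega=\operatorname{diag}\{p_+,p_-\}$; hence $\omega>0$ is equivalent to $\gamma_+,\gamma_->0$, the GKLS condition for CP is $\gamma_\pm,\gamma_z\geqslant 0$, and positivity of the semigroup is controlled by \eqref{eq:PhCovQubitRates}.

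Parts \ref{thm:PhaseCovQubit:1} and \ref{thm:PhaseCovQubit:3} are both direct applications of Theorem \ref{thm:TheTheoremSemigroups}. In (1), $\gamma_\pm>0$ gives $\omega>0$, while $\Gamma_{\mathrm L}>0$ and $\Gamma_{\mathrm T}=\tfrac12(\gamma_++\gamma_-)+2\gamma_z>0$ place the remaining spectrum in the open left half-plane, and $\gamma_z\geqslant 0$ secures GKLS. In (3), $\omega>0$ still holds, positivity follows from \eqref{eq:PhCovQubitRates}, and the key check is $\Gamma_{\mathrm T}>0$: combining $\gamma_z>-\tfrac12\sqrt{\gamma_+\gamma_-}$ with the AM--GM bound $\tfrac12(\gamma_++\gamma_-)\geqslant\sqrt{\gamma_+\gamma_-}$ yields the required strict positivity. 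Since $\gamma_z<0$ violates GKLS, $\Lambda_t$ fails CP near $t=0$, but eventual EB obtained from Theorem \ref{thm:TheTheoremSemigroups} automatically implies eventual CP.

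Part \ref{thm:PhaseCovQubit:2} sits outside the scope of Theorem \ref{thm:TheTheoremSemigroups} because, taking WLOG $\gamma_-=\gamma_z=0$, one has $\omega=\operatorname{diag}\{1,0\}$, which is rank-deficient. The asymptotic map $\Lambda_\infty(\rho)=(\tr{\rho})\,\omega$ has Choi matrix $I\otimes\omega$, a tensor product of positive semi-definite matrices, hence separable and EB, but on the boundary of $\ebe{\matr{2}}$. Convergence $\supnorm{\Lambda_t-\Lambda_\infty}\to 0$ follows from the vanishing of the remaining exponentials $e^{-\Gamma_{\mathrm L}t}$ and $e^{-(\Gamma_{\mathrm T}\pm i\Omega)t}$, so $d(\Lambda_t,\ebe{\matr{2}})\to 0$, i.e.~asymptotic EB. Complete positivity for every $t$ follows because all rates are non-negative and $L$ is still in GKLS form. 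To rule out \emph{eventual} EB I would compute $\choi{\Lambda_t}^{\ptrans}$ from \eqref{eq:PhaseCovLambda} and observe that a $2\times 2$ off-diagonal sub-block produces a strictly negative eigenvalue for every $t>0$, vanishing only in the limit.

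Part \ref{thm:PhaseCovQubit:4} is the technical crux. The saturation $\gamma_z=-\tfrac12\sqrt{\gamma_+\gamma_-}$ retains positivity via \eqref{eq:PhCovQubitRates} but forces $\Gamma_{\mathrm T}=0$, so the coherence $\rho_{12}\mapsto e^{-i\Omega t}\rho_{12}$ keeps unit modulus and $\Lambda_t$ never approaches its stationary projection. The plan is to compute $\choi{\Lambda_t}$ directly under $\gamma_+=\gamma_-$ and isolate the $2\times 2$ block coupling the basis elements $|00\rangle$ and $|11\rangle$: its diagonal entries are $\tfrac12(1+e^{-2\gamma_+t})<1$ while its off-diagonals have unit modulus, producing a strictly negative eigenvalue at every $t>0$. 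The main obstacle of the whole theorem is precisely this: establishing failure of CP \emph{uniformly} in $t$, not merely in a neighborhood of the origin where GKLS already fails.
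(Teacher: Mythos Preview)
Your proposal is correct and largely parallels the paper's argument, with one genuine methodological difference in part~\ref{thm:PhaseCovQubit:3}. The paper treats parts~\ref{thm:PhaseCovQubit:2}--\ref{thm:PhaseCovQubit:4} uniformly by writing down $\choi{\Lambda_t}$ explicitly and tracking its smallest eigenvalue; for part~\ref{thm:PhaseCovQubit:3} it computes
\[
\lambda_{\min}(0)=0,\qquad \lambda_{\min}'(0)=2\gamma_z<0,
\]
to exhibit the initial failure of CP, and then appeals to $\gamma_\pm>0$ for the eventual PPT/EB conclusion. Your route is more structural: the GKLS characterisation (non-GKLS generator $\Rightarrow$ semigroup not CP for all $t$, and by the composition law $e^{tL}=(e^{(t/n)L})^n$ this failure must occur on every initial interval) handles the ``not CP near $0$'' half, while Theorem~\ref{thm:TheTheoremSemigroups} delivers eventual EB once you verify $\Gamma_{\mathrm T}>0$ via AM--GM. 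This avoids the eigenvalue formula entirely and isolates exactly the spectral hypothesis that makes the abstract theorem fire, at the price of losing the paper's quantitative expression for $\lambda_{\min}(t)$.

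For part~\ref{thm:PhaseCovQubit:1} the two proofs are interchangeable (the paper cites Theorem~\ref{thm:LGKS}, you cite Theorem~\ref{thm:TheTheoremSemigroups}; both reduce to $\omega>0$ plus left half-plane spectrum). For parts~\ref{thm:PhaseCovQubit:2} and~\ref{thm:PhaseCovQubit:4} your sketched Choi-matrix computations coincide with the paper's: the $2\times2$ block you describe in~\ref{thm:PhaseCovQubit:4} has eigenvalues $\tfrac12(1+e^{-2\gamma_+t})\pm 1$, giving precisely the paper's $\lambda_{\min}(t)=\tfrac12(e^{-2\gamma_+t}-1)$, and in~\ref{thm:PhaseCovQubit:2} the paper finds $\min\spec\choi{\Lambda_t}^{\ptrans}=-e^{-t\gamma_-}$, confirming your planned conclusion that coCP fails for every finite~$t$.
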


\begin{proof}
Ad \ref{thm:PhaseCovQubit:1}. Note that when $\gamma_\pm > 0$ the stationary state \eqref{eq:PhCovQubitOmega} is strictly positive definite and Theorem \ref{thm:LGKS} applies. For the remaining statements, let us first compute Choi matrix
\begin{equation}
    \choi{\Lambda_t} = \left(\begin{array}{cccc} p_+ + p_- e^{-\Gamma_\mathrm{L}t} & 0 & 0 & e^{-(\Gamma_\mathrm{T} + i\Omega)t} \\ 0 & p_- (1-e^{-\Gamma_\mathrm{L}t}) & 0 & 0 \\ 0 & 0 & p_+ (1-e^{-\Gamma_\mathrm{L}t}) & 0 \\ e^{-(\Gamma_\mathrm{T} - 2i\Omega)t} & 0 & 0 & p_- + p_+ e^{-\Gamma_\mathrm{L}t}
 \end{array}\right),
\end{equation}
where we again use notation \eqref{eq:PhConQubit_abc} and \eqref{eq:PhConQubit_GammaLT}.

Ad \ref{thm:PhaseCovQubit:2}. Without loss of generality put $\gamma_+ = 0$; then \eqref{eq:PhCovQubitRates} forces $\gamma_z = 0$ as well. Resulting generator is still in GKLS form, so $\Lambda_t$ is CP. Minimal eigenvalue of $\choi{\Lambda_t}^{\ptrans}$ may be found to be simply $- e^{-t\gamma_-}$ which remains negative for all $t$, i.e.~$\Lambda_t$ never becomes coCP nor entanglement breaking. This is not surprising: if one of $\gamma_+$, $\gamma_-$ is $0$ the projection $\proj{\omega}$ lays on the boundary of cone $\ebe{\matr{2}}$ by Lemma \ref{lemma:PomegaProperties} which is being approached but never reached by $\Lambda_t$.

Ad \ref{thm:PhaseCovQubit:3}. With some effort, one can compute the minimal eigenvalue
\begin{align}
    \lambda_{\mathrm{min}}(t) &= \min{\spec{\choi{\Lambda_t}}}\\
    &= \frac{1}{2}\left[ 1+e^{-\Gamma_\mathrm{L}t} - \sqrt{\frac{(\gamma_+ - \gamma_-)^2}{\Gamma_\mathrm{L}^{2}}( 1-e^{-\Gamma_\mathrm{L}t})^2 + 4 e^{-(\Gamma_\mathrm{L} + 4\gamma_z)t}}\right] \nonumber
\end{align}
and then notice
\begin{equation}
    \lambda_{\mathrm{min}}(0) = 0, \quad \left.\frac{d}{dt}\right|_{t=0}\lambda_{\mathrm{min}}(t) = 2\gamma_z
\end{equation}
so whenever $\gamma_z < 0$ the minimal eigenvalue is monotonically decreasing in some right neighborhood of $t=0$ and becomes negative in consequence. Therefore, $\Lambda_t$, while still positive, cannot be CP everywhere. However, condition $\gamma_\pm > 0$ again assures it becomes PPT and entanglement breaking (by Peres-Horodecki criterion) in finite time.

Ad \ref{thm:PhaseCovQubit:4}. Finally, in the extreme case when $\gamma_+ = \gamma_-$ and $\gamma_z = -\frac{1}{2}\sqrt{\gamma_+ \gamma_-} = -\frac{1}{2}\gamma_+$ we have $\lambda_{\mathrm{min}}(t) = \frac{1}{2}(e^{-2\gamma_+ t}-1)$ which is negative over $(0,\infty)$ so $\Lambda_t$ is never CP (except for $t=0$).
\end{proof}

\section{Examples: beyond qubit case}
\label{sec:CasesBeyondQubit}

\subsection{Pure decoherence}

Consider the following time-dependent qudit generator

\begin{equation}   \label{dec}
    L^{\mathrm{dec}}_t(\rho) = - i\comm{H(t)}{\rho} +  \sum_{i,j=1}^d a_{ij}(t) \left( E_{ii} \rho E_{jj} - \delta_{ij} \frac 12 \acomm{E_{ii}}{\rho} \right) ,
\end{equation}
where the $d \times d$ Hermitian matrix $a_{ij}(t)$ is positive definite, and  $H(t) = \sum_i h_i(t) E_{ii}$.
One finds

\begin{equation}
    L^{\mathrm{dec}}_{t}(E_{ij}) = \ell_{ij}(t) E_{ij} ,
\end{equation}
with $\ell_{ii}(t)=0$, and 
\begin{equation}
\ell_{ij}(t) = -i \left(h_i(t) - h_j(t)\right) + a_{ij}(t) - \frac12 \left( a_{ii}(t) + a_{jj}(t)\right)  \,\, \text{when} \,\,  i \neq j .
\end{equation}
The corresponding CP-divisible dynamical map reads
\begin{equation}
    \phi_t(E_{ij}) = \lambda_{ij}(t) E_{ij} \ , \quad \lambda_{ij}(t) := \exp{\int\limits_0^t \ell_{ij}(s) ds} , 
\end{equation}
and hence it can be represented via the Schur product $\phi_t(\rho) = D(t) \odot \rho$ with the time-dependent matrix $D(t)$,
\begin{equation}
    D_{ii}(t) = 1, \quad D_{ij}(t) = \lambda_{ij}(t) \,\, \text{when} \,\, i \neq j .
\end{equation}
Hence, the evolution is asymptotically EB only if $D(t) \to I$ when $t\to\infty$. Any nontrivial residual coherence $\lambda_{ij}(\infty)$ prevents dynamics to be asymptotically EB (the same applies for PPT property). It is, therefore, clear that $\phi_t$ is eventually  
EB only if $D(t)$ becomes fully diagonal at finite time. This, however, may happen only if the map $\phi_t$ is non-invertible (i.e. the corresponding generator is singular), cf. \cite{Erling,Sagnik,Ujan}. 

\begin{corollary} The map $(\phi_t)_{t \geq 0}$ is eEB-divisible if and only if there exists $t_* < \infty$ such that $D(t) = I$ for $t \geqslant t_*$.     
\end{corollary}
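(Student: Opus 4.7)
The plan is to prove both directions of the equivalence by exploiting the Schur multiplier structure of pure decoherence. The central fact, which I would isolate first as a lemma, is that a Schur channel $\phi_D(\rho) = D \odot \rho$ with $D_{ii}=1$ lies in $\ebe{\matrd}$ if and only if $D = I$, equivalently, $\phi_D$ equals the completely dephasing channel $\rho \mapsto \sum_i E_{ii}\rho E_{ii}$. To prove the lemma I would note that the Choi matrix $\choi{\phi_D} = \sum_{i,j=1}^{d} D_{ij}\,(E_{ij}\otimes E_{ij})$ is supported on the $d$-dimensional subspace $V = \mathrm{span}\{e_i\otimes e_i : i=1,\ldots,d\}$ of $\complexes^d\otimes\complexes^d$. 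In any separable decomposition each positive product term must inherit this support, and the only product vectors contained in $V$ are scalar multiples of some $e_k\otimes e_k$; hence $\choi{\phi_D}$ must be diagonal in the basis $\{e_i\otimes e_i\}$, which forces $D_{ij}=0$ for all $i\neq j$ and, together with $D_{ii}=1$, gives $D=I$.

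For the ``$\Rightarrow$'' direction, assume $(\phi_t)_{t\in\reals_+}$ is eEB-divisible. By Corollary~\ref{thm:eEB} it is then eventually EB, so there exists $t_0 > 0$ with $\phi_t\in\ebe{\matrd}$ for every $t\geqslant t_0$. The lemma yields $D(t)=I$ for all such $t$, and one may take $t_*=t_0$.

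For the ``$\Leftarrow$'' direction, assume $D(t)=I$ for all $t\geqslant t_*$. Then on this range $\phi_t$ is the completely dephasing channel, whose Choi matrix $\sum_i E_{ii}\otimes E_{ii}$ is manifestly separable. To exhibit CP propagators that eventually become EB, I would define $V_{t,s}$ as the Schur channel with multipliers $N_{ii}(t,s)=1$, $N_{ij}(t,s)=D_{ij}(t)/D_{ij}(s)$ when $D_{ij}(s)\neq 0$ and $i\neq j$, and $N_{ij}(t,s)=0$ otherwise. A direct check on matrix units confirms $V_{t,s}\circ\phi_s = \phi_t$, while CP of $V_{t,s}$ in the intermediate range $s\leqslant t < t_*$ is inherited from CP-divisibility of $(\phi_t)$. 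For $t\geqslant t_*$ all off-diagonal entries of $N(t,s)$ vanish, so $V_{t,s}$ is itself the completely dephasing channel and hence EB, establishing eEB-divisibility.

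The main obstacle I anticipate is the non-uniqueness of propagators when $\phi_s$ is non-invertible (which happens precisely once some $D_{ij}(s)$ hits zero): the convention $N_{ij}(t,s)=0$ on that branch is essentially forced if we insist on retaining both the intertwining relation and the eventual EB property, and verifying that it still yields a valid CP propagator is the one non-routine check in the ``$\Leftarrow$'' direction.
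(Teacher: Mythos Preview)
Your proposal is correct and follows the same essential line as the paper, which in fact states the corollary without a formal proof: the surrounding discussion merely asserts that the Schur channel $\rho\mapsto D(t)\odot\rho$ is (asymptotically, eventually) EB only if $D(t)\to I$ (resp.\ $D(t)=I$ in finite time), and then declares the corollary ``clear''. Your explicit lemma --- that a Schur multiplier with unit diagonal is entanglement breaking iff $D=I$ --- is exactly the fact the paper invokes informally, and your range-of-Choi-matrix argument for it is clean and correct.

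Where you go beyond the paper is in the ``$\Leftarrow$'' direction. The paper says nothing about propagators here, implicitly identifying ``eventually EB'' with ``eEB-divisible'' for this class. You instead construct an explicit Schur propagator and flag the genuine subtlety: once some $D_{ij}(s)$ vanishes, $\phi_s$ is non-invertible and the propagator is not unique, so one must \emph{choose} $N_{ij}(t,s)=0$ on that branch to get an EB propagator. This is a real point the paper glosses over --- for instance, if the generator happened to vanish for $u\geqslant t_*$, the ODE-derived propagator $V_{t,s}$ for $s\geqslant t_*$ would be the identity, not EB, so the corollary only holds under the existential reading of the propagator in Definition~\ref{def:Xdivisibility}. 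Your awareness of this is an improvement over the paper's treatment. One minor remark: the CP check you mention for $s\leqslant t<t_*$ is not actually needed for eEB-divisibility (the definition only requires the propagator to exist and to be eventually EB), so you may drop it.
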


\subsection{Diagonally covariant dynamics}

A linear map $\phi$ is diagonally covariant if 
\begin{equation}
    \phi(UXU^\hadj) = U\phi(X)U^\hadj ,
\end{equation}
for all diagonal $d \times d$ unitary matrices $U$. Any diagonally covariant Markovian generator has the following form \cite{Chruscinski2022}
\begin{equation}
    L_t = L_{t}^{\mathrm{dec}} + L_{t}^{\mathrm{class}} ,
\end{equation}
where $L_{t}^{\mathrm{dec}}$ is defined in (\ref{dec}) and the \emph{classical} generator reads
\begin{equation}
    L_{t}^{\mathrm{class}}(\rho) = 
    \sum_{i\neq j}^d b_{ij}(t) \left( E_{ij} \rho E_{ji} - \frac 12 \acomm{E_{jj}}{\rho} \right) ,
\end{equation}
where the coefficients $b_{ij}(t)\geqslant 0$. It provides therefore generalization of pure decoherence dynamics. It is already clear from the analysis of the pure decoherence evolution that diagonally covariant dynamics is asymptotically EB if for any initial state $\rho$ the asymptotic state  $\phi_\infty(\rho)$ is diagonal, i.e. there is no asymptotic coherence. It is, therefore, clear that $\phi_t$ is eventually  
EB only if for all initial states the coherences of $\phi_t(\rho)$ are lost in finite time. 

\begin{corollary} The diagonally covariant map $(\phi_t)_{t \geq 0}$ is eEB-divisible if the dynamical map generated by the decoherence part of the generator $L^{\rm dec}_t$ 
is non-invertible and the asymptotic state of the evolution generated by the classical part $L_{t}^{\mathrm{class}}$ 
is of the full rank. 
\end{corollary}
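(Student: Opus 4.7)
The plan is to exploit diagonal covariance to decouple the action of $L_t$ on populations (diagonal matrix units $E_{ii}$) and coherences (off-diagonal matrix units $E_{ij}$, $i\neq j$), exactly as in the pure decoherence analysis above. A direct computation of $L_t^{\text{class}}(E_{ij})$ for $i\neq j$ shows that the term $E_{ab}\rho E_{ba}$ contributes nothing (it produces a diagonal unit, orthogonal to $E_{ij}$), whereas the anticommutator yields a nonpositive multiple of $E_{ij}$, giving $L_t^{\text{class}}(E_{ij}) = c_{ij}(t) E_{ij}$ with
\begin{equation*}
c_{ij}(t) = -\tfrac{1}{2}\Bigl(\textstyle\sum_{k\neq i} b_{ki}(t) + \sum_{k\neq j} b_{kj}(t)\Bigr) \leqslant 0,
\end{equation*}
while on the diagonal subalgebra $L_t^{\text{class}}$ reduces to a standard classical Markov generator with rates $b_{ij}(t)$. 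Adding $L_t^{\text{dec}}$ preserves this block structure: every $E_{ij}$ with $i\neq j$ remains an eigenvector of $L_t$ with eigenvalue $\ell_{ij}(t)+c_{ij}(t)$, and the diagonal subalgebra is invariant. Consequently every propagator $V_{t,s}$ inherits the same decoupled form: a classical stochastic transition matrix $T(t,s)$ on diagonals together with scalar multiplication of each coherence by $\Lambda_{ij}(t,s) = \exp\int_s^t (\ell_{ij}(u)+c_{ij}(u))\,du$.

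The key step is to show that $V_{t,s}$ becomes ``classical-to-classical'' in finite time. Assembling the Choi matrix from this decoupled action yields
\begin{equation*}
\choi{V_{t,s}} = \sum_{i,k} T_{ki}(t,s)\, E_{ii}\otimes E_{kk} + \sum_{i\neq j} \Lambda_{ij}(t,s)\, E_{ij}\otimes E_{ij},
\end{equation*}
so separability is obstructed entirely by the off-diagonal factors $\Lambda_{ij}(t,s)$. Reading the non-invertibility hypothesis on $L_t^{\text{dec}}$ through the preceding pure-decoherence corollary, there is a finite $t^{*}$ past which all decoherence factors $\lambda_{ij}(t)$ vanish; since $c_{ij}(t)\leqslant 0$ can only reinforce this decay, $\Lambda_{ij}(t) = 0$ for $t\geqslant t^{*}$ and hence $\Lambda_{ij}(t,s) = 0$ for every $s$ and every $t\geqslant \max(s,t^{*})$. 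At such times the Choi matrix collapses to $\sum_{i,k} T_{ki}(t,s)\, E_{ii}\otimes E_{kk}$, a nonnegative combination of pure product operators, which is manifestly separable. Thus $V_{t,s}\in\ebe{\matrd}$ for all $t\geqslant\max(s,t^{*})$, establishing eEB-divisibility.

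The main obstacle I anticipate is giving a consistent meaning to the propagator once $\phi_s$ has developed a nontrivial kernel: for $s\geqslant t^{*}$ the coherence factors $\Lambda_{ij}(s)$ vanish, so $\phi_s$ is singular on the off-diagonal subspace and $\phi_s^{-1}$ does not exist in the usual sense. This is exactly where the full-rank asymptotic-state hypothesis on $L_t^{\text{class}}$ enters: it ensures the classical stochastic matrix $T(t,s)$ is nondegenerate (and stays in the interior of the simplex of stochastic matrices) for all $s$, so one can unambiguously define $V_{t,s}$ as the direct sum of $T(t,s)T(s,0)^{-1}$ on the diagonal subalgebra with the zero map on coherences. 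This extension is the unique CPTP map satisfying $\phi_t = V_{t,s}\circ\phi_s$ on the range of $\phi_s$, and with it in hand the computation of the previous paragraph shows that all propagators are EB from $t^{*}$ onwards, uniformly in $s$, completing the proof.
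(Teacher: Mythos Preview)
The paper states this corollary without proof, leaving it as a consequence of the surrounding discussion, so there is no detailed argument to compare against. Your decoupling of $L_t$ into its action on diagonals versus off-diagonals is correct and matches the paper's implicit reasoning; the computation of $c_{ij}(t)\leqslant 0$, the resulting form of $\choi{V_{t,s}}$, and the observation that it collapses to a manifestly separable matrix once all coherence factors vanish are all sound, and this is essentially the mechanism the paper has in mind.

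Where your argument is shakier is in the role you assign to the full-rank hypothesis. You invoke it to guarantee that the classical transition matrix is invertible so that the propagator can be defined past the singular time $t^{*}$. But the classical propagator $T(t,s)$ is a time-ordered exponential of a bounded (Kolmogorov) generator with $b_{ij}(t)\geqslant 0$, and is therefore automatically invertible for all finite $t,s$ regardless of the rank of the asymptotic state; the full-rank condition constrains $\lim_{t\to\infty}T(t,s)$, not its finite-time invertibility. (Your expression ``$T(t,s)T(s,0)^{-1}$'' also looks like a slip --- presumably $T(t,0)T(s,0)^{-1}$, which is just $T(t,s)$.) In fact, once the coherences are gone, your own Choi-matrix computation uses only $T_{ki}(t,s)\geqslant 0$, which follows from $b_{ij}(t)\geqslant 0$ alone. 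So either the full-rank hypothesis is redundant for the argument you actually give, or it is doing work you have not identified --- e.g.\ placing the limiting propagator in the \emph{interior} of $\ebe{\matrd}$ in the spirit of Theorem~\ref{thm:TheTheoremGeneral} and Lemma~\ref{lemma:PomegaProperties}, or guaranteeing uniqueness of the CPTP extension of $V_{t,s}$ beyond the range of $\phi_s$. The paper is itself silent here, so the imprecision may lie in the corollary's formulation as much as in your proof; still, you should flag the issue rather than assign the hypothesis a role it does not actually play.
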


\subsection{Generalized depolarizing channel}

Consider a generator
\begin{equation}\label{eq:GenDepL}
    L(\rho) = \gamma (\omega \tr{\rho} - \rho )
\end{equation}
where $\gamma > 0$ and $\omega\in\matrd^+$, $\tr{\omega}=1$. By direct check, $L$ nullifies the trace and $\omega$ is its eigenvector for eigenvalue $0$. The resulting semigroup is given by expression
\begin{equation}\label{eq:GenDepChannel}
    \Lambda_t (\rho) = e^{tL}(\rho) = e^{-\gamma t}\rho + (1-e^{-\gamma t})\omega \tr{\rho}.
\end{equation}

\begin{theorem}
    Semigroup \eqref{eq:GenDepChannel} is eventually EB when $\omega > 0$ and asymptotically EB when $0 \in \spec{\omega}$.
\end{theorem}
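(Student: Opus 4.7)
The plan is to observe that in both cases the semigroup admits the closed form $\Lambda_t = e^{-\gamma t}\id{} + (1-e^{-\gamma t})\mathsf{P}_\omega$, so the asymptotic map is simply $\Lambda_\infty = \mathsf{P}_\omega$. A direct subtraction gives
\begin{equation}
    \Lambda_t - \mathsf{P}_\omega = e^{-\gamma t}(\id{} - \mathsf{P}_\omega),
\end{equation}
hence $\supnorm{\Lambda_t - \mathsf{P}_\omega} = e^{-\gamma t}\supnorm{\id{} - \mathsf{P}_\omega}\to 0$ as $t\to\infty$. This uniform convergence to $\mathsf{P}_\omega$ drives both assertions.

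For the first assertion, assume $\omega > 0$. Splitting an arbitrary $\rho \in \matrd$ as $\rho = (\tr{\rho})\omega + \sigma$ with $\tr{\sigma} = 0$ one finds $L(\rho) = -\gamma\sigma$, so $\spec{L} = \{0, -\gamma\}$ with the $0$ eigenvalue of multiplicity one and kernel $\complexes\omega$. All assumptions of Theorem \ref{thm:TheTheoremSemigroups} are then satisfied (equivalently, Theorem \ref{thm:LGKSsemigroup} applies since $L$ is in GKLS form with unique strictly positive stationary state $\omega$), yielding that $(\Lambda_t)_{t\in\reals_+}$ is eventually EB. Alternatively and more concretely, Lemma \ref{lemma:PomegaProperties} places $\mathsf{P}_\omega$ in the interior of $\ebe{\matrd}$, so any open ball of some radius $r>0$ around $\mathsf{P}_\omega$ lies inside $\ebe{\matrd}$; picking $t_0$ with $e^{-\gamma t_0}\supnorm{\id{}-\mathsf{P}_\omega} < r$ guarantees $\Lambda_t \in \ebe{\matrd}$ for all $t\geqslant t_0$.

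For the second assertion, assume $0 \in \spec{\omega}$, so $\omega$ is merely positive semidefinite. The Choi matrix of $\mathsf{P}_\omega$ is
\begin{equation}
    \choi{\mathsf{P}_\omega} = \sum_{i,j=1}^{d} E_{ij}\otimes \mathsf{P}_\omega(E_{ij}) = \sum_{i=1}^{d} E_{ii}\otimes\omega = I\otimes\omega,
\end{equation}
which is manifestly separable, so $\mathsf{P}_\omega \in \ebe{\matrd}$ regardless of the spectrum of $\omega$. Consequently
\begin{equation}
    d(\Lambda_t,\ebe{\matrd}) \leqslant \supnorm{\Lambda_t - \mathsf{P}_\omega} = e^{-\gamma t}\supnorm{\id{} - \mathsf{P}_\omega} \xrightarrow[t\to\infty]{} 0,
\end{equation}
and $(\Lambda_t)_{t\in\reals_+}$ is asymptotically EB. The reason the argument of the first case fails here is that when $\omega$ is singular, $\mathsf{P}_\omega$ lies on the boundary of $\ebe{\matrd}$ by Lemma \ref{lemma:PomegaProperties}, so no open neighborhood of $\mathsf{P}_\omega$ is contained in the cone and the inclusion $\Lambda_t \in \ebe{\matrd}$ cannot in general be forced in finite time. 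There is no significant obstacle in either step; the only subtlety is recognising that separability of $\choi{\mathsf{P}_\omega} = I\otimes\omega$ holds even at the boundary and drives the asymptotic claim.
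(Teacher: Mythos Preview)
Your proof is correct and follows essentially the same route as the paper: both identify the closed form $\Lambda_t = \proj{\omega} + e^{-\gamma t}(\id{}-\proj{\omega})$ (the paper writes this as $\proj{\omega}+e^{-\gamma t}\proj{0}$ after a spectral discussion of $L$), observe $\Lambda_t \to \proj{\omega}$, and then invoke the interior criterion for $\omega>0$. Your treatment of the $0\in\spec{\omega}$ case is in fact more explicit than the paper's own proof, which shows the convergence but does not spell out that $\choi{\proj{\omega}}=I\otimes\omega$ is separable and hence $d(\Lambda_t,\ebe{\matrd})\to 0$; this is the right way to close that gap.
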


\begin{proof}
For convenience let us find the spectral decomposition of $L$ first. Assume $a\in\matrd$ is an eigenvector, $L(a) = \lambda a$, for $\lambda\neq 0$. Since $L$ nullifies the trace, $a$ necessarily lays in the subspace of all traceless matrices, or in the kernel of trace functional, $\ker{\mathrm{tr}} = \mathrm{tr}^{-1}(\{0\})$. Then, $\ker{\tr{}}$ is an eigenspace of $L$ corresponding to eigenvalue $-\gamma$ of multiplicity $\dim{\ker{\mathrm{tr}}} = d^2 - 1$. Hence, generator $L$ admits spectral decomposition
\begin{equation}
    L = 0\cdot\proj{\omega} -\gamma \proj{0}
\end{equation}
for $\proj{\omega} = (\tr{\cdot})\, \omega$ and $\proj{0}$ a projection onto $\ker{\mathrm{tr}}$. Note, that $L$ is not a \emph{normal} operator and projections $\proj{\omega}$ and $\proj{0}$ are not mutually orthogonal: $\omega$ is not proportional to identity, but is a linear combination of $I$ and traceless matrices (in fact, one quickly checks that, for example, $\omega = \frac{1}{d}(\tr{\omega})\, I + \sum_{i=1}^{d-1}\beta_i (E_{11}-E_{ii})$ for $\beta_i = \frac{1}{d}\tr{\omega}-\omega_{i+1}$ where $\omega_i$ are eigenvalues of $\omega$). This allows to re-express \eqref{eq:GenDepChannel},
\begin{equation}
    \Lambda_t = \proj{\omega} + e^{-\gamma t}\proj{0}
\end{equation}
so clearly $\Lambda_t \to \proj{\omega}$ as $t\to\infty$. Then, if $\omega > 0$ we know from Theorem \ref{thm:LGKSsemigroup} that $\proj{\omega}$ lays inside $\ebe{\matrd}$ and $\Lambda_t$ becomes EB.
\end{proof}

\begin{theorem}
    The lower bound $\atime{PPT}$ for EB arrival time of family \eqref{eq:GenDepL} is
    \begin{equation}\label{eq:GenDepLtau}
        \atime{PPT} = \frac{1}{\gamma}\ln{\left[1+\frac{1}{2}\left(\min_{i<j}{\omega_i \omega_j}\right)^{-\frac{1}{2}}\right]},
    \end{equation}
    where $\omega_i > 0$ are eigenvalues of $\omega$. Moreover, $\atime{PPT}$ attains the lowest possible value
\begin{equation}\label{eq:GenDepLtauMin}
    \atime{PPT,min.} =\min_{\omega}{\atime{PPT}} = \frac{1}{\gamma}\ln{\frac{d+2}{2}},
\end{equation}
when $\omega = \frac{1}{d} I$, i.e.~when evolution tends to a maximally mixed state.
\end{theorem}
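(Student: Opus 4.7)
The plan is to identify $\atime{PPT}$ as the first time at which the partially transposed Choi matrix of $\Lambda_t$ becomes positive semi-definite, and then optimise the resulting expression over admissible stationary states $\omega$. First I observe that $\Lambda_t = e^{-\gamma t}\id{} + (1-e^{-\gamma t})\Pi_\omega$, with $\Pi_\omega(\rho)=\omega\tr{\rho}$, is a convex combination of CP maps and therefore completely positive for every $t\geqslant 0$. Hence the PPT condition collapses to positivity of $\choi{\Lambda_t}^{\ptrans}$ alone, and no separate check of $\choi{\Lambda_t}$ is required.

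Next I compute the Choi matrix explicitly. Using $\Lambda_t(E_{ij})=e^{-\gamma t}E_{ij}+(1-e^{-\gamma t})\delta_{ij}\omega$ one gets
\begin{equation}
\choi{\Lambda_t} = e^{-\gamma t}\sum_{i,j}E_{ij}\otimes E_{ij} + (1-e^{-\gamma t})\, I\otimes\omega,
\end{equation}
so applying the partial transpose in the basis diagonalising $\omega$ yields
\begin{equation}
\choi{\Lambda_t}^{\ptrans} = e^{-\gamma t} F + (1-e^{-\gamma t})\, I\otimes\omega,
\end{equation}
where $F=\sum_{ij}E_{ij}\otimes E_{ji}$ is the swap operator. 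The key observation is that this matrix respects the decomposition of $\complexes^d\otimes\complexes^d$ into the diagonal subspace $\mathrm{span}\{|ii\rangle : i\}$ and the two-dimensional pieces $\mathrm{span}\{|ij\rangle,|ji\rangle\}$ for $i<j$, so the PPT property reduces to positivity on each invariant block separately.

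On the diagonal subspace every eigenvalue equals $e^{-\gamma t}+(1-e^{-\gamma t})\omega_i>0$ and imposes no constraint. On each $2\times 2$ block the matrix reads
\begin{equation}
\begin{pmatrix}(1-e^{-\gamma t})\omega_j & e^{-\gamma t}\\ e^{-\gamma t} & (1-e^{-\gamma t})\omega_i\end{pmatrix},
\end{equation}
whose positivity is governed by the single determinantal inequality $(1-e^{-\gamma t})^2\omega_i\omega_j \geqslant e^{-2\gamma t}$. Solving this for $t$, taking the worst-case pair $(i,j)$ (equivalently, the pair minimising $\omega_i\omega_j$), and rearranging gives the closed-form expression \eqref{eq:GenDepLtau}.

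The remaining task is the optimisation $\min_\omega \atime{PPT}$, which is equivalent to maximising $\min_{i<j}\omega_i\omega_j$ over probability vectors $(\omega_1,\ldots,\omega_d)$ with strictly positive entries. I would handle this by a short symmetrisation argument: replacing any two unequal entries by their common average strictly increases their product while leaving the trace constraint intact, and a monotonicity check confirms the overall minimum-product can only grow, so at the optimum all entries are equal. This forces $\omega_i=1/d$, hence $\min_{i<j}\omega_i\omega_j = 1/d^2$, and substitution into the previous formula delivers \eqref{eq:GenDepLtauMin}. The only mildly delicate point is this symmetrisation, since the functional $\min_{i<j}\omega_i\omega_j$ is non-smooth; one must verify that averaging two unequal entries does not cause some other pair (previously inactive) to become the new minimiser with a smaller product, which is however immediate from the fact that only products involving the two modified indices change, and both strictly increase.
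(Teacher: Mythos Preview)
Your computation of $\atime{PPT}$ is essentially the paper's own: the paper also writes $\choi{\Lambda_t}^{\ptrans}$ as the swap plus $I\otimes\omega$ and obtains precisely the block eigenvalues you describe (it does so by factoring the characteristic polynomial, which amounts to the same invariant-block decomposition you exploit directly). So that part is fine and matches.

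Where you diverge is in the optimisation step. The paper proves the auxiliary fact $\max_{\vec p}\min_{i<j}p_ip_j = d^{-2}$ by a clean induction on $d$ (its Lemma~\ref{lemma:ProbVector}): one peels off the largest coordinate, rescales the remaining $d-1$ entries to a probability vector, applies the inductive hypothesis, and bounds the residual factor. Your symmetrisation route is a legitimate alternative, and arguably more conceptual, but the justification you give is not quite right. If you average two unequal entries $\omega_a<\omega_b$ to $m=(\omega_a+\omega_b)/2$, the products $\omega_b\omega_k$ for $k\notin\{a,b\}$ \emph{decrease} to $m\omega_k$, contrary to your claim that all modified products strictly increase. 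The argument is nonetheless easy to repair: each such product still satisfies $m\omega_k>\omega_a\omega_k\geqslant \min_{i<j}\omega_i\omega_j$, so the overall minimum cannot drop. To get a \emph{strict} increase (needed to conclude that any non-uniform $\omega$ is suboptimal) it is cleanest to average specifically the two smallest entries; then every pairwise product is strictly larger than the old minimum $\omega_1\omega_2$. With that refinement, your symmetrisation argument is complete and slightly more elementary than the paper's inductive lemma.
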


\begin{proof}
Applying \eqref{eq:GenDepChannel} one quickly finds
\begin{equation}
    \choi{\Lambda_t}^{\ptrans} = e^{-\gamma t} \sum_{i,j=1}^{d^2}E_{ij}\otimes E_{ji} + (1-e^{-\gamma t})\cdot I \otimes \omega .
\end{equation}
After some work, characteristic polynomial of $\choi{\Lambda_t}^{\ptrans}$ can be checked to read
\begin{align}
    \det\big[\choi{&\Lambda_t}^{\ptrans} - \lambda I\big] = \\
     &e^{-\gamma t} \prod_{i=1}^{d}(1 + g_i - \lambda e^{\gamma t}) \prod_{i<j}\left[(g_i-\lambda e^{\gamma t})(g_j - \lambda e^{\gamma t})-1\right] \nonumber
\end{align}
for $g_i = (e^{\gamma t}-1)\omega_i$. From this we obtain general expressions for eigenvalues,
\begin{equation}\label{eq:lambdai}
    \lambda_i (t) = e^{-\gamma t} + (1-e^{-\gamma t})\omega_i 
\end{equation}
for $1\leqslant i \leqslant d$ and
\begin{equation}\label{eq:lambdaij}
    \lambda_{ij}(t) = \frac{1}{2}\left[(1-e^{-\gamma t})(\omega_i+\omega_j) \pm \sqrt{4e^{-2\gamma t}+(1-e^{-\gamma t})^2 (\omega_i-\omega_j)^2}\right]
\end{equation}
for $i<j$. One easily verifies that $\lim_{t\to\infty}\lambda_i(t),\,\lim_{t\to\infty}\lambda_{ij}(t) \in \spec{\omega}$ so if $\omega > 0$ then $\choi{\Lambda_t}^{\ptrans}$ becomes positive definite and semigroup becomes PPT and EB. On the other hand, if, say $\omega_1 = 0$, then eigenvalues $\lambda_{1j}(t)$ of form \eqref{eq:lambdaij} are negative and $\lambda_{1j}(t)\to 0$, i.e.~$\Lambda_t$ is only asymptotically PPT. Since the semigroup is completely positive, PPT arrival time $\tau$ is
\begin{equation}
    \tau = \max_{i<j}{t_{ij}},
\end{equation}
where $t_{ij}$ is a root of eigenvalues $\lambda_{ij}$ given by \eqref{eq:lambdaij} (we ignore eigenvalues of a form \eqref{eq:lambdai} since they are always positive). After some algebra, $\tau$ is found to be in the claimed form \eqref{eq:GenDepLtau}. Lemma \ref{lemma:ProbVector} (Appendix \ref{app:MathematicalSupplement}) then shows that its smallest possible value \eqref{eq:GenDepLtauMin} is attained when $\omega = \frac{1}{d} I$. 
\end{proof}

\subsection{Detailed balance}

An important and well-studied class of open quantum systems are those weakly interacting with a heat bath and satisfying the condition of quantum detailed balance (see \cite{Alicki2006a,Breuer2002} and references therein). They are characterized by generators of a form

\begin{align}\label{eq:LdetailedBalance}
    L(\rho) = -i\comm{H}{\rho} + \sum_{\alpha}\sum_{w \geqslant 0} \left[V_{\alpha w}\rho V_{\alpha w}^{\hadj} - \frac{1}{2}\acomm{V_{\alpha w}^{\hadj}V_{\alpha w}}{\rho}\right. \\
    +e^{-\beta w}\left. \left(V_{\alpha w}^{\hadj}\rho V_{\alpha w} - \frac{1}{2}\acomm{V_{\alpha w}V_{\alpha w}^{\hadj}}{\rho}\right)\right] \nonumber
\end{align}
where $\beta$ is an inverse temperature of the bath, $H=H^\hadj$ is an effective (physical) Hamiltonian of the system and $w$ are the \emph{Bohr frequencies} of $H$, i.e.~$w = \epsilon - \epsilon^\prime$ for some $\epsilon,\epsilon^\prime \in \spec{H}$. Operators $V_{\alpha w}$ are defined by relation $e^{iHt}V_{\alpha w}e^{-iHt} = e^{-i w t}V_{\alpha w}$; the presence of $e^{-\beta w}$ term is due to the KMS (Kubo-Martin-Schwinger) condition imposed on autocorrelation functions of the reservoir. One shows that whenever $0\in\spec{L}$ is of multiplicity $1$ the generator \eqref{eq:LdetailedBalance} satisfies so-called \emph{quantum detailed balance} condition with respect to a unique stationary Gibbs state
\begin{equation}
    \rho_\beta = \frac{e^{-\beta H}}{\tr{e^{-\beta H}}},
\end{equation}
which is also a stationary state of a semigroup $(e^{tL})_{t\in\reals_+}$. Moreover, we have $e^{tL}(\rho_0)\to \rho_\beta$ for any $\rho_0$ as $t\to\infty$, i.e.~a system returns to equilibrium determined by $\beta$. Then we have

\begin{theorem}
    Semigroup $(e^{tL})_{t\in\reals_+}$ generated by $L$ of form \eqref{eq:LdetailedBalance} is eventually EB and eEB-divisible.
\end{theorem}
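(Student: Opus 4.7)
The plan is to reduce this statement to a direct application of Theorem \ref{thm:LGKSsemigroup}, whose three hypotheses---that $L$ is a GKLS generator, that $0 \in \spec{L}$ is simple, and that the unique stationary state is strictly positive definite---all follow almost immediately from the structural features of \eqref{eq:LdetailedBalance} summarized just above.

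First I would observe that \eqref{eq:LdetailedBalance} is already written in the canonical GKLS form: a Hamiltonian commutator $-i\comm{H}{\rho}$ plus a dissipative sum over $\alpha$ and Bohr frequencies $w\geqslant 0$ of two Lindblad pieces, one with jump operator $V_{\alpha w}$ and one with $V_{\alpha w}^\hadj$ weighted by the strictly positive KMS factor $e^{-\beta w}$. All dissipative coefficients are non-negative, so $L$ is a bona fide GKLS generator and the semigroup $(e^{tL})_{t\in\reals_+}$ it generates is CP and trace preserving.

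Next I would invoke the assumption, explicitly recorded in the paragraph preceding the theorem, that $0\in\spec{L}$ is simple, so $\ker{L}=\complexes\rho_\beta$. Strict positive definiteness of $\rho_\beta$ is then immediate from $H=H^\hadj$: the spectral theorem gives $e^{-\beta H} > 0$, hence $\rho_\beta = e^{-\beta H}/\tr{e^{-\beta H}} > 0$. Setting $\omega = \rho_\beta$, every hypothesis of Theorem \ref{thm:LGKSsemigroup} is satisfied, and that theorem yields both eventual EB-ness and eEB-divisibility in a single stroke; equivalently, one could first obtain eventual EB-ness from Theorem \ref{thm:LGKSsemigroup} and then upgrade it to eEB-divisibility via Theorem \ref{thm:SemigroupEB}, which makes the two notions coincide in the semigroup case.

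The main obstacle is essentially non-existent: this theorem is really a specialization of the general semigroup result of Section \ref{sec:CPdivisible} to the physically important detailed balance setting. The only point where one might want to exercise some care is the simplicity of $0\in\spec{L}$, which in concrete applications is usually underwritten by an irreducibility argument on the family $\{V_{\alpha w}\}$ coupled with the KMS condition; since that simplicity is already assumed in the standing hypotheses, no additional work is needed, and the theorem follows at once.
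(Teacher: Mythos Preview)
Your proposal is correct and follows essentially the same route as the paper: verify that the Gibbs state $\rho_\beta = e^{-\beta H}/\tr{e^{-\beta H}}$ is strictly positive definite (since $\spec{e^{-\beta H}}=\{e^{-\beta E}:E\in\spec{H}\}\subset(0,\infty)$), and then invoke Theorem~\ref{thm:LGKSsemigroup}. The paper's own proof is just these two sentences; your additional remarks about the GKLS structure and the alternative appeal to Theorem~\ref{thm:SemigroupEB} are correct but not needed.
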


\begin{proof}
    Clearly, $\spec{e^{-\beta H}} = \{e^{-\beta E} : E \in \spec{H}\}$ is positive and so the stationary state $\omega = \rho_\beta$ is strictly positive definite. Thus, Theorem \ref{thm:LGKSsemigroup} applies.
\end{proof}

\subsection{Periodic generators in Weak Coupling Limit}

Authors of \cite{Alicki2006,Szczygielski2013,Szczygielski2014} considered an open quantum system of dimension $d$, weakly coupled to external thermal reservoir and driven by some external energy source such that its self-Hamiltonian $H_t$ is periodic with period $T$. In such case, family $u_t$ of unitary maps generated by $H_t$, i.e.~satisfying Schr\"{o}dinger equation $\dot{u}_t = -i H_t u_t$ can be, by virtue of celebrated Floquet's theorem \cite{Chicone2006}, put in a product form
\begin{equation}
    u_t = p_t e^{-i\bar{H}t}
\end{equation}
where $p_t$ is unitary and periodic (with period $T$) and $\bar{H}$ is Hermitian. It was shown that the reduced density matrix $\rho_t$ of such system is governed, in Weak Coupling Limit regime and under some common approximations, by a time-local Markovian Master Equation $\dot{\rho}_t = L_t(\rho_t)$, where $L_t$ is time-periodic generator
\begin{equation}
    L_t = -i \comm{H_t}{\cdot} + P_t \circ K \circ P_{t}^{-1},
\end{equation}
with $K$ being a GKLS generator and $P_t (a) = p_t a  p_{t}^{\hadj}$; here, $H_t$ appearing in the commutator is to be understood as a properly ``renormalized'', physical Hamiltonian, including Lamb shift corrections due to influence from the reservoir. The dynamical map governed by such $L_t$ may be then shown to be of a product form
\begin{equation}\label{eq:LtPeriodiWCLLambda}
    \Lambda_t = P_t \circ e^{tX}, \quad X = -i \comm{\bar{H}}{\cdot} + K ,
\end{equation}
also inferred by Floquet's theorem due to periodicity of $L_t$. Here, both maps $P_t$ and $e^{tX}$ are CP and trace preserving, $X$ is of GKLS form and commutes with a derivation $-i \comm{\bar{H}}{\cdot}$.

\begin{theorem}\label{thm:LtPeriodicWCL}
    If $\ker{K} = \complexes\omega$ and $\omega > 0$ then a family \eqref{eq:LtPeriodiWCLLambda} is eventually EB and eEB-divisible.
\end{theorem}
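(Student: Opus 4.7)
The plan is to leverage the product structure of $\Lambda_t$ to reduce the claim to the already-established semigroup result in Theorem \ref{thm:LGKSsemigroup}, combined with the mapping cone property of $\ebe{\matrd}$.

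First I would exploit the commutativity of the derivation $-i\comm{\bar{H}}{\cdot}$ with the GKLS generator $K$ to split the semigroup $e^{tX}$ as
\begin{equation}
    e^{tX} = U_t \circ e^{tK}, \qquad U_t(\cdot) = e^{-i\bar{H}t}(\cdot) e^{i\bar{H}t},
\end{equation}
so that $U_t$ is a unitary conjugation, hence completely positive and trace preserving for every $t$. Next, since $\ker{K} = \complexes\omega$ with $\omega > 0$ and $K$ is GKLS, Theorem \ref{thm:LGKSsemigroup} applies to $(e^{tK})_{t\in\reals_+}$: there exists $t_0 > 0$ such that $e^{tK}\in\ebe{\matrd}$ for every $t \geqslant t_0$.

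For the \emph{eventually EB} claim, I would write
\begin{equation}
    \Lambda_t = P_t \circ U_t \circ e^{tK}
\end{equation}
and observe that both $P_t$ and $U_t$ are CP. Therefore, for $t \geqslant t_0$, $\Lambda_t$ is a CP--CP composition acting on an EB map, so the mapping cone property of $\ebe{\matrd}$ yields $\Lambda_t \in \ebe{\matrd}$.

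For \emph{eEB-divisibility}, fix $s \geqslant 0$. Using $\Lambda_s^{-1} = e^{-sX}\circ P_{s}^{-1}$ together with the semigroup property of $e^{tX}$, the propagator simplifies to
\begin{equation}
    V_{t,s} = \Lambda_t \circ \Lambda_{s}^{-1} = P_t \circ e^{(t-s)X} \circ P_{s}^{-1} = P_t \circ U_{t-s}\circ e^{(t-s)K}\circ P_{s}^{-1}.
\end{equation}
Since $P_{s}^{-1}$ is again a unitary conjugation (by $p_{s}^{\hadj}$), it is CP, and the same is true for $P_t$ and $U_{t-s}$. Taking $\Delta(s) = s + t_0$ ensures $t - s \geqslant t_0$ whenever $t \geqslant \Delta(s)$, so $e^{(t-s)K} \in \ebe{\matrd}$ and thus $V_{t,s} \in \ebe{\matrd}$ by the mapping cone property. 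By Definition \ref{def:PPTdivisibility}, this gives eEB-divisibility of $(\Lambda_t)_{t\in\reals_+}$.

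The only subtle point is the rewriting of the propagator, where one must verify that $P_t \circ e^{tX}$ is genuinely invertible on its image (so $\Lambda_{s}^{-1}$ exists and commutes past as above); this is immediate since $P_t$ is unitary conjugation and $e^{tX}$ is the exponential of a bounded operator. Once that is granted, everything else is a direct application of Theorem \ref{thm:LGKSsemigroup} and the mapping cone closure of $\ebe{\matrd}$ under CP pre- and post-composition.
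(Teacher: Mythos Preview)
Your argument is correct and, in fact, slightly more streamlined than the paper's own proof. The paper establishes the \emph{eventually EB} part by invoking Theorem \ref{thm:eEBSufficientCondition}: it identifies the time-periodic asymptotic state $\omega(t) = P_t(\omega)$, checks that $\proj{\omega(t)}$ lies in $\operatorname{Int}\ebe{\matrd}$ for all $t$ (since $\omega(t)$ is unitarily similar to $\omega>0$), and concludes that $\Lambda_t$ enters $\ebe{\matrd}$ in finite time. Only for the \emph{eEB-divisibility} part does the paper switch to essentially your strategy, writing $V_{t,s}=P_t\circ e^{(t-s)X}\circ P_s^{-1}$ and appealing to the mapping cone property after noting (via Theorems \ref{thm:LGKSsemigroup} and \ref{thm:SemigroupEB}) that the semigroup $(e^{tX})_t$ is itself eEB-divisible.

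Your route bypasses the asymptotic machinery entirely: by factoring $e^{tX}=U_t\circ e^{tK}$ through the commuting decomposition, you apply Theorem \ref{thm:LGKSsemigroup} directly to $K$ (whose kernel and faithful stationary state are exactly what is assumed) and then sandwich with CP maps on both sides. This is more elementary and also avoids a tacit step in the paper's argument, namely that $\omega$ is the unique stationary state of $X$ rather than just of $K$. What the paper's approach buys in return is the explicit identification of the periodic limit cycle $\omega(t)$, which is physically informative but not logically needed for the claim.
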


\begin{proof}
    Since clearly $\omega > 0$ is a stationary state of a semigroup $(e^{tX})_{t\in\reals_+}$, we have $\Lambda_t \xrightarrow{a} \mathsf{Z}_t$, where $\mathsf{Z}_t = P_t \circ\proj{\omega}$ is periodic with period $T$ and $\proj{\omega}\in\operatorname{Int}{\ebe{\matrd}}$ by Lemma \ref{lemma:PomegaProperties}. One easily checks $\mathsf{Z}_t$ is a projection onto periodic state $\omega(t) = P_t (\omega)$,
    \begin{equation}
        \mathsf{Z}_t = \proj{\omega(t)} = (\tr{\cdot})\, \omega(t).
    \end{equation}
    Its Choi's matrix is $\choi{\proj{\omega(t)}} = I \otimes \omega(t)$. Since $\omega(t)$ and $\omega$ are related by similarity transformation, $\omega > 0$ iff $\omega(t) > 0$; therefore $\{\mathsf{Z}_t : t\in[0,T)\}$ also lays in a strict interior of $\ebe{\matrd}$. The family \eqref{eq:LtPeriodiWCLLambda} is then eventually EB by Theorem \ref{thm:eEBSufficientCondition}. This is equivalent to the fact that the ODE governed by such periodic $L_t$ admits a periodic limit cycle which is simply $\omega(t)$; all trajectories $\rho_t = \Lambda_t (\rho_0)$, $\rho_0 \in\matrd$, asymptotically tend to this cycle. Eventual EB-divisibility is then straightforward: by Theorem \ref{thm:SemigroupEB}, semigroup $(e^{tX})_{t\in\reals_+}$ is automatically eEB-divisible and the propagator $V_{t,s} = \Lambda_t \circ\Lambda_{s}^{-1}$, acting via
    \begin{equation}
        V_{t,s}(\rho) = P_t \circ e^{(t-s)X}\circ P_{s}^{-1}(\rho) = p_t e^{(t-s)X}(p_{s}^{\hadj}\rho p_s)p_{t}^{\hadj},
    \end{equation}
    also becomes EB as it differs from eEB-divisible semigroup only by additional compositions with completely positive maps.
\end{proof}

It is worth to remark that Theorem \ref{thm:LtPeriodicWCL} may be easily generalized to the case of \emph{quasiperiodic} Davies generators under additional assumption of Lyapunov-Perron reducibility of underlying Schr\"{o}dinger equation as introduced in \cite{Szczygielski2020}.

\section{Summary and open problems}
\label{sec:Summary}

We were able to show that a large class of quantum evolution families exhibits a tendency of becoming entanglement breaking or approaching the set of entanglement breaking maps. Those include some prominent cases of CP-divisible dynamics such as quantum dynamical semigroups, given positive definiteness of their respective stationary states. We also proposed a new notion of eventual divisibility, which may possibly find some applications in description of various systems which exhibit certain asymptotic behavior. Albeit we were able to prove a general asymptotic results in some simplified cases based on spectral properties of generators, we conjecture that the observation applies to much broader class of families. Possible further research directions include, but are not limited to, the following:

\begin{enumerate}
    \item Exploring asymptotic behavior of families governed by either non-commuting time-dependent generators in GKLS form or even by integro-differential Master Equations with non-trivial memory kernels (such as strongly non-Markovian ones).
    \item Investigating connections between various forms of eventual behavior of evolution families and various forms of eventual divisibility. Those include finding and exploring some interpolating examples of families which are, for instance, eventually PPT but not EB or ePPT-divisible but not eEB-divisible and so on.
    \item Characterizing certain forms of eventual divisibility and eventual behavior of evolution families by means of mathematical structure of generators.
    \item Finally, a deeper understanding of connection between asymptotics, in entanglement breaking terms or not, and PPT\textsuperscript{2}-conjecture could be of interest for both mathematical physics and quantum information theory.
\end{enumerate}

\section*{Data availability statement}

No new data were created or analysed in this study.

\section*{Acknowledgments}

Authors are thankful to Frederik vom Ende for his comments on initial version of the manuscript and to anonymous Referees for suggestions. D.~Chru\'{s}ci\'{n}ski was supported by the Polish National Science Center project No. 2018/30/A/ST2/00837. K.~Szczygielski worked at the Nicolaus Copernicus University in Toru\'{n} at time when this research was conducted and the manuscript written.
\appendix

\section{Mathematical supplement}
\label{app:MathematicalSupplement}

\begin{lemma}\label{lemma:CJIisometry}
    Let $X \subset B(\matrd)$ be closed in supremum norm topology. Then, $\phi\in \operatorname{Int}{X}$ ($\phi\in\partial X$, resp.) iff $\choi{\phi}\in\operatorname{Int}{\choi{X}}$ ($\choi{\phi}\in\partial \choi{X}$, resp.) in spectral matrix norm topology.
\end{lemma}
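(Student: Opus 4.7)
The plan is to reduce the statement to the general topological fact that a linear bijection between finite-dimensional normed spaces is automatically a homeomorphism, and that homeomorphisms preserve both interior and boundary of arbitrary subsets. Once this reduction is made, no nontrivial work remains.

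First I would note that the Choi-Jamio{\l}kowski map $\mathcal{C}: B(\matrd) \to \matr{d^2}$ is linear (immediate from the defining formula $\choi{\phi} = \sum_{i,j} E_{ij} \otimes \phi(E_{ij})$) and bijective (this is the standard Choi-Jamio{\l}kowski isomorphism invoked earlier in the paper). Both $B(\matrd)$ and $\matr{d^2}$ are complex vector spaces of dimension $d^4$. Since all norms on a finite-dimensional vector space are equivalent, the map $\mathcal{C}$ is automatically bicontinuous regardless of which specific norms are placed on either side: there exist constants $c_1, c_2 > 0$ with
\begin{equation}
    c_1 \supnorm{\phi} \leqslant \hsnorm{\choi{\phi}} \leqslant c_2 \supnorm{\phi}
\end{equation}
for all $\phi \in B(\matrd)$, where on the right we may take the spectral norm on $\matr{d^2}$ in place of the Hilbert-Schmidt norm with suitably adjusted constants. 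Consequently $\mathcal{C}$ is a homeomorphism between $(B(\matrd), \supnorm{\cdot})$ and $(\matr{d^2}, \supnorm{\cdot})$.

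Next I would apply the elementary fact that any homeomorphism $h: Y \to Z$ satisfies $h(\operatorname{Int} A) = \operatorname{Int} h(A)$ and $h(\partial A) = \partial h(A)$ for every subset $A \subset Y$. Taking $h = \mathcal{C}$ and $A = X$ yields both equivalences of the lemma at once: $\phi \in \operatorname{Int} X$ iff $\choi{\phi} \in \operatorname{Int} \choi{X}$, and analogously for boundaries. The closedness assumption on $X$ is incidental to the topological argument itself, but ensures that $X$ equals its own closure so that $\partial X \subset X$ and the boundary statement is meaningful as an internal characterization. There is no real obstacle here — the only point worth emphasizing is that the transfer between the supremum-norm topology on the map side and the spectral-norm topology on the Choi-matrix side is free of charge thanks to finite-dimensionality.
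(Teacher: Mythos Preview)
Your proposal is correct and follows essentially the same route as the paper: the paper also reduces everything to the equivalence of norms in finite dimensions, first pushing the supremum norm forward through $\mathcal{C}$ to make it an isometry and then invoking norm equivalence to identify the resulting topology with the spectral-norm topology. Your version skips the intermediate isometry step and goes straight to ``linear bijection between finite-dimensional normed spaces is a homeomorphism,'' which is a perfectly equivalent packaging of the same idea.
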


\begin{proof}
    Choi-Jamiołkowski isomorphism $\mathcal{C}$ induces a norm $\|\cdot\|_\mathcal{C}$ on $\matrd\otimes\matrd$ defined by $\| \cdot \|_{\mathcal{C}} = \supnorm{\cdot}\circ\mathcal{C}^{-1}$ which makes it a bijective isometry from $B(\matrd)$ to $(\matrd\otimes\matrd, \| \cdot \|_\mathcal{C})$. The topology induced by $\mathcal{C}$ on $\matrd\otimes\matrd$ is then precisely the norm topology by equivalence of norms; thus, claim follows.
\end{proof}

\begin{lemma}\label{lemma:IntEB}
    Let $\phi\in B(\matrd)$. Then $\phi\in\operatorname{Int}{\ebe{\matrd}}$ ($\phi\in\partial\ebe{\matrd}$, resp.) if and only if $\choi{\phi}\in\operatorname{Int}{\matr{d^2}^{\mathrm{sep.}}}$ ($\choi{\phi}\in\partial\matr{d^2}^{\mathrm{sep.}}$, resp.), where $\matr{d^2}^{\mathrm{sep.}}$ is a closed, convex set of separable matrices in $\matr{d^2}$.
\end{lemma}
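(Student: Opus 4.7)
The plan is to recognize this as a direct consequence of Lemma \ref{lemma:CJIisometry} applied to the specific subset $X = \ebe{\matrd}$. Two ingredients are needed to invoke that lemma: first, that $\ebe{\matrd}$ is closed in the supremum norm topology, and second, that its image under the Choi--Jamio\l{}kowski isomorphism is exactly the set of separable matrices, $\choi{\ebe{\matrd}} = \matr{d^2}^{\mathrm{sep.}}$. Both are essentially built into the definitions already recalled in Section \ref{sec:PositiveMaps}.

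For closedness, I would simply cite the fact, noted earlier in the preliminaries, that $\ebe{\matrd}$ is a mapping cone; by definition mapping cones are closed convex cones. For the image identity, this is the very definition of an entanglement breaking map used in the paper: $\phi \in \ebe{\matrd}$ if and only if $\choi{\phi}$ is separable, i.e.~decomposes as $\sum_i A_i \otimes B_i$ with $A_i, B_i \in \matrd^+$. Hence the Choi--Jamio\l{}kowski map restricts to a bijection $\ebe{\matrd} \to \matr{d^2}^{\mathrm{sep.}}$.

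Once both items are in place, Lemma \ref{lemma:CJIisometry} applies verbatim: because $\mathcal{C}$ is a bijective isometry between $(B(\matrd), \supnorm{\cdot})$ and $(\matrd \otimes \matrd, \|\cdot\|_{\mathcal{C}})$, and because the norm $\|\cdot\|_{\mathcal{C}}$ is equivalent to the spectral norm on $\matr{d^2}$, the isomorphism preserves topological notions such as interior and boundary under any closed subset. Taking $X = \ebe{\matrd}$ then yields $\phi \in \operatorname{Int}{\ebe{\matrd}}$ iff $\choi{\phi} \in \operatorname{Int}{\matr{d^2}^{\mathrm{sep.}}}$ and $\phi \in \partial \ebe{\matrd}$ iff $\choi{\phi} \in \partial \matr{d^2}^{\mathrm{sep.}}$, which is exactly the statement of the lemma.

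There is no real obstacle here; the only mild subtlety is ensuring one uses norms that generate the same topology on both sides so that ``interior'' and ``boundary'' are unambiguous, but this is precisely what Lemma \ref{lemma:CJIisometry} establishes, since all norms on the finite-dimensional space $\matr{d^2}$ are equivalent and thus induce the same topology as the spectral matrix norm referenced in the statement.
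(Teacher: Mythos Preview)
Your proposal is correct and matches the paper's own proof essentially verbatim: the paper simply notes that the Choi--Jamio\l{}kowski isomorphism is a bijection between $\ebe{\matrd}$ and $\matr{d^2}^{\mathrm{sep.}}$ and then invokes Lemma~\ref{lemma:CJIisometry}. Your version is slightly more explicit in checking the closedness hypothesis of Lemma~\ref{lemma:CJIisometry}, but the argument is the same.
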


\begin{proof}
    Choi-Jamiołkowski isomorphism is a bijection between sets of entanglement breaking maps and separable matrices, so the claim follows directly from Lemma \ref{lemma:CJIisometry}.
\end{proof}

\begin{lemma}\label{lemma:IntSep}
    We have $I\otimes\omega \in \operatorname{Int}{\matr{d^2}^{\mathrm{sep.}}}$ iff $\omega > 0$.
\end{lemma}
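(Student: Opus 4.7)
The plan is to treat the two implications separately. The harder direction ($\omega > 0 \Rightarrow I\otimes\omega\in\operatorname{Int}{\matr{d^2}^{\mathrm{sep.}}}$) will be reduced, via an invertible local operation, to the classical fact that an open neighborhood of the bipartite maximally mixed state consists entirely of separable matrices.

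For the converse direction I would argue as follows. If $\omega$ has a strictly negative eigenvalue then $I\otimes\omega$ is not positive semidefinite and hence cannot lie in the separable cone at all, since $\matr{d^2}^{\mathrm{sep.}}\subset\matr{d^2}^+$. If instead $\omega\geqslant 0$ admits a zero eigenvalue with unit eigenvector $v$, then for any $\epsilon > 0$ the Hermitian perturbation $I\otimes\omega - \epsilon\,(I\otimes|v\rangle\langle v|)$ has $-\epsilon$ in its spectrum, and is therefore not PSD, let alone separable. Every neighborhood of $I\otimes\omega$ thus meets the complement of $\matr{d^2}^{\mathrm{sep.}}$, so $I\otimes\omega\notin\operatorname{Int}{\matr{d^2}^{\mathrm{sep.}}}$.

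For the forward direction, assume $\omega > 0$. The key observation is that the linear map $\Phi(X) = (I\otimes\omega^{1/2})\,X\,(I\otimes\omega^{1/2})$ is a homeomorphism on the space of $d^2\times d^2$ Hermitian matrices, with continuous inverse $\Phi^{-1}(Y) = (I\otimes\omega^{-1/2})\,Y\,(I\otimes\omega^{-1/2})$ which is well-defined precisely because $\omega > 0$. Both $\Phi$ and $\Phi^{-1}$ send a product $A\otimes B$ with $A,B\in\matrd^+$ to $A\otimes(\omega^{\pm 1/2}B\omega^{\pm 1/2})$, again a tensor product of PSD matrices, so by linearity they preserve separability in both directions. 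Consequently $\Phi$ restricts to a homeomorphism of $\matr{d^2}^{\mathrm{sep.}}$ onto itself, and in particular maps interior points to interior points. Since $\Phi(I\otimes I) = I\otimes\omega$, it suffices to show that $I\otimes I$ lies in the interior of $\matr{d^2}^{\mathrm{sep.}}$.

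The main obstacle is precisely this last step: the existence of an open neighborhood of the bipartite maximally mixed state inside the separable cone. This is the content of the classical Gurvits--Barnum theorem, which guarantees that every bipartite state within a certain Hilbert--Schmidt distance of $I\otimes I / d^2$ is separable; by rescaling and equivalence of norms on the finite-dimensional space of Hermitian matrices, this translates into an open ball around $I\otimes I$ in the spectral matrix norm consisting entirely of separable matrices, which together with the congruence argument above completes the proof.
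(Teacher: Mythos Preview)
Your proof is correct. For the converse you perturb a zero eigenvalue into a negative one to leave the PSD cone; the paper does essentially the same.

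For the forward direction your route differs from the paper's. The paper argues by exclusion: having shown that $0\in\spec{\omega}$ forces $I\otimes\omega\in\partial\matr{d^2}^{\mathrm{sep.}}$, it asserts that ``therefore, when $\omega>0$, matrix $I\otimes\omega$ either lays inside the set, or completely outside'' and concludes interior since $I\otimes\omega$ is manifestly separable. Your argument is instead constructive: the congruence $X\mapsto(I\otimes\omega^{1/2})X(I\otimes\omega^{1/2})$ is a self-homeomorphism of $\matr{d^2}^{\mathrm{sep.}}$ (within the Hermitian matrices) carrying $I\otimes I$ to $I\otimes\omega$, and the identity is an interior point by the Gurvits--Barnum separable-ball theorem. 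This costs you an external citation but buys you a fully rigorous direct argument; in fact it repairs a gap in the paper's reasoning, since from ``$0\in\spec{\omega}\Rightarrow$ boundary'' one cannot deduce ``$\omega>0\Rightarrow$ not boundary'', which is what the paper's ``therefore'' tacitly uses.
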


\begin{proof}
    First, let us assume $I\otimes\omega$ lays inside $\matr{d^2}^{\mathrm{sep.}}$, that is there exists an open ball $\mathcal{B}(I\otimes\omega, r)$ of some radius $r>0$, contained in $\matr{d^2}^{\mathrm{sep.}}$. Matrix $I\otimes\omega$, being a positive semi-definite, admits a factorization
    \begin{equation}
        I\otimes\omega = (I \otimes U)(I\otimes D)(I\otimes U^\hadj)
    \end{equation}
    for unitary $U$ and diagonal $D = \operatorname{diag}{\{\lambda_i\}_{i=1}^{d}}$ where all $\lambda_i \geqslant 0$. By way of contradiction, assume $\omega$ is not positive definite, i.e.~that it has at least one $0$ eigenvalue, say $\lambda_1 = 0$ (if $\omega$ has a negative eigenvalue or is non-Hermitian, $I\otimes\omega$ automatically is not positive semi-definite and not separable). Let us define $M = I\otimes UD_0U^\hadj$ where
    \begin{equation}
        D_0 = \operatorname{diag}{\{-\lambda, \, \lambda_2, \, ... \, , \, \lambda_{d}\}},
    \end{equation}
    where $\lambda\in (0,r)$ is arbitrary. Then, we see
    \begin{equation}
        \supnorm{I\otimes\omega - M} = \supnorm{D - D_0} = \lambda < r,
    \end{equation}
    so $M\in \mathcal{B}(I\otimes\omega, r)$. However, clearly $I\otimes M$ is not positive semi-definite, thus not separable. This means that there exists no open ball fully contained in $\matr{d^2}^{\mathrm{sep.}}$ when $0\in\spec{\omega}$, i.e.~$I\otimes\omega$ is not an interior point, a contradiction; therefore $\omega$ has to be positive definite. For the opposite, take $\omega > 0$, i.e.~all $\lambda_i > 0$ and notice that the above reasoning actually shows that when $0\in\spec{\omega}$, matrix $I\otimes\omega$ lays on the boundary of $\matr{d^2}^{\mathrm{sep.}}$. Indeed, let again $\lambda_1 = 0$ and define $N = UE_0 U^\hadj$ for
    \begin{equation}
        E_0 = \operatorname{diag}{\{\lambda, \, \lambda_2, \, ... \, , \, \lambda_{d}\}},
    \end{equation}
    and $M = UD_0 U^\hadj$ as earlier. Then, for $\lambda\in(0,r)$, we have $\supnorm{I\otimes\omega - M} < r$ and $\supnorm{I\otimes\omega - N} < r$ as can be checked, so both $M,N$ lay inside open ball $\mathcal{B}(I\otimes\omega,r)$ for all $r > 0$. By construction, $I\otimes N$ is separable while $I\otimes M$ is not; this shows $I\otimes\omega$ lays on the boundary whenever $\omega$ has a zero eigenvalue. Therefore, when $\omega > 0$, matrix $I\otimes\omega$ either lays inside the set, or completely outside. The latter would however imply $I\otimes\omega$ is not separable, which is absurd; hence, $I\otimes\omega$ must lay inside $\matr{d^2}^{\mathrm{sep.}}$ and the proof is complete.
\end{proof}

\begin{lemma}\label{lemma:PomegaProperties}
    Let $\omega\in\matrd$ satisfy $\tr{\omega} = 1$. Define a map $\proj{\omega}:\matrd\to\complexes\omega$ by
    \begin{equation}
        \proj{\omega}(a) = (\tr{a})\,\omega .
    \end{equation}
    Then, the following statements hold:
    \begin{enumerate}
        \item\label{lemma:PomegaProperties:PointRay} $\proj{\omega}$ is a rank-one projection,
        \item\label{lemma:PomegaProperties:PointPPT} $\proj{\omega}\in\ebe{\matrd}$ iff $\omega\in\matrd^+$,
        \item\label{lemma:PomegaProperties:PointInterior} $\proj{\omega}\in\operatorname{Int}{\ebe{\matrd}}$ iff $\omega > 0$,
        \item\label{lemma:PomegaProperties:PointBoundary} otherwise, if $\omega\in\matrd^+$ but is not strictly positive definite, $\proj{\omega}$ lays in the intersection of $\partial \ebe{\matrd}$, $\partial\cpe{\matrd}$ and $\partial\cocpe{\matrd}$.
    \end{enumerate}
\end{lemma}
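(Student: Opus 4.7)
The plan is to reduce everything to the Choi matrix of $\proj{\omega}$, which is especially simple in this case. A direct computation gives
\begin{equation*}
\choi{\proj{\omega}} = \sum_{i,j=1}^{d} E_{ij}\otimes(\tr{E_{ij}})\,\omega = \sum_{i=1}^{d}E_{ii}\otimes\omega = I\otimes\omega.
\end{equation*}
For part \ref{lemma:PomegaProperties:PointRay} I would verify the projection identity $\proj{\omega}^2 = \proj{\omega}$ by noting
$\proj{\omega}(\proj{\omega}(a)) = (\tr{(\tr{a})\omega})\,\omega = (\tr{a})(\tr{\omega})\,\omega = (\tr{a})\,\omega$
using the normalization $\tr{\omega}=1$; since the range is the one-dimensional line $\complexes\omega$, the map has rank one.

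For part \ref{lemma:PomegaProperties:PointPPT} the $(\Leftarrow)$ direction is immediate from the displayed formula: if $\omega\in\matrd^+$, the expression $I\otimes\omega = \sum_i E_{ii}\otimes\omega$ exhibits $\choi{\proj{\omega}}$ as a sum of tensor products of positive semi-definite matrices, hence separable, hence $\proj{\omega}\in\ebe{\matrd}$. Conversely, EB requires the Choi matrix to be at least positive semi-definite, and positivity of $I\otimes\omega$ forces $\omega\in\matrd^+$ (since the spectrum of $I\otimes\omega$ is just the spectrum of $\omega$, each eigenvalue repeated $d$ times).

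Parts \ref{lemma:PomegaProperties:PointInterior} and \ref{lemma:PomegaProperties:PointBoundary} are then almost bookkeeping given the preceding Lemmas \ref{lemma:IntEB} and \ref{lemma:IntSep}. By Lemma \ref{lemma:IntEB}, $\proj{\omega}\in\operatorname{Int}{\ebe{\matrd}}$ if and only if $\choi{\proj{\omega}} = I\otimes\omega$ lies in the interior of the separable cone; Lemma \ref{lemma:IntSep} identifies this condition exactly with $\omega>0$, giving part \ref{lemma:PomegaProperties:PointInterior}. For part \ref{lemma:PomegaProperties:PointBoundary}, suppose $\omega\in\matrd^+$ but has a zero eigenvalue. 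Lemma \ref{lemma:IntSep} (or rather its proof, which explicitly places $I\otimes\omega$ on $\partial\matr{d^2}^{\mathrm{sep.}}$ in this case) together with Lemma \ref{lemma:IntEB} gives $\proj{\omega}\in\partial\ebe{\matrd}$. For membership in $\partial\cpe{\matrd}$ and $\partial\cocpe{\matrd}$, note that $\choi{\proj{\omega}} = I\otimes\omega$ is positive semi-definite but singular, hence on the boundary of $\matr{d^2}^+$; transporting back through the Choi--Jamio{\l}kowski isometry of Lemma \ref{lemma:CJIisometry} yields $\proj{\omega}\in\partial\cpe{\matrd}$. The same argument applied to $\choi{\proj{\omega}}^{\ptrans} = I\otimes\omega^\transpose$, whose spectrum coincides with that of $\omega$, places $\proj{\omega}$ on $\partial\cocpe{\matrd}$.

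The only mildly delicate step is the boundary identification in part \ref{lemma:PomegaProperties:PointBoundary}, because it requires both that $\proj{\omega}$ belongs to the cones in question (clear, as a singular positive semi-definite Choi matrix is still positive semi-definite) and that it fails to be interior (clear from singularity, since $\operatorname{Int}{\matr{d^2}^+}$ consists precisely of the positive definite matrices). Everything else is a direct translation through the Choi--Jamio{\l}kowski isomorphism, so I do not anticipate a genuine obstacle.
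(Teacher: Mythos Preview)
Your proposal is correct and follows essentially the same route as the paper: compute $\choi{\proj{\omega}} = I\otimes\omega$, verify idempotence and rank directly, then invoke Lemmas~\ref{lemma:CJIisometry}, \ref{lemma:IntEB}, and \ref{lemma:IntSep} to translate the interior and boundary conditions from the Choi matrix back to the map. Your treatment is slightly more explicit (you spell out the separable decomposition and the spectrum argument for part~\ref{lemma:PomegaProperties:PointPPT}, and name Lemma~\ref{lemma:CJIisometry} for the boundary transport in part~\ref{lemma:PomegaProperties:PointBoundary}), but the structure and the key ideas are identical to the paper's proof.
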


\begin{proof}
    Readily $\operatorname{Im}{\proj{\omega}} = \{z \omega : z\in\complexes\}$ is of dimension 1 and checking $\proj{\omega}$ is idempotent is straightforward; hence statement \ref{lemma:PomegaProperties:PointRay} follows. Statement \ref{lemma:PomegaProperties:PointPPT} is immediate since Choi's matrix $\choi{\proj{\omega}} = I\otimes\omega$ is separable iff $\omega \geqslant 0$. Then, it lays inside set of separable matrices iff $\omega > 0$ by Lemma \ref{lemma:IntSep} and Lemma \ref{lemma:IntEB} yields statement \ref{lemma:PomegaProperties:PointInterior}. Finally, statement \ref{lemma:PomegaProperties:PointBoundary} is a direct consequence of all previous ones. Readily, when $0\in\spec{\omega}$, Lemma \ref{lemma:IntEB} yields $\proj{\omega}$ lays on the boundary of $\ebe{\matrd}$. Choi matrices $\choi{\proj{\omega}} = I\otimes\omega$ and $\choi{\proj{\omega}}^{\ptrans} = I\otimes\omega^\transpose$ are mutually positive semi-definite iff $\omega \geqslant 0$, so $\proj{\omega}$ is CP iff it is coCP. Then it is easy to see that when $0\in\spec{\omega}$, matrices $I\otimes\omega$ and $I\otimes\omega^\transpose$ lay on the boundary of $(\matrd\otimes\matrd)^+$ so $\proj{\omega}$ lays on boundaries of both $\cpe{\matrd}$ and $\cocpe{\matrd}$.
\end{proof}

\begin{lemma}\label{lemma:Halfline}
    For every non-empty interval $[a,b] \subset (0,\infty)$ there exists a half-line $[x_0,\infty)$, $x_0>0$, s.t.~a family $\{[na, nb] : n>n_0\}$ is its covering for $n_0 \in\naturals$ large enough.
\end{lemma}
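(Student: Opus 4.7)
The plan is to show that once $n$ is large enough, consecutive intervals $[na,nb]$ and $[(n+1)a,(n+1)b]$ overlap, so that the union of all of them from some index onward is a full half-line with no gaps.

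First, I would exclude the degenerate case $a=b$: if the lemma is used in the paper only for non-degenerate closed intervals (as is the case, since $\mathcal{I}_1$ has nonempty interior by construction), we have $a<b$. Set $\delta = b - a > 0$. I would then look for the condition that $[na,nb]$ and $[(n+1)a,(n+1)b]$ intersect. This happens precisely when
\begin{equation}
    (n+1)a \leqslant nb,
\end{equation}
equivalently $n\delta \geqslant a$, i.e.~$n \geqslant a/\delta$. So I would pick $n_0 \in \naturals$ with $n_0 > a/\delta$; then for every $n \geqslant n_0$ the right endpoint $nb$ of one interval is at least the left endpoint $(n+1)a$ of the next.

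Having arranged this, the key step is to observe that the union $\bigcup_{n \geqslant n_0}[na,nb]$ is an unbounded interval. Indeed, by the overlap condition the partial union $\bigcup_{n_0 \leqslant n \leqslant N}[na,nb]$ is the single interval $[n_0 a, Nb]$, which one verifies by induction on $N$: the base case $N = n_0$ is immediate, and the inductive step is exactly $(N+1)a \leqslant Nb$, ensuring $[n_0 a, Nb] \cup [(N+1)a, (N+1)b] = [n_0 a, (N+1)b]$. Letting $N \to \infty$ yields $\bigcup_{n \geqslant n_0}[na,nb] = [n_0 a,\infty)$.

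Finally, I would take $x_0 := n_0 a > 0$, so the family $\{[na,nb] : n \geqslant n_0\}$ covers $[x_0,\infty)$ as required. There is no real obstacle here; the only care needed is to pick the threshold $n_0$ strictly above $a/\delta$ so that the overlaps actually occur, and to keep $x_0 > 0$, which is automatic since $a,n_0 > 0$.
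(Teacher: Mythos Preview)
Your proof is correct and follows essentially the same idea as the paper's: both hinge on the observation that the gap between $[na,nb]$ and $[(n+1)a,(n+1)b]$ disappears once $n \geqslant a/(b-a)$. The paper phrases this contrapositively (bounding the possible locations of points \emph{not} in the union by $x_0 = \lceil a/(b-a)\rceil\, a$), whereas you argue directly by induction that the partial unions form a single growing interval; the two arguments are equivalent and yield the same threshold.
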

\begin{proof}
Denote $\mathcal{I}_n = [na,nb]$. Le us assume that there is some $x \in (0,\infty)$ which does not belong to the union $\mathcal{U} = \bigcup_{n\in\naturals}\mathcal{I}_n$, that is $x\in (0,\infty)\setminus\mathcal{U}$. This means there exists some index $k$ s.t.~$x$ lays between intervals $\mathcal{I}_k$ and $\mathcal{I}_{k+1}$, i.e.~in open interval $(kb,(k+1)a)$. This interval however itself must be non-empty, that is $kb < (k+1)a$ which is possible iff $k \in [1, \frac{a}{b-a}]\cap\naturals$ as can be easily checked. This means that there exists only a finite set of possible indices $k$ which guarantee existence of such $x$; one checks $k \leqslant \big\lceil \frac{a}{b-a} \big\rceil -1$. This however means that $x < (k+1)a \leqslant \big\lceil \frac{a}{b-a} \big\rceil a$ and so possible values of such elements $x$ are upper bounded by $x_0 = \big\lceil \frac{a}{b-a} \big\rceil a$. By contraposition, for $x \geqslant x_0$ we have $x \in \mathcal{U}$, i.e.~$\mathcal{U}$ covers the half-line $[x_0,\infty )$.
\end{proof}

\begin{lemma}\label{lemma:ProbVector}
    Let $\vec{p}\in\reals_{+}^{n}$ be a probability vector, i.e.~$p_i \in [0,1]$, $\|\vec{p}\|_1 = \sum_{i=1}^{n}p_i = 1$. Then, a function
    \begin{equation}
        f(\vec{p}) = \min_{i<j}{p_i p_j}
    \end{equation}
    attains its maximum value $n^{-2}$ at uniform probability distribution.
\end{lemma}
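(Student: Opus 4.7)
The plan is to reduce $f(\vec{p})$ to a product of the two smallest components of $\vec p$ and then bound that product by $1/n^2$ via an elementary optimization.

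First I would observe that, writing $p_{(1)} \leqslant p_{(2)} \leqslant \ldots \leqslant p_{(n)}$ for the sorted entries, the minimum in the definition is simply
\begin{equation}
    f(\vec{p}) = \min_{i<j} p_i p_j = p_{(1)}\, p_{(2)},
\end{equation}
since $p_{(1)} p_{(2)}$ is the smallest element of the collection $\{p_i p_j : i < j\}$ of non-negative reals (any other pair replaces one of the two smallest factors by something at least as large). At the uniform distribution $p_i = 1/n$ the value $f(\vec{p}) = 1/n^2$ is attained, so it only remains to prove this is an upper bound.

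Next, I would derive two elementary constraints on the two smallest entries. Since the $n$ entries sum to $1$ and $p_{(1)}$ is the smallest, clearly $p_{(1)} \leqslant 1/n$. Also, from $p_{(2)} \leqslant p_{(k)}$ for $k \geqslant 2$, we have $(n-1)p_{(2)} \leqslant \sum_{k \geqslant 2} p_{(k)} = 1 - p_{(1)}$, hence
\begin{equation}
    p_{(2)} \leqslant \frac{1 - p_{(1)}}{n-1}.
\end{equation}
Combining,
\begin{equation}
    f(\vec{p}) = p_{(1)} p_{(2)} \leqslant \frac{p_{(1)}(1-p_{(1)})}{n-1} =: g(p_{(1)}).
\end{equation}
Now $g(x) = x(1-x)/(n-1)$ has derivative $(1-2x)/(n-1) > 0$ on $[0, 1/n]$ for $n \geqslant 2$, so $g$ is increasing on this interval and attains its maximum at $x = 1/n$, giving
\begin{equation}
    g(1/n) = \frac{(1/n)(1-1/n)}{n-1} = \frac{1}{n^2}.
\end{equation}

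Finally, equality $f(\vec{p}) = 1/n^2$ forces $p_{(1)} = 1/n$ and simultaneously equality in $(n-1)p_{(2)} \leqslant 1 - p_{(1)}$, i.e.~$p_{(2)} = p_{(3)} = \ldots = p_{(n)} = 1/n$; thus the maximum is attained uniquely at the uniform distribution. There is no significant obstacle here — the only mild subtlety is making sure the two sorted quantities $p_{(1)}, p_{(2)}$ (not just $p_{(1)}^2$ or $p_{(1)} \cdot 1/n$) are bounded sharply, which is handled by the averaging step $(n-1)p_{(2)} \leqslant 1 - p_{(1)}$.
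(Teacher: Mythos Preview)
Your proof is correct and in fact cleaner than the paper's. Both arguments begin by noting that $f(\vec{p}) = p_{(1)}p_{(2)}$, the product of the two smallest entries. From there the paper proceeds by induction on $n$: it peels off the largest entry $p_{(n+1)}$, renormalizes the remaining $n$-vector $\tilde{\vec r} = \vec r / (1-p_{(n+1)})$, applies the inductive hypothesis $f(\tilde{\vec r}) \leqslant 1/n^2$, and then uses $p_{(n+1)} > 1/(n+1)$ to close the estimate. You instead give a direct one-variable optimization: the averaging bound $(n-1)p_{(2)} \leqslant 1 - p_{(1)}$ reduces everything to maximizing $g(x) = x(1-x)/(n-1)$ over $x \in [0,1/n]$, which is immediate. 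Your route is shorter, avoids induction entirely, and also yields the uniqueness of the maximizer with no extra work; the paper's induction is a bit more roundabout but reaches the same conclusion.
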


\begin{proof}
    Let $\vec{p}_0 = \left(\frac{1}{n}, ... ,\frac{1}{n}\right)$ denote the uniform probability vector. Clearly, $f(\vec{p}_0) = n^{-2}$ and we will show inductively that $f(\vec{p})\leqslant f(\vec{p}_0)$. First, check that for $n=2$,
    \begin{equation}
        f(\vec{p}) = p_1 p_2 = p_1(1-p_1) \leqslant \frac{1}{4},
    \end{equation}
    where equality holds only for $p_1 = \frac{1}{2}$ so the base case is trivially true. Second, consider $\vec{p}\in\reals_{+}^{n+1}$ and arrange its components in non-decreasing order so that $p_1 \leqslant p_2 \leqslant ... \leqslant p_{n+1}$ and therefore
    \begin{equation}
        f(\vec{p}) = p_1 p_2 .
    \end{equation}
    Now, if $\vec{p}$ is strictly distinct from uniform distribution, the normalization condition $\|\vec{p}\|_1 = 1$ yields that necessarily $p_1 < \frac{1}{n+1}$ and $p_{n+1} > \frac{1}{n+1}$ (with possibly more components $p_i$ being different from $\frac{1}{n+1}$). We can always write $\vec{p}$ as $\vec{p} = (\vec{r},p_{n+1})$ where $\vec{r} = (p_i) \in \reals_{+}^{n}$ and $\|\vec{r}\|_1 = 1 - p_{n+1}$. For such $\vec{r}$ define
    \begin{equation}
        \tilde{\vec{r}} = \frac{\vec{r}}{1-p_{n+1}},
    \end{equation}
    so that $\|\tilde{\vec{r}}\|_1 = 1$. By induction hypothesis,
    \begin{equation}
        f(\tilde{\vec{r}}) = \frac{p_1 p_2}{(1-p_{n+1})^2} \leqslant \frac{1}{n^2}
    \end{equation}
    (note that it may still happen that $\tilde{\vec{r}}$ is a uniform distribution); this yields
    \begin{equation}
        f(\vec{p}) = p_1 p_2 = (1-p_{n+1})^2 f(\tilde{\vec{r}}) \leqslant \left( \frac{1-p_{n+1}}{n}\right)^2 .
    \end{equation}
    However, as $p_{n+1} > \frac{1}{n+1}$, simple algebra shows that
    \begin{equation}
        \frac{1-p_{n+1}}{n} < \frac{1}{n+1}
    \end{equation}
    and so $p_1 p_2 < (n+1)^{-2}$ whenever $\vec{p} \neq \left( \frac{1}{n+1}, ... ,\frac{1}{n+1} \right)$ and the proof is finished.
\end{proof}

\bibliographystyle{unsrt}
\bibliography{EB_Bibliography}

\end{document}